\crefname{subsection}{Subsection}{Subsections}
\crefname{section}{Section}{Sections}
\crefname{figure}{Figure}{Figures}
\crefname{equation}{Equation}{Equations}
\crefname{algorithm}{Algorithm}{Algorithms}
\crefname{algocf}{Algorithm}{Algorithms}
 \theoremstyle{plain}
 \newtheorem{thm}{\protect\theoremname}[section]
 \theoremstyle{plain}
 \theoremstyle{plain}
 \newtheorem{property}{Property}
 \theoremstyle{remark}
 \theoremstyle{plain}
 \newtheorem{prop}[thm]{\protect\propositionname}
 \newenvironment{proof}[1][\protect\proofname]{\par
 	\normalfont\topsep6\p@\@plus6\p@\relax
 	\trivlist
 	\itemindent\parindent
 	\item[\hskip\labelsep\scshape #1]\ignorespaces
 }{%
 	\endtrivlist\@endpefalse
 }
 \providecommand{\proofname}{Proof}
 \theoremstyle{plain}
 \newtheorem{lem}[thm]{\protect\lemmaname}
 \theoremstyle{plain}
 \theoremstyle{plain}
 \newtheorem{cor}[thm]{\protect\corollaryname}
 \theoremstyle{definition}
 \newtheorem{defn}[thm]{\protect\definitionname}
\providecommand{\Observationname}{Observation}
\providecommand{\corollaryname}{Corollary}
\providecommand{\definitionname}{Definition}
\providecommand{\lemmaname}{Lemma}
\providecommand{\propositionname}{Proposition}
\providecommand{\remarkname}{Remark}
\providecommand{\theoremname}{Theorem}
	\global\long\def\ALG{\text{\sc ALG}\xspace}
	\global\long\def\OPT{\text{\sc OPT}\xspace}
	\global\long\def\ST{\text{\sc st}\xspace}
	\global\long\def\SF{\text{\sc sf}\xspace}
	\global\long\def\FL{\text{\sc fl}\xspace}
	\global\long\def\ON{\text{\sc ON}\xspace}
	\global\long\def\OST{\text{\sc ON}^{\ST}\xspace}
	\global\long\def\OSF{\text{\sc ON}^{\SF}\xspace}
	\global\long\def\OFL{\text{\sc ON}^{\FL}\xspace}
	\global\long\def\PC{\text{\sc PC}\xspace}
	\global\long\def\PCST{\text{\sc PC}^{\ST}\xspace}
	\global\long\def\PCSF{\text{\sc PC}^{\SF}\xspace}
	\global\long\def\PCFL{\text{\sc PC}^{\FL}\xspace}
	\global\long\def\ADV{\text{\sc ADV}\xspace}
	\global\long\def\S{\mathcal{S}}
\newcommand{\eat}[1]{}
\newcommand{\alg}{\text{\sc alg}\xspace}
\newcommand{\opt}{\text{\sc opt}\xspace}
\newcommand{\mst}{{\sc mst}\xspace}
\newcommand{\err}{\lambda}
\newcommand{\amort}{\alpha}
\newcommand{\npry}{b}
\newcommand{\pry}[1]{\alpha_{#1}}
\newcommand{\reqs}{R}
\newcommand{\mreqs}[1][]{T^{#1}}
\newcommand{\req}{r}
\newcommand{\preds}{\hat{R}}
\newcommand{\mpreds}[1][]{\hat{T}^{#1}}
\newcommand{\pred}{\hat{r}}
\newcommand{\pr}[1]{\left(#1\right)}
\newcommand{\pc}[1]{\left\{#1\right\}}
\newcommand{\pa}[1]{\left|#1\right|}
\newcommand{\ceil}[1]{\left\lceil#1\right\rceil}
\newcommand{\floor}[1]{\left\lfloor#1\right\rfloor}
\LinesNumbered \RestyleAlgo{boxruled}
\date{}
\newcommand{\algorithmfootnote}[2][\footnotesize]{%
	\let\old@algocf@finish\@algocf@finish
	\def\@algocf@finish{\old@algocf@finish
		\leavevmode\rlap{\begin{minipage}{\linewidth}
				#1#2
		\end{minipage}}%
	}%
}
\begin{document}

\title{Online Graph Algorithms with Predictions}
\date{}

\author{
Yossi Azar\thanks{Supported in part by the Israel Science Foundation (grant No. 2304/20).}\\Tel Aviv University \and
Debmalya Panigrahi\thanks{Supported in part by an NSF grants CCF-1750140 (CAREER Award) and CCF-1955703, and ARO grant W911NF2110230.}\\Duke University \and
Noam Touitou\\Tel Aviv University
}
\maketitle

\begin{abstract}
Online algorithms with predictions is a popular and elegant 
framework for bypassing pessimistic lower bounds in competitive analysis. 
In this model, online algorithms are supplied with future {\em predictions}
and the goal is for the competitive ratio to smoothly interpolate between 
the best offline and online bounds as a function of the prediction error. 

In this paper, we study online graph problems with predictions. 
Our contributions are the following:
\begin{itemize}
    \item The first question is defining prediction error. For graph/metric 
    problems, there can be two types of error, locations that are not 
    predicted, and locations that are predicted but the predicted 
    and actual locations do not coincide exactly. We design a novel 
    definition of prediction error called {\em metric error with outliers}
    to simultaneously capture both types of errors, which thereby generalizes
    previous definitions of error that only capture one of the two error types. 
    \item We give a general framework for obtaining online algorithms with
    predictions that combines, in a ``black box'' fashion, existing online 
    and offline algorithms, under certain technical conditions.
    To the best of our knowledge, this is the first general-purpose tool
    for obtaining online algorithms with predictions.
    \item Using our framework, we obtain tight bounds on the competitive 
    ratio of several classical graph problems as a function of metric error
    with outliers: Steiner tree, Steiner forest, priority Steiner tree/forest, 
    and uncapacitated/capacitated facility location.
\end{itemize}
Both the definition of metric error with outliers and the general framework 
for combining offline and online algorithms are not specific to the problems
that we consider in this paper. We hope that these will be useful for future 
work on other problems in this domain.

\end{abstract}

	
\section{Introduction}
\label{sec:introduction}
Online algorithms has been the paradigm of choice
in algorithms research for handling future uncertainty 
in input data. But, they have often been criticized 
for being overly pessimistic, and their failure to 
distinguish between algorithms with widely different 
empirical performance in many contexts is well documented. 
To overcome this deficiency, 
recent research has proposed augmenting online algorithms
with future {\em predictions} (e.g., from machine learning models),
the goal being to design algorithms that gracefully 
degrade from the best offline bounds for accurate 
predictions to the best online bounds for arbitrary
predictions. Indeed, a robust literature is beginning to 
emerge in this area with applications to 
caching~\cite{lykouris2018competitive,DBLP:conf/soda/Rohatgi20,JiangPS20,DBLP:conf/approx/Wei20},
rent or buy~\cite{purohit2018improving,GollapudiP19,AnandGP20},
scheduling~\cite{purohit2018improving,LattanziLMV20,Mitzenmacher20, DBLP:conf/spaa/Im0QP21}, 
online learning~\cite{DBLP:conf/nips/DekelFHJ17, DBLP:conf/icml/BhaskaraC0P20},
covering problems~\cite{DBLP:conf/nips/BamasMS20},
frequency estimation~\cite{hsu2018learningbased}, 
low rank approximation~\cite{IndykVY19},
metrical task systems~\cite{AntoniadisCEPS20},
auction pricing~\cite{MedinaV17}, 
budgeted allocation~\cite{MirrokniGZ12},
Bloom filters~\cite{mitzenmacher2018model}, etc.
In this paper, we ask: {\em how do we design online graph algorithms
with predictions?} 

For example, consider a cable company that needs to set up a network that can serve clients in a certain area. They can obtain a rough prediction of their future client locations via, e.g., market surveys. This prediction is available upfront, but the actual clients join the service over time, and the network has to be built up as the clients join. 
A natural goal for the cable company is to use the predictions to mitigate uncertainty, and design an algorithm that degrades gracefully with prediction error. 

Another example is a video conferencing platform that has to allocate bandwidth for client meetings. The platform can make predictions on client locations for meetings based on past data (e.g., just their last known location, or by using a more sophisticated machine learning model), but the actual set of users connect to the meeting online. Of course, the prediction could be wrong in terms of the users who actually join the meeting, or even for a correctly predicted user, might get her precise location wrong. Again, the goal would be to smoothly degrade in performance with these types of prediction error.

\subsection{Our Contributions}

{\bf Metric Error with Outliers.}
The first question is how to model prediction error. 
For concreteness, think of the example above, i.e., making predictions on a geographical map; on such a metric space/graph, there are, conceptually, two types of errors.
\begin{itemize}
    \item Some predictions might be entirely wrong; for such cases, the appropriate notion of error is the number of false predictions, or the number of unpredicted requests. 
    This is the error typically implied when solving problems with outliers.
    However, this error notion is unrealistic for metric spaces: even when the prediction for a location is ``correct'', one cannot hope that it would \emph{precisely} match the actual location on the underlying graph/metric space.
    \item Another notion of error, which is able to handle such slightly-perturbed predictions, is the minimum transportation cost of moving from the predicted input to the actual input. (This is the error previously used for metric problems in the prediction model.) The shortcoming of this error notion is that even a single acute misprediction could lead to unbounded transportation cost.
\end{itemize}


It is clear from this discussion that both these types of error are required to faithfully capture prediction (in)accuracy. We introduce a new robust notion of prediction error called {\em metric error with outliers} that subsumes both these types of errors. This notion of error applies to any problem involving predictions on a metric space; with the prediction model becoming very popular in online algorithms, we hope that this definition of error will be useful for other problems in the future.  

For convenience, we define this error on a metric space; this is easily 
extended to a graph using the shortest path metric. Let $M = (V, d)$ 
be a metric space.
Let $\reqs \subseteq V$ be a subset of vertices and $\preds\subseteq V$ 
be a prediction
for $\reqs$. (In general, $\preds$ and $\reqs$ may have different sizes.) 
The {\em metric error with outliers} for $\preds$ is given by $\err = (\Delta, D)$ w.r.t. $\reqs$ if there exist
sets $\mreqs\subseteq \reqs$ and $\mpreds\subseteq \preds$ satisfying $\pa{\mreqs} = \pa{\mpreds}$ 
such that (a) the minimum cost matching between $\mreqs$ and $\mpreds$
on $M$ has cost $D$, and (b) $\pa{\reqs \setminus \mreqs} + \pa{\preds \setminus \mpreds} = \Delta$.
Intuitively, the idea is to define the error as the minimum 
cost matching $D$ between the predicted and actual locations,
but allow {\em outliers} in both the predicted and actual sets that
count toward a numerical error $\Delta$. 
(See \cref{fig:Intro_ViewFigure} for a visualization.) 

We remark that for a given prediction, there are multiple different values of $\Delta$ and $D$ that fit this $(\Delta, D)$-error definition. In particular, for a given $D$, to find the best $\Delta$, simply find the maximum number of pairs of predicted and actual locations whose transportation cost is at most $D$; the total number of remaining predicted and actual locations is $\Delta$. This produces a Pareto frontier of $(\Delta, D)$ pairs (see Fig.~\ref{fig:Intro_ViewFigure}).
Our theorems hold for {\bf all} possible values of $\Delta$ and $D$.
In particular, if we set $D=0$ and define error using only the numerical count of mispredicted locations, or set $\Delta = 0$ and define error using only metric distance between predicted and actual points, our theorems continue to hold for both cases. (These are the two ends of the Pareto frontier.)

\begin{figure}
        \includegraphics[width=0.3\textwidth]{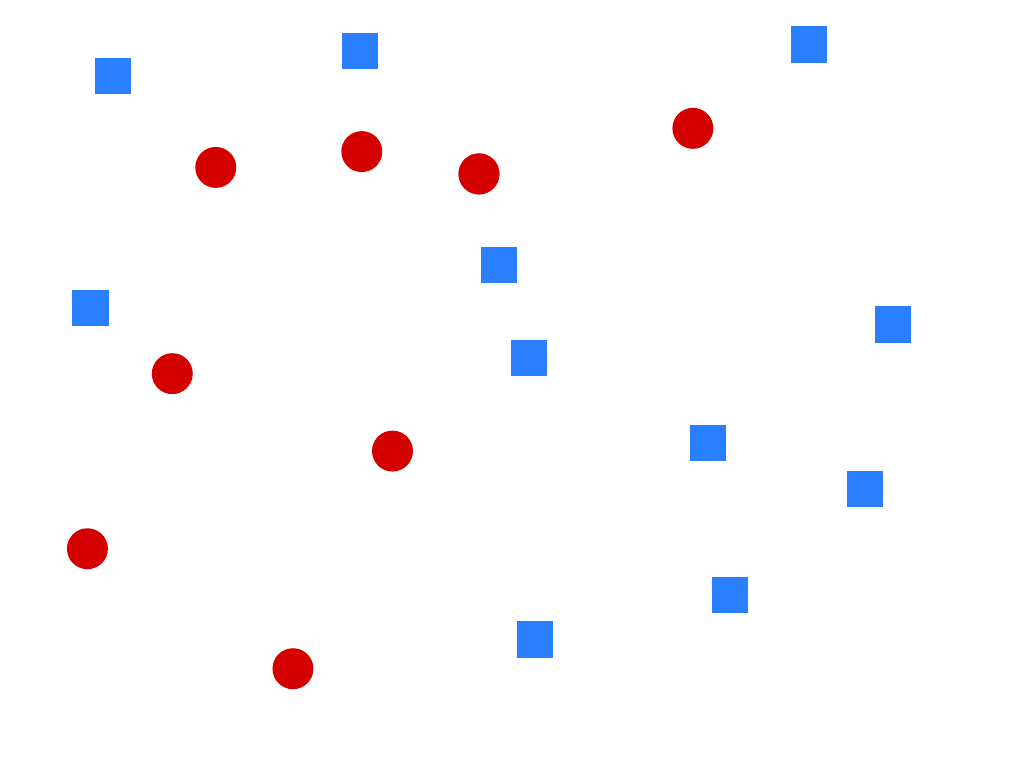}
        \hspace*{\fill}\vline\hspace*{\fill}
        \includegraphics[width=0.3\textwidth]{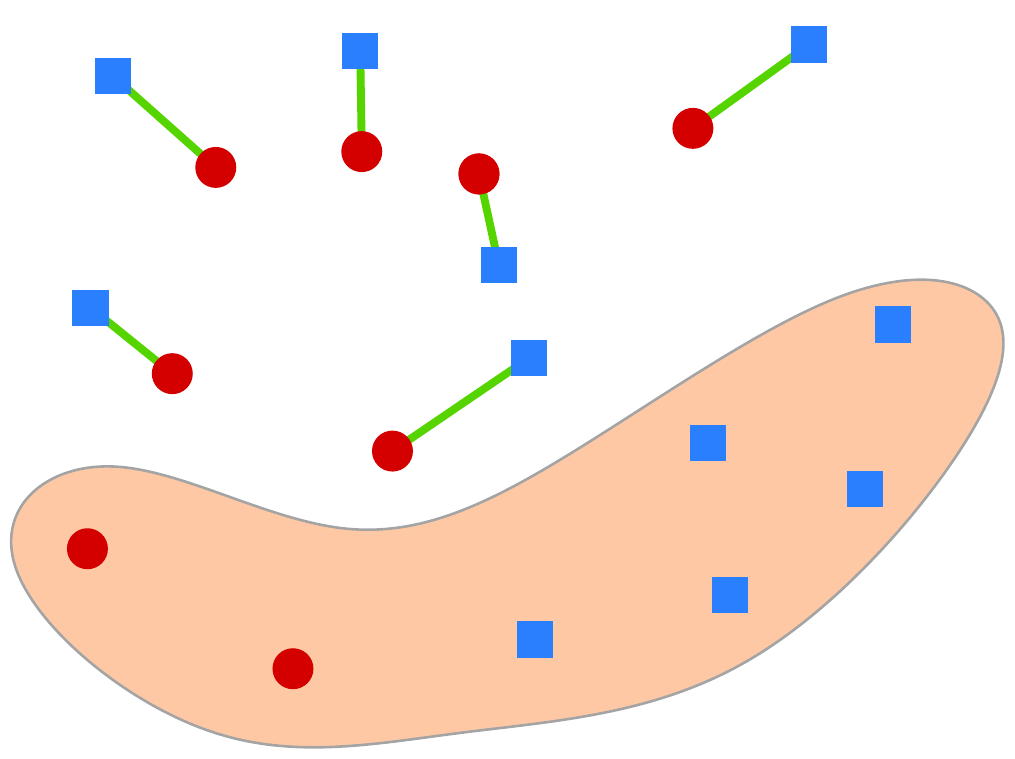}
    \hspace*{\fill}
    \vline\vline
    \hspace*{\fill}
        \includegraphics[width=0.35\textwidth]{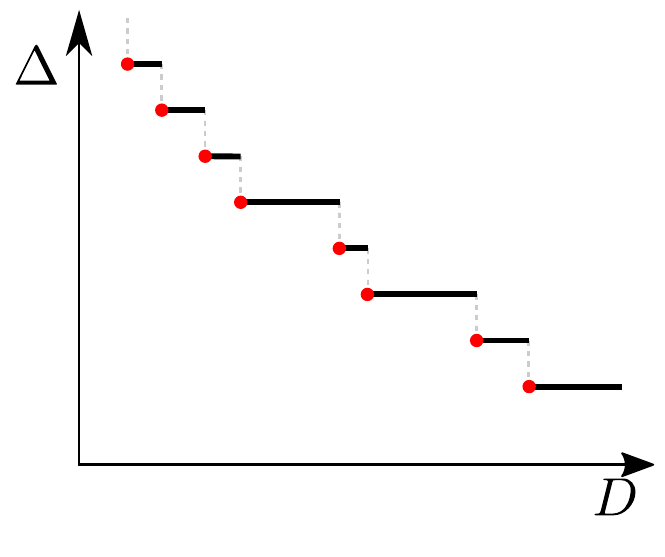}
    \caption{\footnotesize An illustration of metric error with outliers. The figure on the left shows a request set $\reqs$ (red circles) and a prediction $\preds$ (blue squares). The figure in the middle shows a valid definition of metric error with outliers for $\preds$ w.r.t. $\reqs$. Here, $\mreqs\subset \reqs$ are matched to $\mpreds\subset \preds$ using the green lines; $D$ is the total length of the green lines. The outliers are shown in the orange shaded region; $\Delta$ is the number of points in this region. \\
    The figure on the right shows the tradeoff between $\Delta$ and $D$ in the error. The red points are the Pareto frontier formed by various choices of $\Delta$ and $D$.}
    \label{fig:Intro_ViewFigure}
\end{figure}

\smallskip\noindent
{\bf A Framework for Combining Offline and Online Algorithms.}
The gold standard in online algorithms with predictions is 
to interpolate between online and offline bounds for a problem 
as a function of the prediction error. 
But, perhaps surprisingly, in spite of 
the recent surge of interest in this domain, there is still no
general framework/methodology that can combine online and offline 
results in a black box fashion to achieve such interpolation.
I.e., the existing results are individually tailored to meet the 
needs of the problem being considered, and therefore, do not
use offline/online results in a black box manner.
We make progress toward this goal of designing a general methodology
in this paper by presenting a framework that requires the following 
ingredients:
\begin{itemize}
    \item An {\em online} algorithm that satisfies a technical 
    condition called {\em subset competitiveness}. We will describe
    this more precisely later, but intuitively, this means that the 
    cost incurred by the algorithm on any subset of the input can
    be bounded against the optimal solution on that subset. 
    \item An {\em offline} algorithm for the {\em prize-collecting}
    version of the problem, i.e., if one is allowed to not satisfy 
    a constraint in lieu of paying a penalty for it.
\end{itemize}
We show that this framework achieves
tight bounds for several classical problems in graph algorithms
that we describe below. But, before 
describing these results, we note that the framework itself 
is quite general (i.e., does not use any specific property of 
problems defined on graphs), and uses 
the online and offline algorithms as black box results. Therefore,
we hope that this framework will be useful for online problems
with predictions in other domains as well.

\smallskip\noindent
{\bf Our Problems.}
We present online algorithms with predictions for some classic 
problems in graph algorithms using the above framework:
\begin{itemize}
    \item {\em Steiner Tree} (e.g., \cite{KarpinskiZ97,Zelikovsky93,PromelS00,RobinsZ05,ByrkaGRS13}): 
    In this problem, the goal is to 
    obtain the minimum cost subgraph that connects a given set of 
    vertices called {\em terminals}.
    In the online version, the terminals are revealed
    one at a time in each online step. The offline prediction 
    provides the algorithm with a predicted set of terminals.
    \item {\em Steiner Forest} (e.g., \cite{AgrawalKR95,GoemansW95}): 
    This is a generalization of the Steiner
    tree problem where the input comprises a set of vertex pairs 
    called {\em terminal pairs} that
    have to be connected to each other 
    (but not necessarily with other pairs). 
    In the online version, the terminal pairs are revealed
    one at a time in each online step. Again, the offline prediction
    gives a predicted set of terminal pairs.
    \item {\em Facility Location} (e.g.,~\cite{Hochbaum82,ShmoysTA97,Meyerson01,GuhaK99,ChudakS03,KorupoluPR00,JainV01,CharikarG05,JainMS02,JainMMSV03,Sviridenko02,MahdianYZ06,ByrkaA10,Li13}): In this problem, a subset 
    of vertices are designated as {\em facilities} and each of them is 
    given an opening cost. Now, a subset of vertices are identified as
    {\em clients}, and each client must connect to an {\em open} facility
    and pay the shortest path distance to that vertex as connection cost.
    The open facilities also pay their corresponding opening costs.
    The goal is to minimize the total cost. In the online version, the 
    clients are revealed one at a time in each online step. The offline 
    prediction comprises a set of predicted clients.
    \item {\em Capacitated Facility Location} (e.g., \cite{JainV01,DBLP:conf/random/MahdianYZ03}):
    This is the same as facility location, except that each facility
    also has a maximum capacity for the number of clients that can 
    connect to it. We consider the {\em soft} capacitated version 
    of the problem, i.e., where multiple copies of a facility can be 
    opened but each opened copy incurs the opening cost of the facility.
\end{itemize}

\eat{
\smallskip\noindent{\bf Hamming vs Metric Error.}
As described above, our goal is to design online algorithms whose 
performance gracefully degrades with {\em prediction error}. But, 
how do we define prediction error for a set of terminals in a 
graph? In the problems previously considered in this model, the 
notion of error was rather unambiguous: it simply counted the number
of incorrect predictions. We can do the same by defining the 
prediction error as the symmetric difference between the predicted 
and actual terminal sets -- we call this the {\em Hamming error}
of the prediction. We derive tight results for Hamming error as 
simple corollaries of the general framework described above.

One deficiency of this approach, however, is that the Hamming error 
does not take proximity in the graph between the predicted
and actual terminals into account. For instance,
if all the actual terminals are very close to, but not precisely 
at, the predicted ones, then the Hamming error is large 
although the predictions are accurate in terms of graph distance. 
Indeed, from a practical
perspective, if the predictions are generated by machine 
learning models (say on a geographical map), they cannot be expected 
to precisely match the locations of the actual terminals. 
To address this discrepancy, we introduce a more robust notion of 
prediction error for graphs that we call {\em metric error},
which is defined as the sum of edge lengths for a minimum 
cost matching between the predicted and actual terminal locations. 
For metric error, we use the same framework described above, but 
now need an additional property of the online algorithms that we call 
{\em subset-competitiveness}: this basically requires that the 
algorithm preserves its competitive ratio if one only considers
the cost of augmenting the solution for a subset of the input
points. We hope that our notion of metric error will be useful
for other graph algorithms in the prediction model.

\smallskip\noindent{\bf Ingredients for the Framework.} 
For each of the problems described 
above, there are classic results that provide the two ingredients
that are required by our framework. 

In the online
setting, a competitive ratio of $O(\log k)$ for $k$ terminals/clients
is known for all the three problems.
For online Steiner tree, we show that the standard 
greedy algorithm that achieves the optimal competitive ratio of 
$O(\log k)$~\cite{ImaseW91} is subset-competitive. For online Steiner forest, the
standard greedy algorithm~\cite{AwerbuchAB04} is subset-competitive
but only gives a competitive ratio of $O(\log^2 k)$. This was improved 
by Berman and Coulston~\cite{BermanC97} to the optimal competitive ratio
of $O(\log k)$. There are several
versions of their algorithm in the literature, some of which (e.g.,~\cite{Umboh2015})
are not subset-competitive. However, we give a particular 
interpretation that is subset-competitive. Finally, for facility
location, Fotakis~\cite{Fotakis07} gave an $O(\log k)$-deterministic 
algorithm. This algorithm is not subset-competitive 
(see Appendix~\ref{sec:SubsetCompetitivenessExample}); 
however, we give an analysis for this algorithm that establishes 
{\em amortized} subset competitiveness, and show that this is sufficient. We note that 
there is a slight improvement from $O(\log k)$ to $O(\log k/\log \log k)$ 
due to Fotakis~\cite{Fotakis08}, the latter bound being the optimal competitive
ratio for the problem. However, this improvement comes at the cost of a much 
more complicated algorithm, and it is not clear if this algorithm is 
subset-competitive. So, in this paper, we only use the $O(\log k)$ result 
for facility location. We also note that before \cite{Fotakis07}, 
Meyerson~\cite{Meyerson01} gave a randomized $O(\log k)$-competitive 
online facility location algorithm, but we use the deterministic 
algorithm in this paper for simplicity.

In the offline setting, constant approximations are known for the
prize-collecting versions of all the three problems. The reader is 
referred to Goemans and Williamson~\cite{GoemansW95}
for Steiner tree, Hajiaghayi and Jain~\cite{HajiaghayiJ06}
for Steiner forest, 
and Xu and Xu~\cite{DBLP:journals/jco/XuX09} for 
facility location.\footnote{We note that our framework does not
require the stronger Lagrangian-preserving guarantees that 
prize-collecting problems often seek. Without this 
added requirement, there are standard techniques for converting
integrality gap results for non prize-collecting
problems to approximation bounds for corresponding 
prize-collecting problems.}
}

\smallskip\noindent
{\bf Our Results.}
Our main result is the following:
\begin{thm}
\label{thm:robust-simple}
    Consider the Steiner tree, Steiner forest, and (capacitated) facility location problems.
    Suppose $\reqs$ and $\preds$ are respectively the actual and the predicted
    inputs, and let $\opt$ be the optimal solution for $\reqs$.
    Then, for {\em\bf any} metric error with outliers $(\Delta, D)$, 
    we present (for each problem) an algorithm $\alg$ with the following guarantee:
    \begin{equation}
        \alg \leq O(\log \Delta) \cdot \opt + O(D).
    \end{equation}
    
\end{thm}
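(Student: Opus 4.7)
The plan is to instantiate the general framework outlined above by combining, in a black-box fashion, a subset-competitive online algorithm for the problem with an offline approximation algorithm for its prize-collecting variant. Given the prediction $\preds$, we run the offline prize-collecting algorithm on $\preds$ with a uniform per-request penalty $\pi$ (to be tuned); let $\preds_s \subseteq \preds$ be the requests served by this offline solution and $\preds_p = \preds \setminus \preds_s$ the penalized ones. We install this partial offline solution at time zero.

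When an actual request $\req \in \reqs$ arrives, we check whether $\req$ lies in the matched set $\mreqs$ underlying the $(\Delta, D)$-error certificate and whether its matched prediction $\pred \in \mpreds$ lies in $\preds_s$. If both hold, we serve $\req$ by a cheap augmentation of the offline solution along the matching edge of length $d(\req, \pred)$ (extending a Steiner connection, or routing to the facility that was opened near $\pred$ via the triangle inequality). Otherwise, $\req$ is handed to a subset-competitive online algorithm, which maintains its own solution on the ``hard'' requests.

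The cost of our algorithm then splits into three parts. First, the augmentation along matching edges telescopes to at most the total matching cost $O(D)$. Second, the offline prize-collecting cost is $O(1)\cdot \opt_{\PC}(\preds)$; the key observation is that an optimum solution for $\reqs$ can be converted into a feasible prize-collecting solution for $\preds$ by using the matching to serve $\mpreds$ (extra cost $O(D)$) and paying penalty $\pi$ on the predictions in $\preds\setminus\mpreds$, giving $\opt_{\PC}(\preds) \leq O(\opt(\reqs)) + O(D) + \pi\Delta$. Third, the requests passed to the online algorithm are the actual outliers $\reqs\setminus\mreqs$ together with the matched actuals whose prediction was dropped, and the latter group has size at most $\opt_{\PC}(\preds)/\pi$. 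Choosing $\pi$ of order $\opt(\reqs)/\Delta + D/\Delta$ makes the online subset have size $O(\Delta)$, at which point subset-competitiveness bounds the online cost by $O(\log \Delta)\cdot \opt(\reqs)$, yielding the target bound.

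The main obstacle is that $\opt(\reqs)$ is unknown to the algorithm, so $\pi$ cannot be chosen directly; the natural remedy is to run logarithmically many parallel instances over geometrically spaced guesses of $\opt(\reqs)$, or to derive $\pi$ from an observable surrogate such as the offline cost on $\preds$, losing only a constant factor. A secondary task is to verify subset-competitiveness (possibly in an amortized form) for the specific online algorithms instantiated for Steiner tree, Steiner forest, and (capacitated) facility location, which is handled problem-by-problem using the greedy algorithm of Imase--Waxman, a suitable version of the Berman--Coulston algorithm, and Fotakis's online facility location algorithm, respectively.
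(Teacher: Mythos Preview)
Your proposal has a genuine gap at the algorithmic level: you have the algorithm ``check whether $\req$ lies in the matched set $\mreqs$'' and route $\req$ accordingly, but the matching underlying a $(\Delta,D)$-error is not an input to the algorithm. It is purely an analysis artifact, and in fact there is a whole Pareto frontier of valid $(\Delta,D)$ pairs (and matchings) for the same $(\reqs,\preds)$; the theorem promises the bound for all of them simultaneously. For the same reason, you cannot tune the penalty $\pi$ to $\Delta$ or to $D$, and your ``guess $\opt$ by doubling'' remedy does not address that $\Delta$ and the matching are equally unknown.

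The paper's fix is precisely to avoid any runtime separation into ``easy'' and ``hard'' requests. Every request is sent to the online algorithm, but that algorithm runs on the graph where all previously bought elements (including the offline partial solution) have cost $0$; thus a request that happens to be close to a served prediction automatically incurs small online cost, without the algorithm ever identifying its match. The online cost itself then drives a doubling scheme: each time it doubles, the framework buys a larger offline partial solution on $\preds$ via the prize-collecting routine, with the budget set by the current online cost rather than by an unknown $\opt$ or $\Delta$. Only in the analysis does one fix a particular $(\Delta,D)$ and a corresponding ``major iteration'' $i$; the partition of requests into $Q_{1,1}$, $Q_{1,2}$, $Q_2$ that you sketched happens there, not in the algorithm. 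Your cost accounting (matching edges give $O(D)$, unmatched and penalized give $O(\Delta)$ requests for subset-competitiveness) is morally the right decomposition, but it must live entirely in the analysis.
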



\eat{

Note that for any $\reqs$ and $\preds$, one possible definition of metric error with outliers
is in terms of the numerical error only, i.e., $(\Delta_{\rm num}, 0)$ where 
$\Delta_{\rm num} = |\reqs \setminus \preds| + |\preds \setminus \reqs|$
is the size of the symmetric difference between the two sets. This lets us 
write the following corollary of the Theorem~\ref{thm:robust-simple}:

\begin{cor}
\label{cor:nonrobust-simple}
    Suppose $\reqs$ and $\preds$ are respectively the actual and the predicted
    inputs, where $\Delta_{\rm num}$ is the size of the symmetric difference
    between $\preds$ and $\reqs$. Also, let $\opt$ be the optimal solution for $\reqs$.
    Then, we  obtain the following results:
    \begin{itemize}
        \item For Steiner tree, Steiner forest, facility location, and 
        capacitated facility location,
        we give algorithms $\alg$ with the following guarantee:
        \begin{equation}
            \alg \leq O(\log \Delta_{\rm num}) \cdot \opt.
        \end{equation}
        \item For priority Steiner tree with $\npry$ priority classes,
        we give an algorithm $\alg$ with the following guarantee:
        \begin{equation}
            \alg \leq O(\npry\cdot \log (\Delta_{\rm num}/\npry)) \cdot \opt.
        \end{equation}
    \end{itemize}
\end{cor}

}

{\bf Universality:} As shown in \cref{fig:Intro_ViewFigure}, the multiple choices for the error $(\Delta,D)$ create a Pareto frontier of choices which are non-dominated. 
The algorithms presented in this paper achieve the bounds in \cref{thm:robust-simple} for \emph{every error $(\Delta,D)$ simultaneously}, and thus remain oblivious to the chosen error parameters.

{\bf Scale-freeness:} The key property of these theorems is that the competitive ratio does not
scale with the size of the input as long as the input is predicted correctly 
(or with small $D$). Conceptually, this is exactly what predictions should
achieve in an online algorithm, that the competitive ratio should only be
a function of the part of the input that was not predicted correctly.

{\bf Tightness of Bounds:} The competitive ratios in Theorem~\ref{thm:robust-simple} match the tight competitive ratios
for the corresponding online problems, i.e., for $\preds = \emptyset$. Hence, these
bounds are also tight. Indeed, we show in the appendix that the online lower 
bound constructions can be implemented even if there
is a prediction of {\em any} size by creating a copy of the lower bound 
instance inside the erroneous predictions.

{\bf Impossibility of Black Box Theorem:} 
Given our theorems, one might wonder if it is {\em always} possible 
to replace the number of requests in an online graph algorithm with the 
number of prediction errors in the competitive ratio, i.e., if there is 
a {\em universal} theorem to this effect. Unfortunately, this is not true. 
For instance, in 
the bipartite matching problem even on a uniform metric space on $n$ points,
it is easy to see that even a single prediction error 
can lead to a competitive ratio of $\Omega(\log n)$. We give this example 
in the appendix.

{\bf Extension to Priority Steiner tree/forest:} 
In \Cref{sec:PSF}, we extend our results to \emph{priority} Steiner tree/forest (e.g. \cite{CharikarNS04,ChuzhoyGNS08}). 
    This is a generalization of Steiner tree/forest where every 
    edge and every terminal/terminal pair has a {\em priority}, 
    a terminal/terminal pair can only use edges of its priority 
    or higher to connect. In the online version, the terminals/terminal pairs
    and their priorities are revealed online. The offline prediction 
    comprises a set of predicted terminals/terminal pairs with their
    respective priorities. In the video conferencing and network design
    applications, this reflects multiple {\em service level agreements}
    (SLAs) with different users.
    
    We give an algorithm $\alg$ with the following guarantee:
        \begin{equation}
            \alg \leq O(\npry\cdot \log (\Delta/\npry)) \cdot \opt + O(D),
        \end{equation}
    where $\npry$ is the number of priority classes in the input.

\section{A General Framework for Online Algorithms with Predictions}
\label{sec:framework}

In this section, we give our general framework for online algorithms with prediction. Abstractly, we are given a universe $\S$ offline, where each element has an associated cost $c:\S\to \mathbb{R}^+$. A set of requests $\reqs$ arrives online, one request per step, and the online algorithm must augment its solution $S\subseteq \S$ by adding new elements from $\S$ such that the new request is satisfied. We assume that for each request $\req\in \reqs$, the collection of sets satisfying $\req$ is {\em upward-closed}, i.e., if $S$ satisfies $\req$, so too does any $S'\supseteq S$. 
The algorithm is also given an offline prediction $\preds$ for the set of requests. 


For every set $S\subseteq \S$ of elements, we define $c(S):= \sum_{e\in S} c(e)$. 
Our framework requires the following ingredients:
\begin{property}
	\label{asmp:FW_SubsetCompetitive}
	An online algorithm $\ON$ whose competitive ratio is  $f(|\reqs|)$. This algorithm must be \emph{subset-competitive}; that is, for every request subset $\reqs' \subseteq \reqs$ we have 
	\[ \ON(\reqs') \le O(f(|\reqs'|)) \cdot \OPT \]
	where $\ON(\reqs')$ denotes the total cost incurred by the algorithm in serving the requests of $\reqs'$.
\end{property}

\begin{property}
	\label{asmp:FW_PrizeCollecting}
	 A (constant) $\gamma$-approximation algorithm $\PC$ for the prize-collecting offline problem.
In this problem, each request in $\reqs$ has a penalty given by $\pi:\reqs \to \mathbb{R}^+$. A solution is a subset of elements $S\subseteq \S$ that minimizes $c(S) + \sum_{\req\in \reqs'} \pi(\req)$, where $\reqs' \subseteq \reqs$ is the set of requests \underline{not} satisfied by $S$.
\end{property}




Given these two ingredients, we define our framework in Algorithm~\ref{alg:FW_Algorithm}. Before describing the framework formally, let us give some intuition for it. There are two extreme cases. If the prediction is entirely accurate, we should simply run an offline approximation for the predicted requests. On the other hand, if the prediction is entirely erroneous, we should run a standard online algorithm and ignore the prediction completely. Between these extremes, where the prediction is only partially accurate, some hybrid of those algorithms seems a natural choice. 
Thus, our framework runs a standard online algorithm on the actual requests, while also satisfying increasingly larger parts of the prediction by periodically running an offline algorithm to satisfy predicted requests. 
Ideally, the offline solutions should satisfy the correctly predicted requests, while the online solution should satisfy the requests that were not predicted.
But, in general, this is impossible to achieve: the offline algorithm does not know in advance which of the predicted requests will arrive in the future. Our main technical work is to show the surprising property that it suffices for the offline algorithm to choose the most ``cost-effective'' part of the prediction to satisfy {\em even if those requests eventually do not realize in the actual input}.

\newcommand{\onc}{B}
\newcommand{\offc}{\hat{B}}

\DontPrintSemicolon
\begin{algorithm}[t]
	\caption{\label{alg:FW_Algorithm}General Framework for Online Algorithms with Predictions}

	\algorithmfootnote{\footnotemark[\value{footnote}] This can be done by enumerating all values of $i$.
	}
	\begin{footnotesize}
	\Input{$\ON$ -- An online algorithm for the problem \\
	$\PC$ -- a $\gamma$-approximation for the prize-collecting problem \\
	$\preds$ -- a prediction of requests}

	\BlankLine

	\Initialization{}{
		Initialize $\onc\gets0$, $\offc\gets0$ and $S\gets\emptyset$.
		%

		Initialize $\ON$ to be a new instance of the online algorithm.
	}
	\BlankLine

	\EFn(\tcp*[h]{Upon the next request $r$}){\UponRequest{$r$}}{

		Send $\req$ to the online algorithm $\ON$, and augment $S$ accordingly. Increase $\onc$ by the resulting cost incurred by $\ON$.\label{line:FW_PathConnect}

		\BlankLine

		\tcp*[h]{Whenever the cost of the online algorithm doubles, buy another offline solution}

		\If{$\onc \ge 2\offc$}{\label{line:FW_IfCondition}

			Set $\offc\gets \onc$

			Calculate $0 \le u \le |\preds|$, the minimum number such that $c(\Partial(\preds,u))\le 3\gamma \offc$. \label{line:FW_DefiningHatK}

			Augment $S$ with the elements of $\Partial(\preds,u)$. \label{line:FW_AddPartial}

			Start a new instance of the online algorithm $\ON$ where $c(e) = 0$ for all $e\in S$. \label{line:FW_RestartOnline}

		}
	}

	\BlankLine

	\Fn{\Partial{$\preds$,$u$}}{

		\lIf{$ \gamma u \ge |\preds| $}{\Return{$ \emptyset $}}

		For every $x$, define $\pi_x$ to be the prize-collecting penalty function that penalizes every request by $x$.

		Let $i$ be the minimum integer such that $\PC (\preds, \pi_{2^i})$ does not satisfy $\le \gamma u$ requests.\footnotemark{}

		Let $S_1 =\PC(\preds,\pi_{2^{i-1}})$ and $S_2 = \PC (\preds, \pi_{2^i})$.

		Let $u_1,u_2$ be the number of requests from $\preds$ which are not satisfied by $S_1,S_2$, respectively.

		\leIf{$\gamma u \ge \frac{u_1+u_2}{2}$ }{\Return{$S_1$}}{ \Return{$S_2$} }

	}
	\end{footnotesize}
\end{algorithm}

\textbf{Framework and \Partial Procedure.} A high-level, informal description of the framework is to perform the following whenever a request $\req$ is released:

\begin{enumerate}
    \item Send $\req$ to be served by the online algorithm $\ON$.
    
    \item If the total cost $\onc$ incurred in Step 1 (over all requests) has doubled, buy an offline solution which satisfies as many requests from the prediction $\preds$ as possible, while not exceeding a budget of $O(\onc)$.
\end{enumerate}

The framework thus runs the online algorithm $\ON$ on each incoming request, but periodically adds an (offline) solution for the prediction $\preds$. This offline solution is computed by a subroutine that we call \Partial. The input to $\Partial$ is the prediction $\preds$ and an integer $u$ such that $0\le u \le |\preds|$, and it attempts to output a minimum-cost solution that satisfies all requests from $\preds$ {\bf except} for at most $u$ requests. If $\Partial$ were able to solve this minimization problem exactly, or even up to a constant approximation, then the algorithm would add the solution returned by $\Partial(\preds,u)$ for the minimum value of $u$ within the given budget. An example for a problem in which $\Partial$ can output such a solution is Steiner tree, in which $\Partial$ simply needs to solve the well-known $k$-\mst problem, for which there are several constant approximation algorithms (e.g.,~\cite{AwerbuchABV98,BlumRV99,Garg96,AryaR98,AroraK06,Garg05}).  

However, for other problems, this may be infeasible -- for Steiner forest, the approximability of the corresponding ``$k$-minimum spanning forest'' problem is known to be related to that of the $k$-densest subgraph problem, and hence a sub-polynomial approximation is unlikely~\cite{HajiaghayiJ06,SegevS06,GuptaHNR10}.

Our key observation is that we can relax the guarantee that we seek from this minimization problem. Namely, we show that it suffices to obtain a {\em residual bi-criteria} constant approximation, which means that the algorithm satisfies the following two properties: (a) the cost of the solution is at most a constant times that of the optimal solution, and (b) the number of requests in $\preds$ that are {\em not satisfied} by the algorithm is also at most a constant times that in the optimal solution. Surprisingly, this weaker guarantee allows us to entirely conceal the differences in approximability of the minimization problem being solved by $\Partial$. Indeed, we show that using a series of invocations of an offline prize-collecting algorithm $\PC$ in a completely {\em black box} manner, we can obtain an implementation of $\Partial$ that provides the bicriteria guarantee that we need.

The formal algorithm for the framework is presented in Algorithm~\ref{alg:FW_Algorithm}.

\begin{figure}
    \begin{center}
    \includegraphics[width=0.75\textwidth]{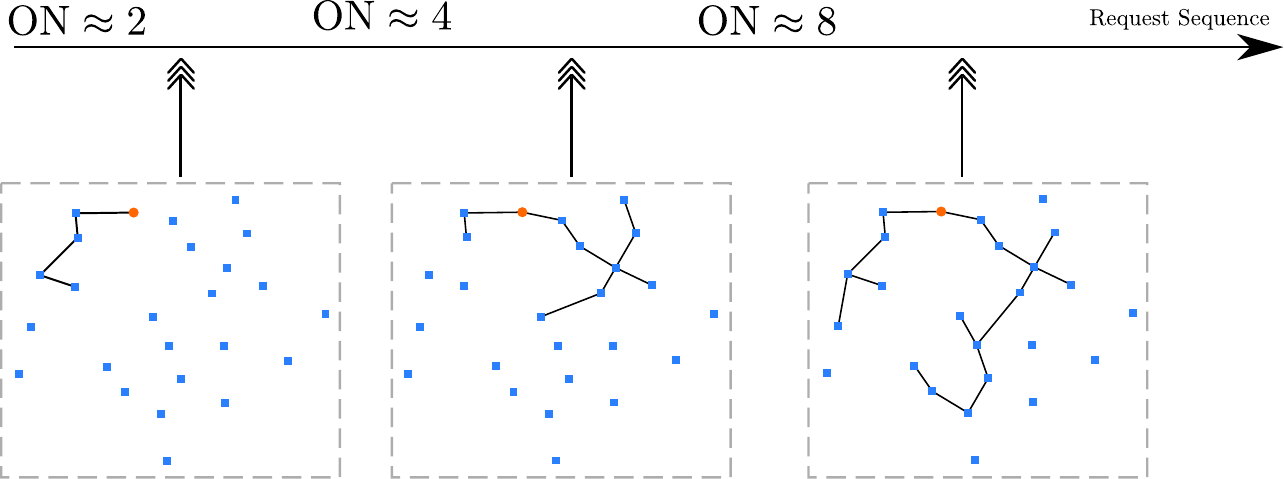}
    \end{center}
    \caption{\footnotesize Visualization of Framework: As the request sequence advances, the requests are forwarded to an online algorithm. Each time the cost of this online algorithm doubles, the current solution is augmented with an offline solution to the prediction, which serves an increasingly-larger number of predicted requests.}
    \label{fig:FW_Timeline}
\end{figure}

\subsection{Framework Properties}

\paragraph{The \text{\sc Partial}\xspace Subroutine Guarantee.} Recall the desired property from the subroutine $\Partial$: given a prediction $\preds$ and a number $u$ (such that $0\le u\le |\preds|$) as input, compare the output $S$ of $\Partial(\preds,u)$ against the minimum-cost solution $S^*$ which satisfies all requests from $\preds$ except for $u$ requests. This output $S$ cannot fail to satisfy more than a constant factor more requests $\preds$ than $S^*$; that is, it fails to satisfy only $O(u)$ requests (here, recall that $\gamma$ is a constant). In addition, the cost of $S$ can be at most a constant times the cost of $S^*$. This residual bi-criteria guarantee crucially uses Property \ref{asmp:FW_PrizeCollecting}, where the $\gamma$ approximation ratio of the prize-collecting algorithm goes into both the cost of $S$ and the amount of requests satisfied by $S$. This guarantee is formally stated in the following lemma (proof in Appendix \ref{sec:framework_AdditionalProofs}).


\begin{lem}
	\label{lem:FW_Partial}
	The solution $S$ returned by a call to $\Partial(\preds,u)$ has the following properties:
	\begin{enumerate}
		\item $S$ satisfies all requests from $\preds$ except for at most $2\gamma u$ requests. 
		\item $ c(S) \le 3\gamma\cdot  c(S^\ast)$, where $S^\ast$ is the minimum cost solution satisfying all requests from $\preds$ except for at most $u$ requests.
	\end{enumerate}
\end{lem}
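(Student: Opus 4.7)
The plan is to establish both properties simultaneously by a case analysis on which of $S_1$ or $S_2$ is returned by \Partial, using the $\gamma$-approximation guarantee of $\PC$ as the sole analytic tool. The trivial case $\gamma u\ge|\preds|$ (where $\emptyset$ is returned) is handled directly: the unsatisfied count is $|\preds|\le\gamma u$ and the cost is zero, so both properties hold vacuously.

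In the nontrivial case, the key observation is that $S^\ast$ itself is a feasible solution to the prize-collecting instance $\PC(\preds,\pi_{2^j})$ for every $j$, yielding a total prize-collecting objective of at most $c(S^\ast)+2^j u$ (since $S^\ast$ leaves at most $u$ unsatisfied requests). Combining this with the $\gamma$-approximation guarantee gives a single master inequality
\[ \text{cost}_j + 2^j v_j \;\le\; \gamma\pr{c(S^\ast)+2^j u}, \]
where $\text{cost}_j$ and $v_j$ denote the cost and unsatisfied-request count of $\PC(\preds,\pi_{2^j})$. The remainder of the argument consists of specializing this inequality to $j=i-1$ and $j=i$ and reading off bounds for $S_1$ and $S_2$ respectively.

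Since $i$ is the smallest integer with $v_i\le\gamma u$, specializing at $j=i-1$ (where $v_{i-1}>\gamma u$) immediately yields both $c(S_1)\le\gamma c(S^\ast)$ and the crucial bound $2^i(v_{i-1}-\gamma u)\le 2\gamma c(S^\ast)$; specializing at $j=i$ gives $c(S_2)\le\gamma c(S^\ast)+2^i(\gamma u-v_i)$. The final step is to branch on the return value. When $S_1$ is returned, the test $\gamma u\ge(u_1+u_2)/2$ forces $u_1\le 2\gamma u$, and the cost bound is already in hand. When $S_2$ is returned, the reversed inequality $u_1-\gamma u > \gamma u - u_2$ combines with the bound on $2^i(v_{i-1}-\gamma u)$ above to yield $c(S_2)\le 3\gamma c(S^\ast)$, while the unsatisfied-count bound $u_2\le\gamma u$ is immediate from the choice of $i$.

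The main conceptual obstacle is justifying the decision threshold $\gamma u\ge(u_1+u_2)/2$: it is precisely the split under which the cost bounds for $S_1$ and $S_2$ each become a constant multiple of $c(S^\ast)$ in their respective regime, and it is what makes the residual count fit within $2\gamma u$ on both sides. Once the master inequality has been derived and this threshold is correctly understood, the case analysis reduces to routine algebra.
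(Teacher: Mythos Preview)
Your proposal is correct and follows essentially the same approach as the paper: both derive the same master inequality from the $\gamma$-approximation of $\PC$ at levels $i-1$ and $i$, and both branch on the threshold $\gamma u \gtrless (u_1+u_2)/2$. Your treatment of the $S_2$ cost case is in fact a bit more direct than the paper's (you use $\gamma u - u_2 < u_1 - \gamma u$ straight away, whereas the paper routes through $\frac{u_1-u_2}{2}$), but this is a cosmetic difference, not a different method.
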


\paragraph{Bounding Cost of Framework.} Next, we provide some bounds for various costs incurred by the framework. The following sections make use of these general bounds to prove competitiveness.

The requests of $\reqs$ are handled in $\pa{\reqs}$ iterations of the algorithm (calls to \UponRequest). For the variable $\onc$ in the algorithm, and any iteration $j$, we write $\onc_j$ to refer to the value of the variable immediately after the $j$'th iteration of the algorithm (where $\onc_0$ is the initial value of $\onc$, namely $0$). 
Similarly, we define $\offc_j$ to be the value of the variable $\offc$ after the $j$ iteration, where $\offc_0 = 0$.



If the \textbf{if} condition of Line \ref{line:FW_IfCondition} holds during iteration $i$, then $i$ is called a \emph{major iteration}. That is, major iterations are those iterations in which our solution is augmented by a solution for some predicted requests.

Fix a major iteration $i$, and observe the subsequent major iterations, denoted $i= i_0^{\star}, i_{1}^{\star},...,i_{m}^{\star}$ (where $m$ is the number of such major iterations). For convenience, we define $i_{m+1}^\star = |R|$. For each index $j \in \{0,\cdots,m\}$, we define  \emph{phase} $j$ to consist of iterations $\left\{ i_j^{\star}+1 , \cdots, i_{j+1}^{\star} \right\}$. Observe that each phase uses a different instance of $\ON$; we denote by $\ON_j$ the total cost of the $\ON$ instance of phase $j$.

\begin{figure}
    \begin{center}
    \includegraphics[width=0.5\textwidth]{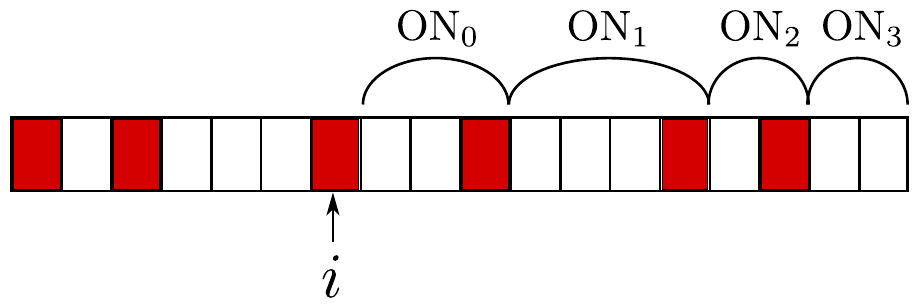}
    \end{center}
    \caption{\footnotesize Phase Structure: This figure illustrates the iterations of the algorithm, where the major iterations are shown in red. Fixing any specific major iteration $i$ induces a sequence of online phases $\ON_0, \cdots, \ON_m$.}
    \label{fig:FW_PhaseStructure}
\end{figure}

The following lemma provides a bound on the cost of the algorithm (formal proof in Appendix \ref{sec:framework_AdditionalProofs}). It does so by dividing the costs of the algorithm into two parts, subject to the choice of any major iteration $i$: the first is the prefix cost, up to (and including) iteration $i$; the second is the suffix cost, from iteration $i+1$ onwards. Intuitively, a ``correct'' choice of $i$ would be such that $\offc_{i}\approx \OPT$; subject to such a choice of $i$, the first $i$ iterations perform an exploratory doubling procedure, and thus the prefix cost would be $O(\OPT)$. As for the suffix cost, we claim that once an offline solution of cost $\approx \OPT$ is added, the suffix cost cannot be large unless the prediction is rather inaccurate. These claims are made formal in the following sections; for now, we simply state and prove this prefix-suffix division of costs. Note that this lemma makes use of Property \ref{asmp:FW_SubsetCompetitive}, which implies that serving no single request in the online algorithm is prohibitively expensive.


\begin{lem}
    \label{lem:FW_BoundingFrameworkCost}
    Fix any major iteration $i$. Denote the subsequent major iterations by $i= i_0^{\star}, i_{1}^{\star},...,i_{m}^{\star}$, and define phases and $\ON_j$ for any $j$ as above (with respect to iteration $i$). The following holds:
    \begin{enumerate}
        \item The total cost of the first $i$ iterations is at most $O(1)\cdot \OPT + (12\gamma + 2) \cdot \offc_{i-1}$.
        
        \item The total cost of the iterations $i+1,\cdots, |\reqs|$ is at most $O(1)\cdot \max\pc{ \ON_{m-1},\ON_m } $.
    \end{enumerate}
\end{lem}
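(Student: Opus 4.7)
The plan is to decompose, in both parts, the total cost into its two contributors -- the \emph{online} cost accumulated in $\onc$ and the \emph{offline} cost paid at each major iteration via the call to \Partial -- and to exploit the fact that by the trigger on Line~\ref{line:FW_IfCondition} the value of $\onc$ (hence $\offc$) at least doubles between successive major iterations. This doubling turns every sum over major iterations into a geometric series dominated by a constant times its last term, and both parts of the lemma then reduce to careful telescopings on that doubled sequence.

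For Part~1, I will let $j^{(1)}<\cdots<j^{(p)}=i$ denote the major iterations in $[1,i]$. By Line~\ref{line:FW_DefiningHatK} the offline cost added at $j^{(k)}$ is at most $3\gamma\offc_{j^{(k)}}$, and since $\offc_{j^{(k)}}$ at least doubles in $k$, $\sum_k 3\gamma\offc_{j^{(k)}}\le 6\gamma\offc_i$. Adding the online cost $\onc_i$ and using $\offc_i=\onc_i$ (set during the major iteration at $i$), the total cost of the first $i$ iterations is at most $(1+6\gamma)\onc_i$. To bound $\onc_i$, I note that just before iteration $i$ we always have $\onc_{i-1}\le 2\offc_{i-1}$: either $i-1$ is not major (the trigger failed) or $i-1$ is major (in which case $\onc_{i-1}=\offc_{i-1}$). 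The marginal online cost of serving $r_i$ in the current instance is at most its from-scratch cost on $\pc{r_i}$, which by Property~\ref{asmp:FW_SubsetCompetitive} applied to $\reqs'=\pc{r_i}$ is $O(f(1))\cdot\OPT=O(\OPT)$. Hence $\onc_i\le 2\offc_{i-1}+O(\OPT)$, yielding the claimed $(12\gamma+2)\offc_{i-1}+O(1)\cdot\OPT$.

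For Part~2, the doubling condition gives $\ON_j=\onc_{i_{j+1}^{\star}}-\onc_{i_j^{\star}}\ge \onc_{i_j^{\star}}$ for $j=0,\ldots,m-1$, so telescoping yields
\[
    \sum_{j=0}^{m-2}\ON_j \;=\; \onc_{i_{m-1}^{\star}}-\onc_i \;\le\; \onc_{i_{m-1}^{\star}} \;\le\; \ON_{m-1},
\]
whence the total online cost is $\sum_{j=0}^m\ON_j\le 2\ON_{m-1}+\ON_m\le 3\max\pc{\ON_{m-1},\ON_m}$. The offline cost contributed from iteration $i+1$ onward is $\sum_{k=1}^m 3\gamma\offc_{i_k^{\star}}\le 6\gamma\onc_{i_m^{\star}}$ by the same geometric argument; combined with $\onc_{i_m^{\star}}=\onc_{i_{m-1}^{\star}}+\ON_{m-1}\le 2\ON_{m-1}$, this offline contribution is at most $12\gamma\ON_{m-1}$. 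Summing gives a total of at most $(3+12\gamma)\max\pc{\ON_{m-1},\ON_m}=O(1)\cdot\max\pc{\ON_{m-1},\ON_m}$.

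The only delicate step will be the single-step bound $\onc_i-\onc_{i-1}=O(\OPT)$ in Part~1. It relies on applying Property~\ref{asmp:FW_SubsetCompetitive} to the singleton $\pc{r_i}$ (so $f(\pa{\reqs'})=f(1)=O(1)$), together with the natural monotonicity that the marginal cost of a request within a larger online instance does not exceed its from-scratch cost -- a property satisfied by all the online algorithms used in the paper. Everything else amounts to two clean geometric-series telescopings on the doubled sequence of $\offc$ values at major iterations, so this single-iteration bound is the main subtlety of the argument.
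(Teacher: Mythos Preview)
Your argument is correct and follows essentially the same route as the paper's: in both items the cost is decomposed into the online-accumulated term $\onc$ and the offline \Partial contributions, and the doubling of $\offc$ across major iterations collapses the relevant sums into geometric series. Your bookkeeping in Part~2 is organized a bit differently (you bound the offline contribution directly by $12\gamma\,\ON_{m-1}$, whereas the paper first bounds it by $6\gamma\sum_{j}\ON_j$ and then bounds that sum), but the two computations are interchangeable and yield the same constant.

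One point is worth correcting: the monotonicity assumption you invoke in Part~1 (``marginal cost of $r_i$ within the current phase $\le$ from-scratch cost on $\{r_i\}$'') is unnecessary and is not part of the framework's stated properties. Property~\ref{asmp:FW_SubsetCompetitive} already gives the single-step bound directly: apply it to the current $\ON$ instance (whose input is the requests of the current phase, in the modified costs) with $\reqs'=\{r_i\}$. It says precisely that the cost $\ON$ incurs on $r_i$, \emph{as a subset of that phase}, is at most $O(f(1))\cdot\OPT'$, where $\OPT'$ is the optimum for the phase under the modified costs; and $\OPT'\le\OPT$. This is how the paper obtains $\onc_i-\onc_{i-1}=O(\OPT)$, with no additional structural assumption on $\ON$.
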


The lemmas of this section reduce the cost analysis of the framework to bounding the terms $\ON_{m-1},\ON_{m}$ and  $\offc_{i-1}$ for some major iteration $i$. This is done in the following sections, for each problem separately.

\section{Online Steiner Tree with Predictions}
\label{sec:ST}

In the Steiner tree problem, we are given an undirected graph $G=(V,E)$ with edge costs $c:E\to \mathbb{R}^+$. The set of requests $\reqs=\{\req_1, \req_2, \ldots, r_{|R|}\}$ is a subset of vertices called {\em terminals}. The goal is to obtain a connected subset of edges $S$ that spans all the terminals at minimum cost $\sum_{e\in S} c(e)$. In the {\em online} version, the terminals arrive online, one per step, and must be connected to a root vertex $\rho$ given offline.
(The rooted and unrooted versions are equivalent.)
The offline prediction comprises a predicted set of terminals $\preds$.

In this section, we apply the framework of Algorithm \ref{alg:FW_Algorithm} using the two required components -- an offline algorithm for prize-collecting Steiner tree and a subset-competitive online algorithm for Steiner tree.

\smallskip\noindent{\bf Prize-collecting offline algorithm.}
Goemans and Williamson \cite{GoemansW95} gave a 2-approximation algorithm for the prize-collecting (unrooted) Steiner tree problem. Their algorithm can be trivially adapted for the rooted version of the problem by adding the root $\rho$ to the terminals, and assigning it a penalty of $\infty$.
\begin{lem}[Goemans and Williamson \cite{GoemansW95}]
	\label{fact:ST_PrizeCollecting}
	There exists a 2-approximation offline algorithm $\PCST$ for the rooted prize-collecting Steiner tree problem.
\end{lem}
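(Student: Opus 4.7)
The plan is to invoke Goemans--Williamson's 2-approximation for the unrooted prize-collecting Steiner tree as a black box, combined with the reduction sketched in the paragraph preceding the lemma. First, I would formalize the reduction: given an instance of rooted prize-collecting Steiner tree on a graph $G=(V,E)$ with edge costs $c$, root $\rho$, and penalty function $\pi$, construct an unrooted instance on the same $(G,c)$ with modified penalty function $\pi'$ defined by $\pi'(\rho) := M$ for a sufficiently large finite $M$ (e.g., any value strictly exceeding $\sum_{e \in E} c(e) + \sum_{v} \pi(v)$), and $\pi'(v) := \pi(v)$ for all $v \ne \rho$. (Equivalently, one may set $\pi'(\rho) = \infty$ as in the text; using a large finite $M$ keeps the GW computation concrete.)

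Next, I would argue that the two optima coincide. Every feasible tree $T$ for the rooted instance must contain $\rho$, is therefore feasible for the unrooted instance, and has the same objective (since $\rho \in T$ incurs no penalty). Conversely, any tree $T'$ for the unrooted instance whose objective is strictly less than $M$ must contain $\rho$ (otherwise it would pay the prohibitive $M$-penalty), so it is feasible for the rooted instance with the identical objective. Hence the two instances share the same optimum value $\OPT$, and any sub-$M$ unrooted solution transfers back to the rooted problem without change in cost.

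Finally, I would run Goemans--Williamson on the modified unrooted instance to obtain a tree $T$ of objective at most $2 \cdot \OPT$. By the choice of $M$, we have $2 \cdot \OPT < M$, so $T$ must contain $\rho$; hence $T$ is feasible for the original rooted instance and achieves the claimed 2-approximation. There is no real technical obstacle here, since the argument is a purely syntactic reduction. The only minor point requiring care is the bookkeeping choice of $M$: it must be large enough to guarantee $\rho \in T$ in any sub-$M$-cost solution, yet bounded so that the GW computation remains polynomial in the input size.
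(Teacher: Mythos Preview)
Your proposal is correct and follows exactly the approach the paper indicates: cite the Goemans--Williamson 2-approximation for the unrooted prize-collecting Steiner tree and force the root into the solution by assigning it an effectively infinite penalty. The paper gives no further proof beyond that one-sentence reduction, so your formalization (choosing a finite $M$ large enough and verifying the optima coincide) is simply a careful write-up of the same idea.
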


{\bf Subset-competitive online algorithm.}
The greedy algorithm $\OST$ for online Steiner tree~\cite{ImaseW91} performs the following:
\begin{enumerate}
    \item Upon the release of a new terminal $r_i$, let $v \in \pc{\rho, \req_1,\cdots \req_{i-1}}$ be the closest terminal to $\req_i$.
    
    \item Buy the path $P_{r_i,v}$, the shortest path from $\req_i$ to $v$.
\end{enumerate}

As shown in~\cite{ImaseW91}, this algorithm is $O(\log |\reqs|+2)$-competitive.
The following lemma shows that it also upholds Property \ref{asmp:FW_SubsetCompetitive}, i.e. it is subset competitive.

\begin{lem}
	\label{lem:ST_Greedy_SubsetCompetitive}
	The greedy algorithm $\OST$ is subset-competitive, as in Property \ref{asmp:FW_SubsetCompetitive}. That is, for an input $\reqs$ and every $\reqs' \subseteq \reqs$, it holds that 
	\[
	 \OST(\reqs') \le  O \pr{\log \pa{\reqs'}+2} \cdot \OPT
	\] 
\end{lem}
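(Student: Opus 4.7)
The plan is to reduce subset-competitiveness to the standard Imase--Waxman bound by a simple monotonicity argument. Let $\reqs = \{r_1,\ldots,r_{|\reqs|}\}$ in arrival order and fix any $\reqs'\subseteq \reqs$. For each $i$, denote by $T_{i-1} = \{\rho,r_1,\ldots,r_{i-1}\}$ the set of points the greedy algorithm has already connected when $r_i$ arrives, so that the cost it incurs at step $i$ is exactly $d(r_i, T_{i-1})$. Thus the cost the algorithm pays for requests in $\reqs'$ is
\[
\OST(\reqs') \;=\; \sum_{r_i \in \reqs'} d(r_i, T_{i-1}).
\]

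Next I would consider the hypothetical run of the greedy algorithm on the subsequence $\reqs'$ alone, in the same order. Write $T'_{i-1}$ for the set $\{\rho\}$ together with those $r_j\in \reqs'$ with $j<i$. Since $\reqs'\subseteq \reqs$ we have $T'_{i-1}\subseteq T_{i-1}$, and therefore $d(r_i,T_{i-1}) \le d(r_i,T'_{i-1})$ for every $r_i\in \reqs'$. Summing, the cost the greedy algorithm pays for the $\reqs'$-requests in the original run is at most the cost it would pay if only $\reqs'$ were ever presented:
\[
\OST(\reqs') \;\le\; \sum_{r_i\in \reqs'} d(r_i, T'_{i-1}) \;=\; \OST(\reqs').
\]
(Here the right-hand side is the cost on the standalone input $\reqs'$; to avoid overloading notation in the paper I would simply denote it by $\OST$ on $\reqs'$ in its own run.)

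Now I would invoke the classical Imase--Waxman guarantee for greedy on the instance $\reqs'$: its cost is at most $O(\log|\reqs'|+2)\cdot \mathrm{opt}(\reqs')$, where $\mathrm{opt}(\reqs')$ is the optimal Steiner tree connecting $\reqs'\cup\{\rho\}$. Finally, since $\reqs'\subseteq \reqs$, any Steiner tree for $\reqs\cup\{\rho\}$ also spans $\reqs'\cup\{\rho\}$, so $\mathrm{opt}(\reqs')\le \OPT$. Chaining the three inequalities yields
\[
\OST(\reqs') \;\le\; O(\log|\reqs'|+2)\cdot \mathrm{opt}(\reqs') \;\le\; O(\log|\reqs'|+2)\cdot \OPT,
\]
which is exactly the claim.

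The proof is essentially one-line once the right observation is made; I do not expect any real obstacle. The only subtle point to communicate clearly is that the monotonicity $T'_{i-1}\subseteq T_{i-1}$ gives the \emph{correct} direction of inequality on per-step costs, which is what lets us bound the ``subset cost'' by the cost of a fictitious run on $\reqs'$ alone and then apply the standard competitive bound as a black box.
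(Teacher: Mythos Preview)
Your argument is correct and is essentially identical to the paper's proof: the paper also observes that the greedy connection cost for each $r\in\reqs'$ in the full run is at most its greedy cost in a standalone run on $\reqs'$ (since there are only fewer previous terminals to connect to), then applies the Imase--Waxman bound to the standalone instance and uses $\mathrm{opt}(\reqs')\le\OPT$. Your write-up is more explicit about the sets $T_{i-1}$ and $T'_{i-1}$, but the underlying reasoning is the same.
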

\begin{proof}
    Denote by $\OPT'$ the optimal solution for the subset $\reqs'$. We show that $\OST(\reqs') \le O\pr{\log \pa{\reqs'}+2} \cdot \OPT'$, which implies the lemma since $\OPT' \le \OPT$.
    
    Note that for each terminal $\req\in \reqs'$,
    the cheapest path to a previous terminal in an instance with terminals $\reqs$ is at most
    as expensive as the cheapest path to a previous terminal in an instance with terminals $\reqs'\subseteq \reqs$.
    The lemma now follows from the competitive ratio of the instance with terminals $\reqs'$.
\end{proof}

\paragraph{Steiner tree with predictions.} The algorithm for Steiner tree with predictions is now constructed by simply plugging in $\OST$ and $\PCST$ into the framework.
Note that the solution $S$ remains connected throughout the algorithm (specifically in Line \ref{line:FW_AddPartial} of Algorithm \ref{alg:FW_Algorithm}, both subgraphs whose union is taken contain the root vertex $\rho$).

We would now like to show that applying the framework to $\OST$ and $\PCST$ yields a guarantee for metric errors with outliers. However, we must first define the matching cost of two requests, to complete the definition of errors in the Steiner tree problem. For Steiner tree, we use the natural choice that the matching cost of two requests $\req_1,\req_2$ is simply the distance between them in the graph, henceforth denoted by $d(\req_1,\req_2)$.

Fix any prediction $\preds$ and any input request sequence $\reqs=(\req_1,\dots,\req_{|\reqs|})$. Let $\err = (\Delta,D)$ be some error for the input $\reqs$ and the prediction $\preds$. We denote by $\mreqs[\err]\subseteq \reqs$ and $\mpreds[\err]\subseteq \preds$ the subsets of matched input requests and matched predicted requests, respectively, in the definition of $\err$. Slightly abusing notation, we use $\err$ as the matching between $\mreqs[\err]$ and $\mpreds[\err]$ such that for every $\req\in \mreqs[\err]$, we use $\err(\req)$ to denote its matched predicted request in $\mpreds[\err]$. Similarly, for a subset of requests $\reqs' \subseteq \mreqs[\err]$, we define $\err(\reqs') \subseteq \mpreds[\err]$ to be the set of predicted requests matched to requests in $\reqs'$. In the other direction, we use $\err(\pred)$ to denote the request in $\mreqs[\err]$ matched to the predicted request $\pred \in \mpreds[\err]$ (operating similarly for sets of predicted requests). Note that by the definition of $(\Delta,D)$-error, it holds that $\Delta = \pa{\reqs\backslash\mreqs[\err]}+\pa{\preds\backslash \mpreds[\err]}$, and that $D=\sum_{\req \in \mreqs[\err]} d(\req, \err(\req))$.

The error guarantee for Steiner tree is formally stated in Theorem \ref{thm:RST_Competitiveness}.

\begin{thm}
	\label{thm:RST_Competitiveness}
	For every error $\err = (\Delta,D)$, Algorithm \ref{alg:FW_Algorithm} for Steiner tree with predictions upholds
	\[ 
	    \ALG \le O\pr{ \log\pr{\min\pc{\pa{\reqs},\Delta}+2} } \cdot \OPT +O(1)\cdot D.   
	\]
\end{thm}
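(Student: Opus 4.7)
The plan is to apply \cref{lem:FW_BoundingFrameworkCost} with a carefully chosen major iteration $i$, and bound the two resulting terms $\offc_{i-1}$ and $\max\{\ON_{m-1},\ON_m\}$ using the error structure. Set $\Delta' := \pa{\preds\setminus \mpreds[\err]}$, so $\Delta = \pa{\reqs\setminus \mreqs[\err]} + \Delta'$. First, I would exhibit a Steiner tree showing that the optimal solution leaving at most $\Delta'$ predicted requests unsatisfied has cost at most $\OPT+D$: take $\OPT$ (which already spans $\rho$ together with every $\req\in \mreqs[\err]\subseteq \reqs$) and glue on the shortest paths realizing the matching $\err$, of total length exactly $D$, so that every $\pred\in \mpreds[\err]$ becomes connected to $\rho$. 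Given this witness, let $i$ be the smallest major iteration with $\offc_i \ge \OPT+D$ (if no such iteration exists, take $i$ to be the last major iteration, in which case the suffix is trivially bounded because $\onc$ never exceeded $2(\OPT+D)$). The doubling of $\offc$ across major iterations then yields $\offc_{i-1}<\OPT+D$, so $\offc_{i-1}=O(\OPT+D)$.

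Next, I would use \cref{lem:FW_Partial} to argue that the offline augmentation at iteration $i$ satisfies ``most'' of the prediction. Since $c(\Partial(\preds,\Delta'))\le 3\gamma(\OPT+D)\le 3\gamma\offc_i$, the value $u$ chosen at iteration $i$ satisfies $u\le \Delta'$, and hence $\Partial(\preds,u)$ leaves at most $2\gamma u \le 2\gamma\Delta'$ predicted terminals disconnected from $\rho$. I then partition the requests of every subsequent phase into \emph{good} (matched requests $\req\in \mreqs[\err]$ whose partner $\err(\req)$ was in fact satisfied by $\Partial(\preds,u)$) and \emph{bad} (everything else). Across all later phases, the bad requests number at most $\pa{\reqs\setminus\mreqs[\err]}+2\gamma\Delta' = O(\Delta)$, and trivially at most $|\reqs|$. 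For a good request $\req$ in any phase $j$, the solution $S$ seen by the phase's restarted $\OST$ instance (which treats every edge of $S$ as having cost $0$) already contains a path from $\err(\req)$ to $\rho$, so the greedy $\OST$ connects $\req$ at cost at most $d(\req,\err(\req))$; summing over all good requests in the phase gives at most $D$. For the bad requests in phase $j$, applying subset-competitiveness of $\OST$ (\cref{lem:ST_Greedy_SubsetCompetitive}) to their subset yields cost at most $O\!\left(\log(\min\{|\reqs|,\Delta\}+2)\right)\cdot \OPT$. Hence $\ON_j \le O\!\left(\log(\min\{|\reqs|,\Delta\}+2)\right)\cdot \OPT + O(D)$ for every $j$, and in particular for $j\in\{m-1,m\}$.

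Plugging these two bounds into \cref{lem:FW_BoundingFrameworkCost} yields the claimed inequality. The main obstacle I anticipate is the per-phase good/bad accounting: I have to verify that the solution $S$ is monotone across the algorithm's run (so that the $\Partial$ solution bought at iteration $i$ is still available for free to every later $\OST$ instance), and that invoking the greedy $\OST$ in an environment augmented with zero-cost edges can only decrease its cost on every subset of requests, so that subset-competitiveness can still be applied against the global $\OPT$ in later phases. The matching-augmentation observation $c(\Partial(\preds,\Delta'))\le 3\gamma(\OPT+D)$ is the single point where the metric component $D$ of the error enters the bound, so setting up that augmented Steiner tree correctly is the crux of the argument.
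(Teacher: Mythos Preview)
Your proposal is correct and follows essentially the same route as the paper: apply \cref{lem:FW_BoundingFrameworkCost} at a well-chosen major iteration, bound the prefix via the augmented witness tree $\OPT$ plus the matching paths (giving $\offc_{i-1}\le \OPT+D$), and bound each late phase by splitting its requests into ``good'' (matched to an already-satisfied predicted terminal, cost $\le D$ by greedy plus triangle inequality) and ``bad'' (at most $O(\Delta)$ of them, handled by subset competitiveness of $\OST$). The only difference is cosmetic: the paper selects $i$ as the first major iteration at which the algorithm's variable $u$ drops to at most $\pa{\preds}-\pa{\mreqs[\err]}=\Delta'$, whereas you select $i$ as the first major iteration with $\offc_i\ge \OPT+D$; either choice delivers both needed facts ($\offc_{i-1}\le \OPT+D$ and $u\le \Delta'$ at iteration $i$), and the remainder of the argument is identical.
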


Henceforth, we fix the error $\err$, and drop $\err$ from $\mreqs[\err]$ and $\mpreds[\err]$. We define $k = \pa{\mreqs[\err]}$ the number of matched requests in $\err$, and fix $i$ to be the major iteration in which $u$ is first assigned a value which is at most $\pa{\preds} - k$ in Line \ref{line:FW_DefiningHatK} of Algorithm \ref{alg:FW_Algorithm}.

\begin{proof}[Proof of Theorem \ref{thm:RST_Competitiveness}]
    Having used the framework of Section \ref{sec:framework}, and observing that $i$ is a major iteration, we can apply Lemma \ref{lem:FW_BoundingFrameworkCost}:
    \begin{enumerate}
        \item The cost of the first $i$ iterations is at most $O(1)\cdot \OPT + (12\gamma + 2)\offc_{i-1}$. Using Proposition \ref{prop:RST_OPTAndDistAtLeastPartial}, this expression is at most $O(1)\cdot \OPT + O(1)\cdot D$.
        
        \item The cost of the last iterations, from $i+1$ onwards, is at most $O(1)\cdot \max\pc{\OST_{m-1},\OST_{m}}$. Using Lemma \ref{lem:RST_BoundingExpensiveON}, this is at most $O\pr{\log\pr{\min\pc{|\reqs|,\Delta}}} \cdot\OPT + O(1) \cdot D$.
    \end{enumerate}
    
    Overall, we have that 
    \[
        \ALG \le O\pr{\log\pr{\min\pc{|\reqs|,\Delta}}} \cdot\OPT + O(1) \cdot D.  
    \]
\end{proof}

To complete the proof of Theorem \ref{thm:RST_Competitiveness}, it remains to prove Proposition \ref{prop:RST_OPTAndDistAtLeastPartial} and Lemma \ref{lem:RST_BoundingExpensiveON}.

\begin{prop}
    \label{prop:RST_OPTAndDistAtLeastPartial}
	$\offc_{i-1}\le \OPT + D  $.
\end{prop}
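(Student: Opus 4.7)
I want to bound $\offc_{i-1}$ by exploiting the definition of $i$ together with Lemma~\ref{lem:FW_Partial} and an explicit construction of a cheap ``prize-collecting'' solution built from $\OPT$ and the matching witnessing the error $(\Delta,D)$.

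\textbf{Step 1: Construct a good competitor for $\Partial$.} Let $u^{\star} := |\preds| - k$, the number of predicted requests left unmatched by $\err$. I will construct a feasible Steiner tree $\tilde S$ that satisfies all predicted requests in $\mpreds$ (so it leaves at most $u^{\star}$ predicted requests unsatisfied) as follows: start from an optimal Steiner tree for $\reqs$ (cost $\OPT$), and for each matched predicted request $\pred \in \mpreds$ add the shortest path from $\pred$ to its match $\err(\pred) \in \mreqs \subseteq \reqs$, which already lies in the optimal tree. The augmented subgraph connects every $\pred \in \mpreds$ to the root $\rho$, and its cost is at most $\OPT + \sum_{\pred \in \mpreds} d(\pred,\err(\pred)) = \OPT + D$. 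Hence the minimum-cost solution $S^{\star}$ satisfying all but $u^{\star}$ predicted requests obeys $c(S^{\star}) \le \OPT + D$.

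\textbf{Step 2: Exploit minimality of $i$.} If $i$ is the first major iteration, then $\offc_{i-1}=0$ and the claim is immediate. Otherwise, let $i'$ be the major iteration immediately preceding $i$; since $\offc$ changes only at major iterations, $\offc_{i-1} = \offc_{i'}$. By the choice of $i$, at iteration $i'$ the minimum value $u_{i'}$ such that $c(\Partial(\preds,u_{i'})) \le 3\gamma \offc_{i'}$ satisfies $u_{i'} > u^{\star}$. By minimality, taking the strictly smaller $u^{\star}$ violates the budget:
\[
 c(\Partial(\preds,u^{\star})) \;>\; 3\gamma \offc_{i'}.
\]

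\textbf{Step 3: Combine with Lemma~\ref{lem:FW_Partial}.} Applying part~2 of Lemma~\ref{lem:FW_Partial} to $\Partial(\preds,u^{\star})$ gives $c(\Partial(\preds,u^{\star})) \le 3\gamma \cdot c(S^{\star}) \le 3\gamma(\OPT+D)$. Chaining with the previous inequality yields $3\gamma \offc_{i'} < 3\gamma(\OPT+D)$, i.e.\ $\offc_{i-1} = \offc_{i'} \le \OPT + D$, as desired.

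\textbf{Main obstacle.} The only nontrivial step is Step~1, the construction of $\tilde S$: one has to notice that the $(\Delta,D)$-matching between $\mreqs$ and $\mpreds$ is exactly what allows one to ``graft'' the matched predicted terminals onto $\OPT$ at total additional cost $D$, turning an optimum for $\reqs$ into a feasible solution for (almost all of) $\preds$. Steps~2 and~3 are then essentially bookkeeping on the definition of the major iteration $i$ and a direct invocation of the bicriteria guarantee of \Partial.
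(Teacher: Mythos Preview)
Your proof is correct and follows essentially the same approach as the paper: both arguments construct a solution for $\mpreds$ by grafting the matching onto $\OPT$ at additional cost $D$, then use the minimality of $u$ at the preceding major iteration together with Lemma~\ref{lem:FW_Partial} to bound $\offc_{i-1}$. The only difference is the order of presentation (you construct the competitor first, the paper invokes minimality first), which is immaterial.
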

\begin{proof}
    Consider the iteration $i'<i$ in which $\offc_{i-1}$
	was set, i.e. the first iteration such that $\offc_{i'}=\offc_{i-1}$. If $i'=0$, then $\offc_{i-1}$ is the initial value of $\offc$, which is $0$, and the proposition trivially holds. Henceforth assume that $i'>0$. 
	
	From the definition of $i$, we have that $u_{i'}>|\preds|-k$, where  $u_{i'}$ is the value of the variable $u$ after iteration $i'$. Thus, $c(\Partial(\preds,|\preds|-k))>3\gamma \offc_{i^{\prime}}=3\gamma \offc_{i-1}$. From Lemma \ref{lem:FW_Partial}, this implies that the least expensive solution which satisfies at least $k$ requests from $\preds$ costs at least $\offc_{i-1}$. Since $k = |\mreqs|=|\mpreds|$, we have that $\OPT_{\mpreds}$ is such a solution, where $ \OPT_{\mpreds}  $ is the optimal solution for $\mpreds$. Therefore, $c(\OPT_{\mpreds}) \ge \offc_{i-1}$.
	
	Now, observe that $\OPT$ can be augmented to a solution for $\mpreds$ through connecting $\req$ to $\err(\req)$ for every $\req \in \mreqs$, at a cost of at most $D$. Thus, $c(\OPT) + D \ge c(\OPT_{\mpreds}) \ge \offc_{i-1}$, which completes the proof.
    %
\end{proof}


Define the phases of the algorithm with respect to the major iteration $i$ as in Section \ref{sec:framework}, where the cost of the online instance $\OST$ in the $j$'th phase is denoted by $\OST_j$. We again number the phases from $0$ to $m$.

\begin{lem}
	\label{lem:RST_BoundingExpensiveON}
	It holds that 
	\[
	    \max\pc{\OST_{m-1}, \OST_{m}}\le O\pr{\log \pr{\min\pc{|\reqs|,\Delta}+2}} \cdot\OPT + O(1) \cdot D.
	\]
\end{lem}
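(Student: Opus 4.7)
The plan is to establish the bound for each phase $j \ge 0$ individually, which in particular covers $j = m-1$ and $j = m$. Fix such a $j$ and let $\reqs^{\text{phase}}_j$ denote the set of requests served during phase $j$. I would derive two separate upper bounds on $\OST_j$ and take their minimum.

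\emph{First bound.} I would argue $\OST_j \le O\pr{\log\pr{\pa{\reqs}+2}} \cdot \OPT$ directly from subset-competitiveness. The phase-$j$ online instance starts with all edges of the current $S$ priced at zero, so by the greedy rule each connection cost can only be smaller than in a fresh $\OST$ run on $\reqs^{\text{phase}}_j$ starting from $\pc{\rho}$. Hence $\OST_j \le \OST\pr{\reqs^{\text{phase}}_j}$, which by Lemma \ref{lem:ST_Greedy_SubsetCompetitive} applied to $\reqs^{\text{phase}}_j \subseteq \reqs$ is at most $O\pr{\log\pr{\pa{\reqs^{\text{phase}}_j}+2}}\cdot \OPT \le O\pr{\log\pr{\pa{\reqs}+2}}\cdot \OPT$.

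\emph{Second bound.} For $\OST_j \le O\pr{\log\pr{\Delta+2}} \cdot \OPT + O(1) \cdot D$, I would partition $\reqs^{\text{phase}}_j$ into three sets: $A$, the matched requests $\req \in \mreqs \cap \reqs^{\text{phase}}_j$ whose matched prediction $\err(\req)$ is already connected to $\rho$ in $S$ at the start of phase $j$; $B = \pr{\mreqs \cap \reqs^{\text{phase}}_j}\setminus A$, the remaining matched phase requests; and $C = \reqs^{\text{phase}}_j \setminus \mreqs$, the unmatched ones. Decomposing the greedy cost per request: for each $\req \in A$, the greedy connection cost is at most $d\pr{\req, \err(\req)}$ since $\err(\req)$ lies in the current tree, so the total $A$-contribution is at most $\sum_{\req\in \mreqs} d\pr{\req, \err(\req)} = D$. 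The same pointwise monotonicity of greedy (more free vertices $\Rightarrow$ smaller connection cost) then bounds the $B \cup C$-contribution by the cost of a from-scratch $\OST$ run on $B \cup C$ alone, which by subset-competitiveness is at most $O\pr{\log\pr{\pa{B\cup C}+2}} \cdot \OPT$.

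\emph{Bounding $\pa{B \cup C}$.} Since $\offc$ at least doubles between major iterations and $u$ is chosen as the minimum value whose partial solution fits the budget $3\gamma\offc$, the sequence $u_{i_0^\star}, u_{i_1^\star}, \ldots$ is non-increasing. Hence for all $j \ge 0$, $u_{i_j^\star} \le u_i \le \pa{\preds} - k = \pa{\preds \setminus \mpreds} \le \Delta$. By Lemma \ref{lem:FW_Partial}, the partial solution added at iteration $i_j^\star$ leaves at most $2\gamma u_{i_j^\star} \le 2\gamma\Delta$ predicted requests unsatisfied; since each $\req \in B$ witnesses an unsatisfied $\err(\req) \in \mpreds$, we get $\pa{B} \le 2\gamma\Delta$. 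Together with $\pa{C} \le \pa{\reqs \setminus \mreqs} \le \Delta$, this yields $\pa{B\cup C} \le (2\gamma+1)\Delta = O(\Delta)$, completing the second bound.

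Taking the minimum of the two bounds gives $\OST_j \le O\pr{\log\pr{\min\pc{\pa{\reqs},\Delta}+2}}\cdot \OPT + O(1)\cdot D$ for every $j \ge 0$, and applying this to $j\in \pc{m-1,m}$ bounds their maximum as required. The step I expect to be the main obstacle is justifying the per-request decomposition of $\OST_j$ across the interleaved sets $A$ and $B\cup C$: one must argue carefully that the greedy algorithm's cost on the $B\cup C$ requests inside the full run is no larger than on $B\cup C$ alone, which rests on the pointwise monotonicity of greedy under enlargement of the initial free-vertex set (with $\rho \in S$ throughout, since the offline partial solution added at each major iteration is rooted at $\rho$).
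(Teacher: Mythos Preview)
Your proposal is correct and follows essentially the same approach as the paper: your partition $(A,B,C)$ is exactly the paper's $(Q_{1,1},Q_{1,2},Q_2)$, with the $A$-part bounded by $D$ via the triangle inequality in the zero-cost-augmented metric and the $B\cup C$-part bounded through subset-competitiveness after showing $\pa{B\cup C}=O(\Delta)$. The only cosmetic differences are that you present the $\pa{\reqs}$-bound as a separate first case (the paper folds it in by noting $\pa{Q_{1,2}\cup Q_2}\le\pa{\reqs}$) and that you reference the $\Partial$ solution at iteration $i_j^\star$ (needing the monotonicity of $u$) rather than the one at iteration $i$, which the paper uses directly.
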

\begin{proof}
    Let $j\in \pc{m-1,m}$, and let $Q\subseteq \reqs$ be the subsequence of requests considered in $\OST_j$. Denote by $\preds'\subseteq \preds$ the subset of predicted requests that are satisfied by the $\Partial$ solution bought in iteration $i$. 
    
    The online algorithm $\OST_j$ operates in a modified metric space, in which the cost of a set of edges $S_0$ is set to $0$; denote by $d'$ the metric subject to these modified costs. 
    
    We partition $Q$ into the following subsequences:
    \begin{enumerate}
        \item $Q_1 = Q \cap \mreqs$. We further partition $Q_1$ into the following sets:
        \begin{enumerate}
            \item $Q_{1,1} = \pc{r\in Q_1 | \err(r) \in \preds'}$
            \item $Q_{1,2} = \pc{r\in Q_1 | \err(r) \notin  \preds\backslash \preds'}$
        \end{enumerate}
        
        \item $Q_2 =  Q\backslash \mreqs$.
    \end{enumerate}
    
    Observe that $\OST_j(Q) = \OST_j(Q_{1,1}) + \OST_j(Q_{1,2} \cup Q_2)$. We now bound each component separately.
    
    \paragraph{Bounding $\OST_j(Q_{1,1})$.} Since $\err(Q_{1,1}) \subseteq \preds'$, and $\preds'$ is already served by the $\Partial$ solution bought in iteration $i$, it holds for each request $\req\in Q_{1,1}$ that $d'(\err(\req),\rho)=0$. Through triangle inequality, it thus holds that $d'(\req,\rho) \le d'(\req,\err(\req))$, and thus the cost of connecting $\req$ incurred by $\OST_j$ is at most $d'(\req,\err(\req))$. Therefore:
    \[ \OST_j(Q_{1,1}) \le \sum_{\req\in Q_{1,1}} d'(\req,\err(\req)) \le \sum_{\req\in Q_{1,1}} d(\req,\err(\req)) \le D \] 
    
    \paragraph{Bounding $\OST_j(Q_{1,2} \cup Q_2)$.} Using Property~\ref{asmp:FW_SubsetCompetitive} (subset competitiveness), we have that 
    
	\[ 
	    \OST_j(Q_{1,2} \cup Q_2) \le O\pr{\log \pr{\pa{Q_{1,2}\cup Q_2}+2}} \cdot \OPT'  
	\]
	where $\OPT'$ is the optimal solution to $Q$ with the cost of $S_0$ set to $0$. Clearly, $\OPT' \le \OPT$. 
	
    Now, observe that $\pa{Q_{1,2}} \le \pa{\preds\backslash \preds'}$. From the definition of the major iteration $i$, and from Lemma \ref{lem:FW_Partial}, it holds that $\pa{\preds\backslash \preds'} \le 2\gamma_{\ST} \cdot (|\preds| - k) = 2\gamma_{\ST} \pa{\preds\backslash \mpreds}$. As for $Q_2$, it holds that $Q_2 \le \pa{\reqs \backslash \mreqs}$. These facts imply that 
    \[ 
        \pa{Q_{1,2}\cup Q_2} \le 2\gamma_{\ST} \Delta
    \]
    In addition, it clearly holds that $\pa{Q_{1,2}\cup Q_2} \le \pa{\reqs}$. Therefore, it holds that 
    \[  
        \OST_j(Q_{1,2}\cup Q_2) \le O\pr{\log \pr{\min\pc{\pa{\reqs},\Delta}+2}}\cdot \OPT
    \]
    
    Combining this with the previous bound for $\OST_j(Q_{1,1})$, we obtain
    \[ 
        \OST_j(Q) \le O\pr{\log \pr{\min\pc{\pa{\reqs},\Delta}+2}}\cdot \OPT + D
    \] 
    completing the proof.
\end{proof}

\eat{
\begin{proof}[Proof of Theorem \ref{thm:RST_Competitiveness}]
    We choose iteration $i$, which is a major iteration, and apply Lemma \ref{lem:FW_BoundingFrameworkCost}.
    
    Combining Proposition \ref{prop:RST_OPTAndDistAtLeastPartial} with the first item of Lemma \ref{lem:FW_BoundingFrameworkCost}, we have that the cost of the first $i$ iterations is at most $O(1)\cdot \OPT + O(1)\cdot D$. 
    
    Combining Proposition \ref{lem:RST_BoundingExpensiveON} with the second item of Lemma \ref{lem:FW_BoundingFrameworkCost}, we have that the cost of iterations $i+1$ onwards is at most 
    \[ 
        O\pr{\log\pr{\min\pc{|\reqs|,\Delta}}} \cdot\OPT + O(1) \cdot D
    \]
    
    Summing all costs of the algorithm, we have that     
    \[  
        \ALG \le O\pr{\log\pr{\min\pc{|\reqs|,\Delta}}} \cdot\OPT + O(1) \cdot D 
    \]
    as required.
\end{proof}
}

\section{Online Steiner Forest with Predictions}
\label{sec:SF}
In the online Steiner forest problem, we are given an undirected graph $G=(V,E)$ with non-negative edge costs $c:E\to \mathbb{R}^+$. The requests are pairs of vertices from $V$ called {\em terminal pairs}. The algorithm maintains a subgraph $S$ of $G$ in which each terminal pair is connected. Upon the arrival of a new terminal pair, the algorithm must connect the two vertices by adding edges to $S$. 
The goal is to minimize the cost of the solution $c(S)$. We are also given a set of vertex pairs offline as the prediction $\preds$ for the terminal pairs.

In this section, we apply the framework in Algorithm \ref{alg:FW_Algorithm} to Steiner forest, using the two components: an offline approximation for prize-collecting Steiner forest, and an online algorithm for Steiner forest.

{\bf Prize-collecting offline algorithm.}
The following algorithm is due to Hajiaghayi and Jain~\cite{HajiaghayiJ06}.
\begin{lem}[Hajiaghayi and Jain~\cite{HajiaghayiJ06}]
	\label{fact:SF_PrizeCollecting}
	There exists a $\gamma_{\SF}$-approximation algorithm $\PCSF$ for prize-collecting Steiner forest, where $\gamma_{\SF}=2.54$.
\end{lem}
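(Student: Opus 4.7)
The plan is to establish this bound via a primal-dual moat-growing scheme that generalizes the classical Goemans--Williamson algorithm for Steiner forest to the prize-collecting setting. First I would write the LP relaxation of prize-collecting Steiner forest: variables $x_e \in [0,1]$ for each edge $e$ and $z_i \in [0,1]$ for each terminal pair $i$ (indicating that we pay the penalty of $i$), with cut constraints $\sum_{e \in \delta(S)} x_e + z_i \ge 1$ for every terminal pair $i$ and every set $S$ separating the two endpoints of $i$, and objective $\sum_e c(e)\,x_e + \sum_i \pi_i\,z_i$. The dual has moat variables $y_{S,i}$ for each such $(S,i)$, subject to edge-packing constraints $\sum_{(S,i):\,e \in \delta(S)} y_{S,i} \le c(e)$ and penalty-packing constraints $\sum_S y_{S,i} \le \pi_i$ for each pair $i$.

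Next I would describe the algorithm itself: initialize each terminal endpoint as an active singleton moat and grow all active moats uniformly in the dual; whenever an edge between two distinct moats becomes tight, merge those moats; deactivate a moat once it contains both endpoints of some pair (the standard Steiner-forest rule), and in addition ``freeze'' a pair once the dual attributable to that pair reaches $\pi_i$ (the new prize-collecting rule). After the growth phase, perform reverse-delete pruning on the tight edges and choose a subset of pairs to penalize via a carefully designed combinatorial (or LP-rounding) rule. This last step is exactly where Hajiaghayi--Jain depart from the plain Goemans--Williamson template: the decision of ``connect vs.\ penalize'' cannot be made greedily on the fly without losing a factor of~$2$, so it is postponed and resolved globally at the end.

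The hard part, and the reason the ratio is $2.54$ rather than the clean $2$ of the non-prize-collecting case, is to simultaneously bound the cost of the kept edges against (twice) the dual and the paid penalties against a constant times their LP charge, without double-counting moats whose pairs are both ultimately penalized. I expect the main obstacle to be the amortized charging: one must show via a factor-revealing analysis, as in Hajiaghayi--Jain, that no combination of merge-versus-deactivate events can push the total cost-plus-penalty above $2.54$ times the dual. Since the theorem is quoted directly from Hajiaghayi and Jain, my proof plan collapses to invoking their algorithm and analysis as a black box; the only items I would explicitly verify for use in Algorithm~\ref{alg:FW_Algorithm} are that the output is a valid (possibly disconnected) forest serving the chosen set of pairs, and that the $2.54$ bound holds against the integer optimum (which it does, since it is proved against the LP optimum, itself a lower bound on $\OPT$).
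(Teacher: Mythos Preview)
Your proposal is correct and matches the paper's treatment: the paper gives no proof at all for this lemma, simply citing Hajiaghayi and Jain~\cite{HajiaghayiJ06} as a black box, which is exactly what your plan ultimately reduces to. Your additional sketch of the primal-dual moat-growing scheme and factor-revealing analysis is accurate context for what lies inside that black box, but none of it is needed (or provided) in the paper itself.
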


{\bf Subset-competitive online algorithm.}
We now describe a version of the algorithm of Berman and Coulston \cite{BermanC97} for online Steiner forest, denoted $\OSF$. The precise statement of the Berman-Coulston algorithm is important, as we want a variant which is subset-competitive, which does not hold for all variants; for example, the variant of Umboh~\cite{Umboh2015} is not subset-competitive. The crucial difference between the version we describe below and that of \cite{Umboh2015} is that when we connect a pair and observe that the balls around the two terminals intersect with existing balls, we connect each terminal to an {\em arbitrary} intersecting ball, while the algorithm in \cite{Umboh2015} connects to {\em all} intersecting balls. This may lead to a high cost at a single step, which invalidates subset competitiveness.

The $\OSF$ algorithm maintains a collection of duals\footnote{Although these collections of disjoint balls actually correspond to feasible LP dual solutions for Steiner forest, and hence we are calling them duals, we will give a self-contained proof of our theorem that does not need the LP formulation or LP duality.} $\left\{D_j\right\}_{j=-\infty} ^\infty$, where each $D_j$ is a set of disjoint balls of radius exactly $2^{j-2}$ in the metric space induced by the graph. For each ball $B\in D_j$, there exists a terminal pair $(s,t)$ such that either $B$ is centered at $s$ and $t$ is outside $B$, or vice-versa. For the sake of analysis, we also define, for each $D_j$, a meta-graph $M_j$ whose vertices correspond to the balls in $D_j$.

The $\OSF$ algorithm is given in Algorithm \ref{alg:SF_BermanCoulston}. Here, $P_{s,t}^F$ denotes the shortest path connecting vertices $s$ and $t$ in $G$, where edges in $F$ have cost $0$. Also, $B(v,r)$ denotes the (open) ball of radius $r$ centered at vertex $v$.

\LinesNumbered \RestyleAlgo{boxruled}\DontPrintSemicolon

\begin{algorithm}
	\algorithmfootnote{\footnotemark[\value{footnote}] Two (open) balls $B(v_1,r_1)$ and $B(v_2,r_2)$ are said to intersect if the shortest path from $v_1$ to $v_2$ costs less than $r_1 + r_2$.}
	\begin{footnotesize}
		\caption{\label{alg:SF_BermanCoulston}Berman and Coulston's Algorithm for Online Steiner Forest ($\OSF$)}

		\Input{$G=(V,E)$ -- the input graph}

		\BlankLine

		\Initialization{}{
			Initialize $D_{j}\gets \emptyset$ for every $j\in\mathbb{Z}$.

			Initialize $F$ to be the empty graph.
		}

		\BlankLine

		\EFn(\tcp*[h]{Upon the next terminal pair $(s,t)$ in the input}){\UponRequest{$(s,t)$}}{

			Set $F\gets F\cup P_{s,t}^F$ and $\ell\gets\lfloor\log c(P_{s,t}^F)\rfloor$.\label{line:SF_BC_ConnectRequest}

			\If{$B(v,2^{\ell -2})$ does not intersect\footnotemark{} any ball in $D_\ell$ for some $v\in \{s,t\}$}
			{
				Add $B(v,2^{\ell -2})$ to $D_\ell$ \tcp*[h]{This adds a vertex to $M_j$}
			}
			\Else(\tcp*[h]{both $B(s,2^{\ell -2})$ and $B(t,2^{\ell -2})$ intersected balls in $D_\ell$})
			{
				Denote by $B_1 = B(a_1,2^{\ell-2})$ some ball that intersected $B(s,2^{\ell -2})$, and
			by $B_2 = B(a_2,2^{\ell-2})$ some ball that intersected $B(t,2^{\ell -2})$.

			\tcp*[h]{Add an edge between $B_1$ and $B_2$ in $M_j$}

			\label{line:SF_BC_ExtraConnection}Set $F \gets F \cup P_{a_1,s}^F \cup P_{a_2,t}^F$.

			}

		}
	\end{footnotesize}
\end{algorithm}

The $O(\log \pa{\reqs})$-competitiveness of Algorithm \ref{alg:SF_BermanCoulston} is due to Berman and Coulston~\cite{BermanC97}. We now show that it also upholds Property \ref{asmp:FW_SubsetCompetitive}, i.e. it is subset competitive. This fact is stated in the following lemma, the proof of which is presented in Subsection \ref{subsec:SF_BC_SubsetCompetitiveness}.

\begin{lem}
	\label{lem:SF_BermanCoulstonSubsetCompetitive}
	Algorithm \ref{alg:SF_BermanCoulston} is subset-competitive as described in Property \ref{asmp:FW_SubsetCompetitive}. That is, for input $\reqs$ and every subset $\reqs'\subseteq \reqs$, it holds that
	\[
	    \OSF(\reqs') \le O \pr{\log \pr{\pa{\reqs'}+2}} \cdot \OPT. 
	\] 
\end{lem}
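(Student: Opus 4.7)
The plan is to compare the $\reqs$-run of Algorithm~\ref{alg:SF_BermanCoulston} against a hypothetical execution of the same algorithm on $\reqs'$ alone, processing the requests of $\reqs'$ in the chronological order in which they appear in $\reqs$. Let $\ALG'$ denote the cost of the hypothetical run. Berman and Coulston's standard $O(\log(|\reqs'|+2))$-competitiveness gives $\ALG' \le O(\log(|\reqs'|+2)) \cdot \OPT' \le O(\log(|\reqs'|+2)) \cdot \OPT$, so it suffices to prove $\OSF(\reqs') \le O(1) \cdot \ALG'$.

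Couple the two runs and fix any $(s,t) \in \reqs'$; let $F$ and $F'$ be the forests held by the $\reqs$-run and $\reqs'$-only run immediately before $(s,t)$ is processed. By inspection of Algorithm~\ref{alg:SF_BermanCoulston}, the cost incurred on a single request is $\Theta(2^\ell)$ for $\ell = \lfloor \log c(P_{s,t}^F) \rfloor$: the upper bound $3 \cdot 2^\ell$ comes from the shortest-path purchase (at most $2^{\ell+1}$) plus the else-branch contribution (at most $2 \cdot 2^{\ell-1}$, since each intersecting ball has radius $2^{\ell-2}$ so the two centers lie within $2^{\ell-1}$), and the lower bound $2^\ell$ from the shortest-path purchase alone. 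Thus it suffices to prove the key \emph{monotonicity claim}: $c(P_{s,t}^F) \le c(P_{s,t}^{F'})$ for every $(s,t) \in \reqs'$. This gives $\ell_{\reqs} \le \ell_{\reqs'}$, whence the per-request $\reqs$-run cost is at most $3 \cdot 2^{\ell_{\reqs}} \le 3 \cdot 2^{\ell_{\reqs'}}$, within a constant factor of the $\reqs'$-only cost; summing over all $(s,t) \in \reqs'$ yields $\OSF(\reqs') \le O(1) \cdot \ALG'$.

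I would prove the monotonicity claim by induction on the sequence of $\reqs'$ requests, maintaining the stronger all-pairs invariant $c(P_{u,v}^F) \le c(P_{u,v}^{F'})$ for every $u,v \in V$. The base case is immediate since $F' = \emptyset$ at the start. In the inductive step, processing a non-$\reqs'$ request only augments $F$ while leaving $F'$ unchanged, trivially preserving the invariant. The delicate case is when a common $(s,t) \in \reqs'$ is processed in both runs: each run augments its forest with its own shortest $s$-$t$ path, and the invariant must be re-established. Given any shortest $u$-$v$ walk $W$ in the updated $F'$-metric, one constructs a $u$-$v$ walk in the updated $F$-metric of no greater cost by replacing each edge $e=(x,y) \in W$ with an appropriate $(x,y)$ sub-walk: edges $e \in F'$ are handled by the inductive hypothesis, which yields a zero-cost replacement; edges $e$ on $P_{s,t}^{F'} \setminus F'$ are routed through $s$ and $t$ (connected in the updated $F$-metric via $P_{s,t}^F$), using the inductive hypothesis to control the cost of reaching $s$ from $x$ and $t$ from $y$.

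The main obstacle is closing this path-replacement argument cleanly: for edges $e$ on $P_{s,t}^{F'} \setminus F'$, the inductive hypothesis a priori only controls $c(P_{x,s}^F)$ and $c(P_{y,t}^F)$ in terms of sub-paths of $P_{s,t}^{F'}$ in the old $F'$-metric, rather than giving a truly free replacement, so some careful bookkeeping is needed to verify that the total constructed walk cost does not exceed $c(W)$. Once the invariant is established, the monotonicity claim follows, and combining it with Berman and Coulston's bound on $\ALG'$ completes the proof.
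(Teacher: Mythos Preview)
Your reduction to a $\reqs'$-only run and the per-request monotonicity $c(P_{s,t}^F)\le c(P_{s,t}^{F'})$ do not hold; the stronger all-pairs invariant you set up for the induction is also false. Here is a small counterexample. Take vertices $s,t,x,y,w$ with edges $sx,xt$ of weight $0.9$ each, edges $sy,yt$ of weight $1$ each, and edge $tw$ of weight $\epsilon$. Let $\reqs=((s,y),(s,t),(x,w))$ and $\reqs'=((s,t),(x,w))$. In the $\reqs$-run, after $(s,y)$ we have $F=\{sy\}$; then for $(s,t)$ the shortest $F$-path is $s,y,t$ at cost $1$, giving $F=\{sy,yt\}$. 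In the $\reqs'$-run, for $(s,t)$ the shortest path is $s,x,t$ at cost $1.8$, giving $F'=\{sx,xt\}$. At this point $c(P_{x,t}^F)=0.9>0=c(P_{x,t}^{F'})$, so the all-pairs invariant is already violated; and for the next $\reqs'$-request $(x,w)$ one gets $c(P_{x,w}^F)=0.9+\epsilon>\epsilon=c(P_{x,w}^{F'})$, so the per-request claim fails too. The ``careful bookkeeping'' you flag cannot rescue the argument, because the pointwise inequality you are trying to maintain is simply not true once the two runs route the same pair through \emph{different intermediate vertices}. (You also never account for the else-branch additions $P_{a_1',s}^{F'}\cup P_{a_2',t}^{F'}$ to $F'$ in the inductive step, but the failure above occurs already at the main shortest-path augmentation.)

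The paper's proof takes a completely different route that avoids comparing against a $\reqs'$-only execution. It works directly with the dual ball collections $D_j$ maintained by the algorithm on the full input $\reqs$: the meta-graph $M_j$ is acyclic, so the number $n_j$ of level-$j$ iterations satisfies $n_j\le 2|D_j|$; and disjointness of the balls in $D_j$ gives $\OPT\ge|D_j|\cdot 2^{j-2}$ for every $j$. The subset $\reqs'$ enters only through a window argument: with $m^+=\lfloor\log\OSF(\reqs')\rfloor$ and $m^-=m^+-\lceil\log|\reqs'|\rceil-2$, one shows $\OSF(\reqs')\le O(1)\sum_{j=m^-}^{m^+} n'_j\,2^{j-2}$, where $n'_j\le n_j$. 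Averaging the per-level lower bound on $\OPT$ over the $O(\log|\reqs'|)$ levels in this window then yields the lemma.
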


The algorithm for the online Steiner forest with predictions is now constructed by simply plugging in $\OSF$ and $\PCSF$ into the framework.

\paragraph{Steiner forest with predictions.} Using $\OSF$ and $\PCSF$ as the online and prize-collecting components of the framework in Section \ref{sec:framework}, we obtain the desired algorithm for Steiner forest with predictions. It remains to analyze its guarantees given any error $\err = (\Delta,D)$.

First, we must define the matching cost of requests and predictions for the error to be fully defined. We naturally define the matching cost of two terminal pairs to be the minimum distance one must move the two terminals of the first pair in the metric space in order for each terminal in the first pair to coincide with a distinct terminal in the second pair. Formally, the matching cost of request $(s_1,t_1)$ and request $(s_2,t_2)$ is
\[ 
    \min\pc{d\pr{s_{1},s_{2}} + d\pr{t_{1},t_{2}} , d\pr{s_{1},t_{2}} + d\pr{s_{2},t_{1}}}. 
\]

Fix any prediction $\preds$ and any input request sequence $\reqs=(\req_1,\dots,\req_{|\reqs|})$. As in the Steiner tree case, let $\err = (\Delta,D)$ be some error for the input $\reqs$ and the prediction $\preds$. We again denote by $\mreqs[\err]\subseteq \reqs$ and $\mpreds[\err]\subseteq \preds$ the subsets of matched input requests and matched predicted requests, respectively, in the definition of $\err$. As before, we use $\err$ to map from a request in $\mreqs$ to its match in $\mpreds$ and vice-versa.


The main result of this section is the following theorem for Steiner forest with predictions.

\begin{thm}
	\label{thm:RSF_Competitiveness}For every error $\err = (\Delta, D)$, Algorithm
	\ref{alg:FW_Algorithm} for Steiner forest with predictions has the guarantee
	\[
	    \ALG \le O\pr{ \log \pr{\min\pc{\reqs,\Delta}+2}}  \cdot \OPT +O(1)\cdot D.   
	\]
\end{thm}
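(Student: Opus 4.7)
The plan is to mirror the proof of \Cref{thm:RST_Competitiveness}, adapting each step to the pair structure of Steiner forest and to the Berman--Coulston online algorithm $\OSF$. I would fix the error $\err = (\Delta, D)$, set $k = |\mreqs|$, and let $i$ be the first major iteration in which the value $u$ computed on Line~\ref{line:FW_DefiningHatK} of \Cref{alg:FW_Algorithm} is at most $|\preds| - k$. As in the Steiner tree case, \Cref{lem:FW_BoundingFrameworkCost} then reduces the proof to establishing two bounds: a prefix bound $\offc_{i-1} \le \OPT + D$, and a suffix bound
\[ \max\pc{\OSF_{m-1},\, \OSF_m} \le O\pr{\log\pr{\min\pc{|\reqs|,\Delta}+2}}\cdot \OPT + O(1)\cdot D. \]

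For the prefix bound (the analog of \Cref{prop:RST_OPTAndDistAtLeastPartial}), I would augment $\OPT$ into a feasible solution for $\mpreds$ by processing each matched pair $(s,t) \in \mreqs$ with $\err((s,t)) = (s',t')$ and buying the cheaper of the two connecting options, either paths $s \leftrightarrow s',\ t \leftrightarrow t'$ or paths $s \leftrightarrow t',\ t \leftrightarrow s'$. The cost added per pair is exactly the matching cost of $(s,t)$ with $(s',t')$, so the total augmentation cost is $D$, producing a solution for $\mpreds$ of cost at most $\OPT + D$. Since $|\mpreds| = k$, this solution satisfies at least $k$ predicted requests; the definition of $i$ together with \Cref{lem:FW_Partial} --- exactly as in the proof of \Cref{prop:RST_OPTAndDistAtLeastPartial} --- then yields $\offc_{i-1} \le \OPT + D$.

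For the suffix bound, fix $j \in \{m-1, m\}$, let $Q \subseteq \reqs$ be the subsequence processed by $\OSF_j$, let $\preds' \subseteq \preds$ be the set of predicted pairs satisfied by the $\Partial$-solution bought at iteration $i$, and partition $Q = Q_{1,1} \sqcup Q_{1,2} \sqcup Q_2$ exactly as in the Steiner tree proof. In the modified metric where the $\Partial$-bought edges have cost $0$, every pair $(s',t') \in \preds'$ is at distance $0$, so for $(s,t) \in Q_{1,1}$ with $\err((s,t)) = (s',t')$ the shortest-path cost satisfies $c(P_{s,t}^F) \le \min\pc{d(s,s') + d(t,t'),\, d(s,t') + d(t,s')}$, which is exactly the matching cost. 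Inspecting \Cref{alg:SF_BermanCoulston}, BC charges $c(P_{s,t}^F)$ in Line~\ref{line:SF_BC_ConnectRequest} plus at most two detours $P^F_{a_1, s}, P^F_{a_2, t}$ each of cost $< 2^{\ell-1} \le c(P_{s,t}^F)/2$ in Line~\ref{line:SF_BC_ExtraConnection}, for a per-request total of at most $2\, c(P_{s,t}^F)$; summing over $Q_{1,1}$ bounds its contribution by $O(D)$. For $Q_{1,2} \cup Q_2$, a monotonicity observation --- removing prior requests can only enlarge $c(P^F_{\cdot,\cdot})$ and hence can only increase BC's per-request charges --- implies that the contribution of $Q_{1,2} \cup Q_2$ in the full run on $Q$ is at most the cost of running $\OSF_j$ on $Q_{1,2} \cup Q_2$ in isolation; subset-competitiveness (\Cref{lem:SF_BermanCoulstonSubsetCompetitive}) then yields $O(\log(|Q_{1,2} \cup Q_2|+2))\cdot \OPT$, while the definition of $i$ together with \Cref{lem:FW_Partial} gives $|Q_{1,2} \cup Q_2| \le 2\gamma_{\SF}\,\Delta$ as well as the trivial bound $|Q_{1,2}\cup Q_2| \le |\reqs|$.

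The main obstacle I foresee is the $Q_{1,1}$ argument for BC: unlike greedy Steiner tree, BC may add two extra ball-joining paths beyond the shortest $s$--$t$ path, so the $2\,c(P^F_{s,t})$ per-request bound must be carefully justified using the radius bookkeeping that controls Line~\ref{line:SF_BC_ExtraConnection}. The monotonicity needed for $Q_{1,2} \cup Q_2$ is analogous to the greedy case and should be routine once BC's per-request cost is understood in terms of the current-$F$ shortest path. Once both pieces are in place, combining the prefix and suffix bounds through \Cref{lem:FW_BoundingFrameworkCost} gives the theorem.
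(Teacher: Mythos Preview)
Your plan matches the paper's proof: the same choice of major iteration $i$, the same prefix bound by augmenting $\OPT$ into a solution for $\mpreds$ at additional cost $D$, the same $Q_{1,1}/Q_{1,2}/Q_2$ partition for the suffix, and the per-request bound $\le 2\,c(P^F_{s,t})$ for $Q_{1,1}$ is precisely the ``factor of two'' the paper highlights as the only change from Steiner tree.

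The one place to correct is your handling of $Q_{1,2}\cup Q_2$. You pass through a monotonicity claim --- that removing prior requests can only increase BC's per-request charges --- in order to compare the full run on $Q$ to an isolated run on $Q_{1,2}\cup Q_2$, and then apply \Cref{lem:SF_BermanCoulstonSubsetCompetitive} to the isolated run. This detour is both unnecessary and not obviously true. It is unnecessary because subset-competitiveness (Property~\ref{asmp:FW_SubsetCompetitive} and \Cref{lem:SF_BermanCoulstonSubsetCompetitive}) is stated as a bound on $\OSF(\reqs')$ \emph{within the full run on $\reqs$}; you may apply it directly with $\reqs = Q$ and $\reqs' = Q_{1,2}\cup Q_2$, exactly as the paper does for Steiner tree. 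It is not obviously true because, unlike greedy Steiner tree, BC's forest $F$ in the isolated run need not be a subset of its forest in the full run: once the $Q_{1,1}$ requests are removed, earlier $Q_{1,2}\cup Q_2$ requests may buy different (longer) paths, so the forests are incomparable and $c(P^F_{s,t})$ is not monotone across the two runs; the ball sets $D_\ell$ governing Line~\ref{line:SF_BC_ExtraConnection} change as well. Dropping the monotonicity step and invoking subset-competitiveness directly closes this gap and brings your argument in line with the paper.
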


As in the Steiner tree case, we henceforth fix a $(\Delta,D)$-error  $\err$, and drop $\err$ from $\mreqs[\err]$ and $\mpreds[\err]$. We define $k = |\mreqs|$. Now, fix $i$ to be the major iteration in which $u$ is first assigned a value which is at most $|\preds|-k$ in Line \ref{line:FW_DefiningHatK} of Algorithm \ref{alg:FW_Algorithm}. 

\begin{proof}[Proof of Theorem \ref{thm:RSF_Competitiveness}]
    The proof of Theorem \ref{thm:RSF_Competitiveness} follows the same lines as Theorem \ref{thm:RST_Competitiveness} for Steiner tree.
    
    Having used the framework of Section \ref{sec:framework}, and observing that $i$ is a major iteration, we can apply Lemma \ref{lem:FW_BoundingFrameworkCost}:
    \begin{enumerate}
        \item The cost of the first $i$ iterations is at most $O(1)\cdot \OPT + (12\gamma + 2)\offc_{i-1}$. Using Proposition \ref{prop:RSF_OPTAndDistAtLeastPartial}, this expression is at most $O(1)\cdot \OPT + O(1)\cdot D$.
        
        \item The cost of the last iterations, from $i+1$ onwards, is at most $O(1)\cdot \max\pc{\OST_{m-1},\OST_{m}}$. Using Lemma \ref{lem:RSF_BoundingExpensiveON}, this is at most $O\pr{\log\pr{\min\pc{|\reqs|,\Delta}}} \cdot\OPT + O(1) \cdot D$.
    \end{enumerate}
    
    Overall, we have that 
    \[
        \ALG \le O\pr{\log\pr{\min\pc{|\reqs|,\Delta}}} \cdot\OPT + O(1) \cdot D.  
    \]
\end{proof}

It remains to prove Proposition \ref{prop:RSF_OPTAndDistAtLeastPartial} and Lemma \ref{lem:RSF_BoundingExpensiveON}.

\begin{prop}[Analog of Proposition \ref{prop:RST_OPTAndDistAtLeastPartial}]
	\label{prop:RSF_OPTAndDistAtLeastPartial}	
	$\offc_{i-1} \le \OPT + D $.
\end{prop}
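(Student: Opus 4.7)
The plan is to essentially mirror the proof of Proposition \ref{prop:RST_OPTAndDistAtLeastPartial}, with the only nontrivial modification being the argument that $\OPT$ can be augmented into a feasible Steiner forest for $\mpreds$ by paying at most $D$ extra cost. The overall structure of the argument is independent of the underlying problem: it only uses the definition of major iteration $i$, Lemma \ref{lem:FW_Partial}, and a bound relating the optimum for the predicted instance to that of the actual instance plus the matching cost.

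First I would consider the iteration $i' < i$ at which $\offc_{i-1}$ was last set, i.e.\ the first iteration with $\offc_{i'} = \offc_{i-1}$. The case $i' = 0$ is immediate since then $\offc_{i-1} = 0$. For $i' > 0$, by the minimality in the definition of $i$ we get $u_{i'} > |\preds| - k$, hence $c(\Partial(\preds, |\preds|-k)) > 3\gamma_{\SF}\, \offc_{i'} = 3\gamma_{\SF}\,\offc_{i-1}$. Applying Lemma \ref{lem:FW_Partial} to the input $(\preds, |\preds|-k)$, this inequality rules out the existence of a Steiner forest solution for a subset of $\preds$ of size $\ge k$ whose cost is less than $\offc_{i-1}$. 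Since $|\mpreds| = k$, the optimal Steiner forest $\OPT_{\mpreds}$ for $\mpreds$ is itself such a solution, so $c(\OPT_{\mpreds}) \ge \offc_{i-1}$.

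The key step now is to show $c(\OPT) + D \ge c(\OPT_{\mpreds})$, which is the only place the Steiner forest structure enters. I would construct an explicit augmentation of $\OPT$ that connects every matched predicted pair. For each matched pair $(s_1,t_1) \in \mreqs$ corresponding to $(s_2,t_2) \in \mpreds$ under $\err$, the matching cost is $\min\{d(s_1,s_2)+d(t_1,t_2),\, d(s_1,t_2)+d(s_2,t_1)\}$; fix an orientation achieving the minimum, say $d(s_1,s_2)+d(t_1,t_2)$. Then adding the shortest $s_1$-$s_2$ path and the shortest $t_1$-$t_2$ path to $\OPT$ suffices to connect $s_2$ to $t_2$, because $s_1$ and $t_1$ are already connected in $\OPT$. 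Summing over all pairs in $\mreqs$, the total augmentation cost is at most $\sum_{\req \in \mreqs} d_{\mathrm{match}}(\req, \err(\req)) = D$. The resulting edge set is a feasible Steiner forest for $\mpreds$, so $c(\OPT_{\mpreds}) \le c(\OPT) + D$.

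Combining the two inequalities yields $\offc_{i-1} \le c(\OPT_{\mpreds}) \le \OPT + D$, as required. The only genuine deviation from the Steiner tree proof is the pair-based augmentation argument above, and there is no real obstacle there since the definition of matching cost for terminal pairs was chosen precisely so that this augmentation works; the rest is essentially identical bookkeeping with $\gamma$ replaced by $\gamma_{\SF}$.
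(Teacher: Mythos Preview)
Your proposal is correct and follows exactly the same approach as the paper, which simply declares the proof ``identical to the proof of Proposition~\ref{prop:RST_OPTAndDistAtLeastPartial}''. You have additionally spelled out the pair-based augmentation step (adding the two shortest paths realizing the matching cost for each terminal pair), which the paper leaves implicit; this is the natural and intended adaptation.
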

\begin{proof}
	Identical to the proof of Proposition \ref{prop:RST_OPTAndDistAtLeastPartial} for Steiner tree with predictions.
\end{proof}


\begin{lem}[Analog of Lemma \ref{lem:RST_BoundingExpensiveON}]
    \label{lem:RSF_BoundingExpensiveON}
	It holds that 
	\[
	    \max\pc{\OSF_{m-1},\OSF_{m}}\le O\pr{\log \pr{\min\pc{\pa{\reqs}, \Delta}+2}}\cdot \OPT + O(1)\cdot D.
	\]
\end{lem}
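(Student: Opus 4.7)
The plan is to imitate the proof of the Steiner-tree analog, Lemma \ref{lem:RST_BoundingExpensiveON}, with straightforward adjustments for terminal pairs. Fix $j \in \pc{m-1, m}$, let $Q \subseteq \reqs$ be the subsequence of requests processed by $\OSF_j$, and let $\preds' \subseteq \preds$ be the predicted pairs satisfied by the \Partial{} solution bought at major iteration $i$. Let $S_0$ be the accumulated solution at the start of phase $j$ (so $S_0$ contains that \Partial{} solution), and let $d'$ denote the modified metric in which edges of $S_0$ have cost $0$. I partition $Q = Q_{1,1} \cup Q_{1,2} \cup Q_2$ exactly as in the Steiner-tree proof: $Q_{1,1}$ consists of the requests in $Q \cap \mreqs$ whose $\err$-match lies in $\preds'$, $Q_{1,2}$ of those whose $\err$-match lies in $\preds \setminus \preds'$, and $Q_2 = Q \setminus \mreqs$. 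The goal is to bound $\OSF_j(Q_{1,1})$ and $\OSF_j(Q_{1,2} \cup Q_2)$ separately and sum.

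To bound $\OSF_j(Q_{1,1})$, I will charge each request to its matched prediction. Given $\req = (s,t) \in Q_{1,1}$ with match $\err(\req) = (\hat s, \hat t)$ (w.l.o.g.\ with the matching cost realized by the pairing $s \leftrightarrow \hat s$ and $t \leftrightarrow \hat t$), the key observation is that $\hat s$ and $\hat t$ lie in a common connected component of $S_0$, so $d'(\hat s, \hat t) = 0$. The triangle inequality then yields $d'(s,t) \le d(s,\hat s) + d(t,\hat t)$. What remains is to show that the per-step cost charged by Algorithm \ref{alg:SF_BermanCoulston} to $\req$ is $O(d'(s,t))$: the direct connection costs $c(P_{s,t}^F) \le d'(s,t)$ (since the $F$-shortest-path is at most the $S_0$-shortest-path), and the extra-connection step, when triggered, contributes $c(P_{a_1,s}^F) + c(P_{a_2,t}^F) < 2 \cdot 2^{\ell-1} = 2^\ell \le c(P_{s,t}^F)$ because each of the balls $B(a_i, 2^{\ell-2})$ intersects $B(s, 2^{\ell-2})$ or $B(t, 2^{\ell-2})$. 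Summing over $Q_{1,1}$ yields $\OSF_j(Q_{1,1}) \le O(1) \cdot D$.

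To bound $\OSF_j(Q_{1,2} \cup Q_2)$, I invoke subset competitiveness (Lemma \ref{lem:SF_BermanCoulstonSubsetCompetitive}) for $\OSF_j$ within phase $j$, obtaining $\OSF_j(Q_{1,2} \cup Q_2) \le O(\log(\pa{Q_{1,2} \cup Q_2} + 2)) \cdot \OPT$ (the optimum in the modified metric is at most $\OPT$). The choice of major iteration $i$ combined with Lemma \ref{lem:FW_Partial} gives $\pa{\preds \setminus \preds'} \le 2\gamma_{\SF} \pa{\preds \setminus \mpreds}$, so $\pa{Q_{1,2}} \le 2\gamma_{\SF} \pa{\preds \setminus \mpreds}$, and trivially $\pa{Q_2} \le \pa{\reqs \setminus \mreqs}$; hence $\pa{Q_{1,2} \cup Q_2} = O(\Delta)$ and also $\pa{Q_{1,2} \cup Q_2} \le \pa{\reqs}$. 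Combining with the $Q_{1,1}$ bound yields the lemma.

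The hard part, relative to the Steiner-tree case, is the per-request cost analysis for $Q_{1,1}$: the Berman--Coulston algorithm can attach extra pieces beyond the direct path $P_{s,t}^F$, so one must use the radius bound from intersecting balls to ensure the per-step cost stays $O(c(P_{s,t}^F))$. This bound crucially uses the fact that our variant of Algorithm \ref{alg:SF_BermanCoulston} connects each endpoint to a \emph{single} intersecting ball rather than to \emph{all} of them (as in the variant of \cite{Umboh2015}) -- the same design choice that enabled subset competitiveness earlier in this section.
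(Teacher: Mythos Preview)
Your proposal is correct and follows exactly the approach the paper takes: the paper's own proof is a one-line remark that the argument is ``nearly identical'' to Lemma~\ref{lem:RST_BoundingExpensiveON}, with the sole difference that $\OSF$ may pay up to twice the connection distance per request. Your write-up expands precisely this remark, and your bound on the extra-connection step via the intersecting-balls radius is exactly the source of that factor of two.
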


\begin{proof}
	The proof is nearly identical to that of Lemma \ref{lem:RST_BoundingExpensiveON} for Steiner tree. The only thing to note explicitly is that for Steiner tree, the cost of serving a request is at most its distance to any previous request, while $\OSF$ can pay at most twice that distance.
\end{proof}

\subsection{Subset Competitiveness of \texorpdfstring{$\OSF$}{ON-SF}.}
\label{subsec:SF_BC_SubsetCompetitiveness}

This subsection is devoted to proving that the online algorithm $\OSF$ is subset competitive, as stated in the following lemma.

\begin{lem}
	At any point during the algorithm, $M_j$ is acyclic.
\end{lem}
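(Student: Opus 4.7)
The plan is induction, built around a single invariant that captures what edge-additions in $M_j$ accomplish: \emph{for any level $j$ and any two balls $B(c,2^{j-2})$, $B(c',2^{j-2})$ lying in the same connected component of $M_j$, the shortest-path distance $d_F(c,c')$ in $G$ (with edges of $F$ given cost zero) equals $0$}. Granting this invariant, the lemma drops out: suppose for contradiction that while processing a pair $(s,t)$ with $\ell=j$, the two balls $B_1=B(a_1,2^{j-2})$ and $B_2=B(a_2,2^{j-2})$ about to be linked already lie in the same $M_j$-component. Let $F_{\text{old}}$ denote $F$ at the very start of this call. The invariant applied to $F_{\text{old}}$ gives $d_{F_{\text{old}}}(a_1,a_2)=0$, while the ball-intersection conditions $B(s,2^{j-2})\cap B_1\neq\emptyset$ and $B(t,2^{j-2})\cap B_2\neq\emptyset$ give $d(s,a_1),d(t,a_2)<2^{j-1}$, hence $d_{F_{\text{old}}}(s,a_1),d_{F_{\text{old}}}(a_2,t)<2^{j-1}$. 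Summing along $s\to a_1\to a_2\to t$ yields $d_{F_{\text{old}}}(s,t)<2^j$; but $\ell=\lfloor\log c(P_{s,t}^F)\rfloor=j$, computed with respect to $F_{\text{old}}$, forces $d_{F_{\text{old}}}(s,t)\ge 2^j$, a contradiction.

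To prove the invariant I would induct on the total number of steps performed so far. The only events that can change $M_j$ or $F$ are calls to the request handler. Inside such a call, line~\ref{line:SF_BC_ConnectRequest} only enlarges $F$, and since $F$-distance can only decrease under enlargement, existing ``$d_F(c,c')=0$'' relations are preserved. If the algorithm then takes the ``else'' branch and adds an edge between $B_1$ and $B_2$ to $M_j$, it augments $F$ with $P_{a_1,s}^F$ and $P_{a_2,t}^F$ (together with the $P_{s,t}^F$ already added on line~\ref{line:SF_BC_ConnectRequest}); hence after the step $d_F(a_1,s)=d_F(s,t)=d_F(t,a_2)=0$, forcing $d_F(a_1,a_2)=0$. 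Combined with the inductive hypothesis applied separately to the two prior components containing $B_1$ and $B_2$, this extends the invariant to the newly merged component.

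The main point requiring care will be the temporal bookkeeping of $F$: the bound $c(P_{s,t}^F)\ge 2^j$ is computed with respect to $F$ \emph{before} line~\ref{line:SF_BC_ConnectRequest} updates it, whereas the invariant ``$d_F(a_1,a_2)=0$'' comes from paths added in \emph{prior} calls to the handler. Cleanly separating these two snapshots (namely $F_{\text{old}}$ versus $F$ after line~\ref{line:SF_BC_ConnectRequest}) is precisely what lets the inequalities $d_{F_{\text{old}}}(s,t)<2^j$ and $d_{F_{\text{old}}}(s,t)\ge 2^j$ collide to give the contradiction; once this is set up, both the inductive step and the final argument are essentially immediate from the description of Algorithm~\ref{alg:SF_BermanCoulston}.
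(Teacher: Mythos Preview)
Your proposal is correct and follows essentially the same approach as the paper: both argue that if $B_1,B_2$ already lie in the same $M_j$-component then their centers $a_1,a_2$ are connected in $F$ at the start of the iteration, and then use $d(s,a_1),d(t,a_2)<2^{j-1}$ to contradict $\ell=j$. The only difference is cosmetic: you explicitly isolate and prove by induction the invariant ``centers in the same $M_j$-component have $d_F$-distance $0$'', whereas the paper simply asserts this via ``Due to Line~\ref{line:SF_BC_ExtraConnection}, $a_1$ and $a_2$ were connected in $F$''.
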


\begin{proof}
	Suppose for contradiction that at some iteration in the algorithm, an edge is created between two balls $B_1,B_2 \in D_j$ that were already connected in $M_j$ at the beginning of that iteration, thus closing a cycle in $M_j$. Denote by $a_1,a_2$ the centers of $B_1,B_2$ respectively. Due to Line \ref{line:SF_BC_ExtraConnection}, we have that at the beginning of the iteration, $a_1$ and $a_2$ were connected in $F$. 
	
	Let $(s,t)$ be the terminal pair considered in that iteration, and without loss of generality assume that $B(s,2^{j-2})$ intersected $B_1$ and $B(t,2^{j-2})$ intersected $B_2$. Then $d(s,a_1) < 2^{j-1}$ and $d(t,a_2) < 2^{j-1}$. But then since $a_1$ and $a_2$ are connected in $F$, we have that $c\left(P_{s,t}^F\right) < 2^j$ at the beginning of the iteration, in contradiction to the algorithm creating an edge in $M_j$ in that iteration.
\end{proof}

Denote by $n_j$ the number of iterations in which the variable $\ell$ was set to $j$. We also denote by $n^\prime _j$ the number of such iterations on the requests of $\reqs'$.
\begin{cor}
	\label{cor:SF_BC_IterationsBoundedByDual}
	At the end of the algorithm, for every $j$, we have that $ n_j \le 2\cdot |D_j| $.
\end{cor}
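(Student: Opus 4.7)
The plan is to charge each iteration in which $\ell$ was set to $j$ to a distinct object of $M_j$ (either a vertex or an edge), and then use the preceding lemma (that $M_j$ is acyclic) to control the total.

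First I would observe that the algorithm's \textsc{if}/\textsc{else} block in Algorithm \ref{alg:SF_BermanCoulston} gives a clean dichotomy for every iteration with $\ell = j$: either the \textsc{if} branch fires, in which case a new ball is added to $D_j$ (i.e., a new vertex is created in $M_j$), or the \textsc{else} branch fires, in which case an edge is added in $M_j$ between the two balls $B_1, B_2 \in D_j$ intersected by $B(s,2^{\ell-2})$ and $B(t,2^{\ell-2})$. These two cases are mutually exclusive and exhaustive, so if we let $V_j$ and $E_j$ denote the final vertex set and edge set of $M_j$, we obtain the identity
\[
    n_j \;=\; |V_j| + |E_j|.
\]

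Next I would invoke the previous lemma, which guarantees that $M_j$ is acyclic at every point of the execution, and in particular at the end. Since $|V_j| = |D_j|$ by construction, and a forest on $|D_j|$ vertices has at most $|D_j| - 1$ edges (using the convention that the empty forest has $0$ edges, so the corollary is trivial when $|D_j| = 0$), we conclude
\[
    n_j \;=\; |V_j| + |E_j| \;\le\; |D_j| + (|D_j| - 1) \;\le\; 2|D_j|,
\]
which is the claimed bound. I do not anticipate any real obstacle here: the whole content of the corollary lies in the acyclicity lemma and the bookkeeping observation that each $\ell=j$ iteration produces exactly one new vertex or one new edge in $M_j$.
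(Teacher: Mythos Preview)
Your proof is correct and follows essentially the same approach as the paper's: each iteration with $\ell=j$ contributes either a vertex or an edge to $M_j$, and acyclicity of $M_j$ bounds the number of edges by the number of vertices $|D_j|$. The paper's proof is just a two-sentence version of what you wrote.
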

\begin{proof}
	In each iteration counted in $n_j$, either a node or an edge is added to $M_j$. Since $M_j$ is acyclic, the number of its edges is at most the number of its nodes.
\end{proof}

Define $m^+:= \floor{\log \pr{\OSF(\reqs')}}$ and  $m^- := m^+ -\ceil{\log \pa{\reqs'}} - 2$.
\begin{lem}
	\label{lem:SF_BC_CostBoundedByIterations}
	$\OSF(\reqs') \le 24 \cdot \sum_{j=m^-}^{m^+} n'_j \cdot 2^{j-2}$.
\end{lem}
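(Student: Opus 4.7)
The plan is to bound $\OSF(\reqs')$ by decomposing it as a sum of per-iteration costs indexed by the value $j$ assigned to $\ell$, bounding the per-iteration cost geometrically in $j$, and then truncating the sum to $[m^-, m^+]$ by showing that both tails are negligible---empty for $j > m^+$ and at most a small constant fraction of $\OSF(\reqs')$ for $j < m^-$. The main technical obstacle will be calibrating the lower tail so that the final leading constant works out to at most $24$.

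For the per-iteration bound, an iteration of Algorithm~\ref{alg:SF_BermanCoulston} that sets $\ell = j$ first pays $c(P_{s,t}^F) < 2^{j+1}$ by the definition $\ell = \lfloor \log c(P_{s,t}^F) \rfloor$. If the \textbf{else} branch also fires, then by the intersection footnote we have $d(s, a_1) < 2 \cdot 2^{j-2} = 2^{j-1}$ and $d(t, a_2) < 2^{j-1}$, so the extra cost is at most $c(P_{a_1, s}^F) + c(P_{a_2, t}^F) \le d(s, a_1) + d(t, a_2) < 2 \cdot 2^{j-1}$. The total cost of one iteration with $\ell = j$ is therefore at most $2^{j+1} + 2^j = 3 \cdot 2^j = 12 \cdot 2^{j-2}$. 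For the upper truncation, I would observe that $n'_j = 0$ for every $j > m^+$: an iteration with $\ell = j$ pays at least $c(P_{s,t}^F) \ge 2^j$, so if $j \ge m^+ + 1$ then this single iteration already costs at least $2^{m^+ + 1} > \OSF(\reqs')$, contradicting the fact that this iteration is counted into $\OSF(\reqs')$.

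The lower truncation is the crux. There are at most $|\reqs'|$ iterations on $\reqs'$ in total, and each with $\ell = j < m^-$ contributes at most $12 \cdot 2^{m^- - 3}$, so the total cost from iterations with $\ell < m^-$ is at most $12 \cdot |\reqs'| \cdot 2^{m^- - 3}$. Plugging in $m^- = m^+ - \lceil \log |\reqs'| \rceil - 2$ and using $|\reqs'| \le 2^{\lceil \log |\reqs'| \rceil}$ yields $|\reqs'| \cdot 2^{m^- - 3} \le 2^{m^+ - 5}$, so this total is at most $\tfrac{3}{8} \cdot 2^{m^+} \le \tfrac{3}{8} \OSF(\reqs')$. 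Combining with the per-iteration bound applied to the surviving range $j \in [m^-, m^+]$, I obtain
\[
\OSF(\reqs') \;\le\; \tfrac{3}{8} \OSF(\reqs') + 12 \sum_{j = m^-}^{m^+} n'_j \cdot 2^{j-2},
\]
which rearranges to $\OSF(\reqs') \le \tfrac{96}{5} \sum_{j = m^-}^{m^+} n'_j \cdot 2^{j-2} < 24 \sum_{j = m^-}^{m^+} n'_j \cdot 2^{j-2}$, as required.
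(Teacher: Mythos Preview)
Your proof is correct and follows essentially the same approach as the paper: bound the per-iteration cost by $3\cdot 2^j$, observe the upper tail $j>m^+$ is empty, bound the lower tail $j<m^-$ by a constant fraction of $\OSF(\reqs')$ using $\sum_j n'_j\le|\reqs'|$ and the definition of $m^-$, and rearrange. The only difference is cosmetic---the paper rounds the lower-tail fraction up to $\tfrac12$ (yielding the factor $24$ directly), whereas you track the sharper $\tfrac38$ and obtain $\tfrac{96}{5}<24$.
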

\begin{proof}
	First, observe that $\OSF(\reqs') \le 3 \cdot \sum_{j=-\infty}^\infty n'_j \cdot 2^j$. This is since an iteration counted in $n'_j$ costs at most $2^{j+1}$ in adding a path in Line \ref{line:SF_BC_ConnectRequest}, and at most $2^j$ in connecting conflicting nodes in Line \ref{line:SF_BC_ExtraConnection}.
	
	Now, observe that $n'_j = 0$ for every $j>m^+$. This is since an iteration counted in $n'_j$ for such a $j$ would cost strictly more than $\OSF (\reqs')$, in contradiction. Additionally, it holds that 
	\[ 
	\sum_{j=-\infty}^{m^- -1} n'_j \cdot 2^j \le |\reqs| \cdot 2^{m^- -1} \le \pa{\reqs'} \cdot \frac{\OSF(\reqs')}{8 \cdot \pa{\reqs'}} = \frac{\OSF(\reqs')}{8}. \]

	Combining those observations, we get that 
	\begin{align*}
		\OSF(\reqs') 
		\le 3 \cdot \sum_{j=-\infty}^{m^- -1} n'_j \cdot 2^j + 3 \cdot \sum_{j=m^-}^{m^+} n'_j \cdot 2^j 
		\le \frac{\OSF(\reqs')}{2}  + 3 \cdot \sum_{j=m^-}^{m^+} n'_j \cdot 2^j,
	\end{align*}
	and thus $\OSF(\reqs') \le 6 \cdot \sum_{j=m^-}^{m^+} n'_j \cdot 2^j = 24 \cdot \sum_{j=m^-}^{m^+} n'_j \cdot 2^{j-2}$.
\end{proof}

\begin{prop}
	\label{prop:SF_BC_DisjointBallsLowerBoundOptimum}
	For every $j\in \mathbb{Z}$, we have $\OPT \ge |D_j| \cdot 2^{j-2}$.
\end{prop}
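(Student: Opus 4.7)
My plan is a standard ball-packing/moat argument: I will exhibit, for each disjoint ball $B \in D_j$, a "chunk" of $\OPT$ of cost at least $2^{j-2}$, and argue that these chunks (across different balls of $D_j$) lie in essentially disjoint regions of the graph, so that their costs sum without overlap.

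Fix $j$ and set $r = 2^{j-2}$. Consider any ball $B = B(v,r) \in D_j$. By the invariant maintained by Algorithm~\ref{alg:SF_BermanCoulston}, $B$ is centered at a terminal $v$ of some terminal pair $(v,w)$ with $w \notin B$; in particular $d(v,w) \ge r$. Since $\OPT$ is a feasible Steiner forest, it contains a $v$-to-$w$ path $P_B$. As $P_B$ starts inside $B$ and ends outside $B$, it must cross the boundary of $B$. I would then consider the initial portion of $P_B$ starting at $v$ and running up to (and including) the first edge that leaves $\bar B$. Because the endpoint of this portion lies at distance at least $r$ from $v$, the total cost of this initial portion is at least $r$. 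This gives the per-ball lower bound on a piece of $\OPT$.

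The key step is to argue that these per-ball pieces are essentially edge-disjoint across the balls of $D_j$, so that their costs add. The reason is that any two balls $B(v,r), B(v',r)$ in $D_j$ satisfy $d(v,v') \ge 2r$ (by the non-intersection definition in the footnote), which implies $\bar B(v,r) \cap \bar B(v',r)$ is contained in the common boundary $\{x : d(x,v)=d(x,v')=r\}$. To make this rigorous at the edge level, I will conceptually subdivide every edge of $G$ at each point where it crosses the boundary $\partial B$ of any ball $B \in D_j$; this preserves $c(\OPT)$ exactly and ensures that every (sub)edge of $\OPT$ lies entirely inside at most one ball of $D_j$. The initial portion of $P_B$ becomes a subgraph $E_B$ of the subdivided $\OPT$ with all edges lying in $\bar B$, of total cost at least $r$, and by the argument above the $E_B$'s for different $B \in D_j$ are edge-disjoint.

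Summing over $B \in D_j$ then yields
\[
\OPT \;\ge\; \sum_{B \in D_j} c(E_B) \;\ge\; |D_j| \cdot r \;=\; |D_j| \cdot 2^{j-2},
\]
as required. The only genuine subtlety is the disjointness argument, since a priori a single edge of $\OPT$ could be traversed by the $v$-to-$w$ paths for several pairs simultaneously; the edge-subdivision step is precisely what converts the metric disjointness of the balls into edge-disjointness of the pieces being summed, and I expect this to be the one place where the write-up needs care.
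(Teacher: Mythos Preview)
Your proposal is correct and follows essentially the same moat/ball-packing argument as the paper: each ball in $D_j$ is centered at a terminal that must be connected to a partner outside the ball, so any feasible solution spends at least the radius $2^{j-2}$ inside that ball, and disjointness of the balls lets these costs sum. The paper's proof is a one-paragraph informal version of this; your edge-subdivision step simply makes rigorous what the paper asserts in the sentence ``since the balls are disjoint, these costs sum up in any feasible solution.''
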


\begin{proof}
	Each $D_j$ is a set of disjoint balls of radius $2^{j-2}$. Each such ball $B$ is centered at some terminal $a$ that must be connected outside the ball. Therefore, any feasible solution must contain edges (or parts of edges) in $B$ in order to connect $a$ outwards, at a total cost that is at least the radius of the ball. Since the balls are disjoint, these costs sum up in any feasible solution.
\end{proof}
\begin{proof}[Proof of Lemma \ref{lem:SF_BermanCoulstonSubsetCompetitive}]
	Using Proposition \ref{prop:SF_BC_DisjointBallsLowerBoundOptimum} for each $j\in[m^-,m^+]$ and averaging, we get that 
	\[ \OPT 
	\ge \frac{\sum_{j=m^-}^{m^+} |D_j| \cdot 2^{j-2}}{m^+ -m^- +1} 
	= \frac{\sum_{j=m^-}^{m^+} |D_j| \cdot 2^{j-2}}{\log |\reqs'| +3}. \]
	Using Lemma  \ref{lem:SF_BC_CostBoundedByIterations} and Corollary \ref{cor:SF_BC_IterationsBoundedByDual}, we have 
	\[ 
    	\OSF(\reqs') \le 24 \cdot \sum_{j=m^-}^{m^+} n_j ^\prime \cdot 2^{j-2} \le 
    	 48 \sum_{j=m^-}^{m^+} |D_j| \cdot 2^{j-2} \le O\pr{\log \pr{\pa{\reqs'}+2}}\cdot \OPT. 
	 \]
\end{proof}

\section{Online Facility Location with Predictions}
\label{sec:FL}

In the online facility location problem, we are given an undirected graph $G=(V,E)$ offline with cost $c_e$ for edge $e$ and facility opening cost $f_v$ for vertex $v$. Requests (called {\em clients}) arrive at the vertices of the graph, one per online step. The algorithm maintains a set of open facilities at a subset of the vertices of the graph. Upon the arrival of a client $\req$, the algorithm must connect the client to its closest open facility, at a connection cost equal to the shortest path in the graph between $\req$ and that facility. The algorithm is also allowed to open new facilities before making this connection; opening a new facility at a vertex $v$ adds a cost of $f_v$ to the solution. The goal is to minimize overall cost defined as the sum of opening costs of facilities and connection costs of clients. 

We would like to use the framework of Section~\ref{sec:framework}. Thus, we must describe the relevant components, i.e. a prize-collecting algorithm and a subset-competitive online algorithm.

{\bf Prize-collecting offline algorithm.}
The following lemma is due to Xu and Xu~\cite{DBLP:journals/jco/XuX09}.

\begin{lem}
    \label{fact:FL_PrizeCollecting}
	There exists a $\gamma_{\FL}$-approximation for prize-collecting facility location, where $\gamma_{\FL} = 1.8526$.
\end{lem}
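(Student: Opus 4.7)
The plan is to obtain the approximation through an LP-rounding argument that reduces prize-collecting facility location (PC-FL) to a (bifactor) approximation for ordinary uncapacitated facility location (UFL). The natural LP relaxation has variables $y_i\in[0,1]$ for opening facility $i$, $x_{ij}\in[0,1]$ for fractionally assigning client $j$ to facility $i$, and $z_j\in[0,1]$ for paying the penalty of client $j$, with the covering constraint $\sum_i x_{ij}+z_j\ge 1$ for every client $j$ and linking constraints $x_{ij}\le y_i$. The objective is $\sum_i f_i y_i+\sum_{i,j} c_{ij} x_{ij}+\sum_j \pi_j z_j$. First I would solve this LP to optimality and obtain a fractional solution $(x^\ast,y^\ast,z^\ast)$ of value $\OPT_{\text{LP}}\le \OPT$.

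The second step is threshold-based filtering. Fix a parameter $\tau\in(0,1)$, to be chosen later, and classify each client $j$ as a \emph{penalty client} if $z^\ast_j\ge\tau$ and as a \emph{connection client} otherwise. For penalty clients, the algorithm simply pays $\pi_j$, and since $\pi_j\le \pi_j\cdot z^\ast_j/\tau$, the total penalty paid is at most $(1/\tau)$ times the LP's penalty cost. For connection clients, $\sum_i x^\ast_{ij}>1-\tau$, so scaling $(x^\ast,y^\ast)$ on these clients by $1/(1-\tau)$ produces a feasible fractional solution to the residual UFL instance whose fractional cost has grown by at most this factor.

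The third step is to feed the scaled fractional solution to a UFL rounding algorithm with a bifactor $(\alpha_f,\alpha_c)$-guarantee, meaning that the integral solution costs at most $\alpha_f$ times the fractional facility cost plus $\alpha_c$ times the fractional connection cost. A Chudak--Shmoys/Byrka-style scaling-and-clustering rounding yields such a trade-off curve. Putting the three bounds together gives an overall approximation of the form
\[
\max\left\{\tfrac{\alpha_f}{1-\tau},\ \tfrac{\alpha_c}{1-\tau},\ \tfrac{1}{\tau}\right\}\cdot \OPT_{\text{LP}},
\]
and one optimizes over $\tau$ and over the bifactor parameter of the underlying UFL rounding to minimize this maximum. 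A careful choice of these two parameters yields the value $\gamma_{\FL}=1.8526$.

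The main obstacle I expect is the numerical tightness of the constant $1.8526$: the three penalty, facility, and connection terms must be balanced simultaneously, and the bifactor trade-off curve of the UFL subroutine is nonlinear, so the optimal $\tau$ is the solution of a small nonlinear system. One must also verify that the filtering is compatible with the clustering phase of the UFL rounding, so that clustered clients do not straddle the penalty/connection partition in a way that inflates the connection bound. Once this balance is established and the algebra verified, the bifactor UFL analysis transfers verbatim, and the stated $\gamma_{\FL}$-approximation follows.
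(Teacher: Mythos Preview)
The paper does not prove this lemma at all; it simply cites it as a known result due to Xu and Xu~\cite{DBLP:journals/jco/XuX09} and uses it as a black box. Your sketch, by contrast, outlines an actual argument, and the LP-relaxation plus threshold-filtering plus bifactor-UFL-rounding framework you describe is indeed the approach taken in that cited reference. In that sense you are supplying more than the paper does, and the high-level plan is sound.

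One caution on the numerics: the simple three-way balance $\max\{\alpha_f/(1-\tau),\alpha_c/(1-\tau),1/\tau\}$ that you write down does not directly yield $1.8526$ if one plugs in, say, the Byrka bifactor pair and equalizes terms. The Xu--Xu analysis is somewhat more refined: rather than a crude $1/(1-\tau)$ blow-up on both facility and connection terms, the rounding and clustering are carried out jointly with the penalty decision, and the facility, connection, and penalty contributions are charged against the LP in a correlated way (in particular, the connection-cost bound exploits structure of the filtered clients beyond the uniform scaling). Your acknowledgment that ``the three terms must be balanced simultaneously'' and that the bifactor curve is nonlinear points at exactly this, but the displayed formula as written would give a weaker constant. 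So the plan is right, but obtaining the precise $1.8526$ requires the tighter coupled analysis from the reference rather than the decoupled bound you state.
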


{\bf Subset-competitive online algorithm.}
%
%
Algorithm \ref{alg:FL_Fotakis} (we call it $\OFL$) is an $O(\log \pr{\pa{\reqs} + 2})$-competitive deterministic algorithm for online facility location due to Fotakis~\cite{Fotakis07}. 
Here, $d(u, v)$ denotes the shortest path distance between vertices $u$ and $v$, and for a subset of vertices $F$, we denote $d(F, v) := \min_{u\in F} d(u, v)$. For any value $d$, we use $d_+ := \max(d, 0)$.

$F\gets\emptyset$

\newcommand{\UpdatePotentials}{\textsc{UpdatePotentials}\xspace}
\newcommand{\ComputeNewPotentials}{\textsc{ComputeNewPotentials}\xspace}

\begin{algorithm}[h]
	\begin{footnotesize}
		\caption{\label{alg:FL_Fotakis}Fotakis' Algorithm for Online Facility Location ($\OFL$)}

		\SetKwFunction{UpdatePotentials}{UpdatePotentials}
		\SetKwFunction{ComputeNewPotentials}{ComputeNewPotentials}

		\Input{$G=(V,E)$ -- the input graph\\
		}
		\BlankLine

		\Initialization{}{
			$F\gets \emptyset$, $L\gets \emptyset$

			\ForEach{$v\in V$}{set $p(v)\gets 0$}
		}

		\BlankLine

		\EFn(\tcp*[h]{Upon the next request $\req$ in the sequence}){\UponRequest{$r$}}{

			Set $L\gets L\cup{\req}$

			\UpdatePotentials{$F$,$\req$}

			$w \gets \arg\max_{v\in V} (p(v)-f_v)$

			\If{$ p(w) > f_w $}{
				set $ F \gets F\cup{w} $

				\ComputeNewPotentials{$F$,$L$}
			}
		}

		\BlankLine

		\Fn{\UpdatePotentials{$F$,$\req$}}{

			\ForEach{$ v\in V $}
			{
				set $ p(v)\gets p(v)+\pr{ d(F,\req) -d(v,\req) }_+$
			}

		}

		\BlankLine

		\Fn{\ComputeNewPotentials{$F$,$L$}}{
			\ForEach{$ v\in V $}
			{
				set $ p(v)\gets \sum_{r\in L}\pr{ d(F,r) -d(v,r)}_+$
			}
		}
	\end{footnotesize}
\end{algorithm}

This algorithm is \emph{not} subset-competitive with respect to its actual cost; however, it is subset-competitive with respect to an amortized cost, which is sufficient to achieve the desired results. 

We use subscript $i$ to denote the value of a variable in Algorithm \ref{alg:FL_Fotakis} immediately after handling the $i$'th request (for $i=0$ this refers to the initial value of the variable). This is used in this section for the variables $F,L$ and the potential function $p$.

\begin{defn}[Amortized cost]
	\label{defn:FL_AmortizedCost}
	For each request $\req_i \in \reqs$, define the \emph{amortized cost} of $\req_i$ as 
	\[ 
	    \amort(\req_i)= 2\cdot \min\pc{ d(F_{i-1}, \req_i), \min_{v\in V}\pr{f_v - p_{i-1}(v) + d(v,\req_i)} }.
	\]
\end{defn}

In words, the amortized cost of a request $\req$ is twice the minimum between its distance to a facility which is already open upon $\req$'s arrival, and its distance to a facility $v$ which is not open plus the ``remaining'' cost to opening that facility ($f_v$ minus the potential $p(v)$).

The following two lemmas, the proof of which appears in Subsection \ref{subsec:FL_SubsetCompetitiveness}, state that $\amort$ is indeed an amortization of the algorithm's cost, and that $\amort$ is subset competitive.
\begin{lem}
	\label{lem:FL_Fotakis_TotalCostBoundedByAmortizedCost}
	$\OFL \le \sum_{\req\in \reqs} \amort(\req)$.
\end{lem}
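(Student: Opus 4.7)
My plan is an amortized analysis. Set $\beta_i := d(F_i, \req_i) + f_w \cdot \mathbf{1}\pc{\text{a new facility } w \text{ is opened at iteration } i}$ to be the actual cost incurred by $\OFL$ at iteration $i$, so that $\OFL = \sum_i \beta_i$. I will construct a non-negative potential $\Phi_i$ depending on the algorithm's state after iteration $i$, with $\Phi_0 = 0$, and show that $\beta_i + \Phi_i - \Phi_{i-1} \le \amort(\req_i)$ at every iteration. Summing and telescoping then gives the lemma immediately.

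In iterations where no facility is opened, the failure of the opening condition means $p_{i-1}(v) + \pr{d(F_{i-1}, \req_i) - d(v, \req_i)}_+ \le f_v$ for every $v \in V$. A short case analysis (split on whether $d(v, \req_i) \le d(F_{i-1}, \req_i)$) rearranges this into $f_v - p_{i-1}(v) + d(v, \req_i) \ge d(F_{i-1}, \req_i)$ for \emph{every} $v$, so the minimum defining $\amort(\req_i)$ is attained by the first term and $\amort(\req_i) = 2\, d(F_{i-1}, \req_i)$. Taking $\Phi_i := \max_v p_i(v)$, half of $\amort(\req_i)$ pays the connection cost $d(F_i, \req_i) = d(F_{i-1}, \req_i)$, and the other half absorbs $\Phi_i - \Phi_{i-1} \le d(F_{i-1}, \req_i)$, since each $p(v)$ grows by at most that amount in one iteration.

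The opening case is the main difficulty. When $w$ is opened, the opening condition guarantees $p_{i-1}(w) + \pr{d(F_{i-1}, \req_i) - d(w, \req_i)}_+ > f_w$, so enough credit has accumulated on $w$ to justify paying $f_w$; moreover, \ComputeNewPotentials resets $p_i(w) = 0$ and can only decrease other $p_i(v)$'s, producing a drop in potential. Plugging $v = w$ into the minimum defining $\amort(\req_i)$ yields $\amort(\req_i) \le 2\bigl(f_w - p_{i-1}(w) + d(w, \req_i)\bigr)$, which combined with the opening condition and $d(F_i, \req_i) \le d(w, \req_i)$ should yield the per-iteration budget balance once the potential drop is accounted for. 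The principal obstacle is choosing $\Phi$ so that the drop at $w$ is actually realized: the naive choice $\max_v p_i(v)$ is too weak because the maximum can be attained at some $v' \ne w$ unaffected by the reset at $w$. A refined potential---likely a sum-like aggregate of the form $\sum_v \phi(p_i(v), f_v)$ tailored to Fotakis's invariants, or one tracking investment toward the next facility scheduled to be opened---should resolve this, after which verifying the budget balance reduces to routine algebra using the explicit formulas for $p_i$ and $\amort$.
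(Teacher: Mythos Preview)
Your amortized-analysis framework is exactly right, and your non-opening case is correct. The genuine gap is that you have not found a potential that works in the opening case, and your speculative candidates are looking in the wrong place: you are trying aggregates of the \emph{vertex} potentials $p_i(v)$, whereas the potential that makes the argument go through is indexed by \emph{requests}. The paper takes
\[
\Phi_i \;=\; \sum_{\req\in L_i} d(F_i,\req),
\]
the total cost of reconnecting every request seen so far to its nearest currently-open facility. The crucial identity---which is precisely why this potential succeeds where $\max_v p_i(v)$ fails---is that adding $w$ to $F$ (before inserting $\req_i$ into $L$) decreases $\Phi$ by exactly
\[
\sum_{\req\in L_{i-1}}\bigl(d(F_{i-1},\req)-d(w,\req)\bigr)_+ \;=\; p_{i-1}(w),
\]
so the accumulated credit at $w$ is released in full, regardless of what happens at any other vertex $v'$.

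With this $\Phi$, the opening case is a two-line computation: $\beta_i = f_w + d(w,\req_i)$ (since $w$ becomes the nearest facility to $\req_i$), and $\Delta\Phi = -p_{i-1}(w) + d(w,\req_i)$, so
\[
\beta_i + \Delta\Phi \;=\; f_w - p_{i-1}(w) + 2\,d(w,\req_i) \;\le\; 2\bigl(f_w - p_{i-1}(w) + d(w,\req_i)\bigr) \;=\; \amort(\req_i),
\]
where the inequality uses $p_{i-1}(w)\le f_w$ (the stable-potentials lemma). The non-opening case is even simpler with this $\Phi$: then $\Delta\Phi = d(F_{i-1},\req_i) = \beta_i$, and both equal $\tfrac12\amort(\req_i)$.
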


\begin{lem}
	\label{lem:FL_Fotakis_SubsetCompetitivenessProxy}
	Algorithm \ref{alg:FL_Fotakis} is subset-competitive, as required in Property \ref{asmp:FW_SubsetCompetitive}, with respect to its amortized cost. That is, for input $\reqs$ and for every subset of requests $\reqs'$ of $\reqs$, it holds that 
	\[ 
    	\sum_{\req\in \reqs} \amort(\req) \le O(\log \pr{|\reqs'|+2})\cdot \OPT.  
	\]
\end{lem}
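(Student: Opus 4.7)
The plan is to adapt the classical Meyerson/Fotakis analysis of online facility location to the subset $\reqs'$. I would partition $\reqs'$ according to $\OPT$: for each facility $v^*$ opened by $\OPT$, let $Q_{v^*} \subseteq \reqs'$ be the requests that $\OPT$ connects to $v^*$. Since $\sum_{v^*}\bigl(f_{v^*} + \sum_{\req \in Q_{v^*}} d(v^*, \req)\bigr) \le \OPT$ and $|Q_{v^*}| \le |\reqs'|$, the lemma reduces to the per-cluster bound
\[
\sum_{\req \in Q_{v^*}} \amort(\req) \;\le\; O\!\left(\log\bigl(|Q_{v^*}|+2\bigr)\right) \cdot \Bigl(f_{v^*} + \sum_{\req \in Q_{v^*}} d(v^*, \req)\Bigr).
\]

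The main workhorse is the inequality $\amort(\req_i) \le 2\bigl(f_{v^*} - p_{i-1}(v^*) + d(v^*, \req_i)\bigr)$, which is immediate from the definition of $\amort$ by choosing $v = v^*$ in the inner minimum. I would order the requests of $Q_{v^*}$ by arrival time and group them into $O(\log k)$ geometric distance classes around $v^*$, where $k = |Q_{v^*}|$. Within each class, a Meyerson-style argument shows that the accumulated potential at $v^*$ grows fast enough to absorb the opening cost $f_{v^*}$ once sufficiently many requests from $Q_{v^*}$ have arrived; the amortized cost of later requests in the class is then bounded purely by their distances to $v^*$. Summing over the $O(\log k)$ distance classes yields the per-cluster bound, and summing over clusters yields the lemma.

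The main obstacle is that $p_{i-1}(v^*)$ is maintained globally: it depends on all past requests in $\reqs$ (not only those in $Q_{v^*} \subseteq \reqs'$), and it is periodically recomputed by \textsc{ComputeNewPotentials} whenever a new facility is opened. Fortunately, extra contributions from outside $Q_{v^*}$ only tighten the upper bound on $\amort(\req_i)$, since $p$ appears with a negative sign. The subtler issue is that \textsc{ComputeNewPotentials} can actually \emph{shrink} $p(v^*)$ when a newly opened facility lies closer than $v^*$ to certain past requests. To handle this, I would establish the invariant $p_i(v^*) = \sum_{r \in L_i}\bigl(d(F_i, r) - d(v^*, r)\bigr)_+$, which holds immediately after any \textsc{ComputeNewPotentials} call and is preserved by \textsc{UpdatePotentials} because $F$ does not change between calls. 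I would then lower bound this quantity by restricting the sum to past requests in $L_i \cap Q_{v^*}$; the restricted sum is exactly what drives the single-cluster Meyerson/Fotakis charging argument, so the standard logarithmic analysis applies verbatim to each cluster.
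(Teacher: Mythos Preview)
Your per-cluster decomposition matches the paper's, but the core of your charging argument has a gap, and the paper's proof takes a different (and simpler) route.

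The issue is your reliance on the potential branch $\amort(\req_i)\le 2\bigl(f_{v^*}-p_{i-1}(v^*)+d(v^*,\req_i)\bigr)$ together with a Meyerson-style ``the potential at $v^*$ eventually saturates'' argument within geometric distance classes. You correctly observe that $p(v^*)$ can \emph{shrink} after a facility opens, and your fix is the invariant $p_i(v^*)=\sum_{r\in L_i}\bigl(d(F_i,r)-d(v^*,r)\bigr)_+$ together with the restriction to $L_i\cap Q_{v^*}$. But that restricted sum is itself a function of $d(F_i,r)$ for \emph{past} requests $r$, and each time a facility opens, those terms can drop to zero. So the restricted lower bound on $p(v^*)$ is not monotone either, and the ``saturation'' claim is unjustified: within a class you may repeatedly pay $\approx 2f_{v^*}$ because the potential keeps getting reset, and nothing in your outline bounds how often this happens. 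The Meyerson distance-class analysis you invoke is tailored to the randomized algorithm (where each request opens a facility at its own location with probability $d/f$); it does not transplant to Fotakis' deterministic potential scheme, where the opened facility can be far from both $v^*$ and the current request. In short, the potential branch alone does not close the argument; you need the other branch when the potential has collapsed.

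The paper's proof avoids this entirely. It uses only the distance branch $\amort(\req_i)\le 2\,d(F_{i-1},\req_i)\le 2\,d(F_{i-1},v^*)+2\,d(v^*,\req_i)$, and controls $d(F_{i-1},v^*)$ via Corollary~\ref{cor:FL_Fotakis_OptimalClusterDistanceBound}: $\pa{L_i\cap C_{v^*}}\cdot d(F_i,v^*)\le f_{v^*}+2\,\OPT_C(C_{v^*})$. (This corollary is exactly what your invariant yields once combined with $p_i(v^*)\le f_{v^*}$ and the triangle inequality.) For the $j$-th arriving request of $\reqs'\cap C_{v^*}$ one has $\pa{L_{i-1}\cap C_{v^*}}\ge j-1$, so the $d(F_{i-1},v^*)$ term contributes a harmonic series in $j$, giving the $O(\log|\reqs'|)$ factor directly---no distance classes, no monotonicity of potentials needed.
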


    

\paragraph{Facility Location with Predictions.} We use the components $\OFL$ and $\PCFL$ to construct the algorithm for facility location with predictions. The algorithm uses $\OFL$ as a black box, as in the framework described in Algorithm \ref{alg:FW_Algorithm}. For the sake of describing the algorithm, and its analysis, we assume that the cost incurred by $\OFL$ upon receiving a request $\req$ is not the actual cost of opening facilities and connecting the request, but rather the \emph{amortized} cost $\amort(\req)$, as defined in Definition \ref{defn:FL_AmortizedCost}. In particular, this altered cost affects the counting of the online algorithm's cost, which affects whether a $\Partial$ solution is bought. As guaranteed in Lemmas \ref{lem:FL_Fotakis_TotalCostBoundedByAmortizedCost} and \ref{lem:FL_Fotakis_SubsetCompetitivenessProxy}, $\amort$ is indeed an amortized cost function, which is subset-competitive. Also observe that $\amort(\req)$ can be calculated by $\OFL$ upon the release of $\req$ (which is required for counting the online cost towards buying $\Partial$ solutions).

In this algorithm, the addition of the offline solution returned by $\Partial$ is done by adding the facilities of that solution immediately, and connecting the requests of that solution upon their future arrival. This postponing of costs does not affect the analysis. 

We would now like to analyze this algorithm for $(\Delta,D)$-errors. First, we must first define a $(\Delta,D)$-error completely by defining the matching cost of two requests. This matching cost, rather naturally, is defined to be the distance between the requests' nodes in the graph.

The main result of this section is the following theorem for facility location with predictions.
\begin{thm}
	\label{thm:RFL_Competitiveness}For every error $\err = (\Delta,D)$, Algorithm
	\ref{alg:FW_Algorithm} for facility location with predictions has the guarantee
	\[
	    \ALG\le O\pr{\log \pr{\min\pc{\pa{\reqs},\Delta}+2}}\cdot \OPT + O(1) \cdot D.
	\]
\end{thm}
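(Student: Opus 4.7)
The plan is to follow the template of Theorem~\ref{thm:RST_Competitiveness} and Theorem~\ref{thm:RSF_Competitiveness}. Fix the major iteration $i$ to be the first iteration in which $u$ is assigned a value at most $|\preds|-k$ in Line~\ref{line:FW_DefiningHatK}, where $k=|\mreqs|$. Apply Lemma~\ref{lem:FW_BoundingFrameworkCost} to split the algorithm's cost into a prefix (iterations $1,\dots,i$) and a suffix (iterations $i+1,\dots,|\reqs|$). The prefix is bounded by $O(1)\cdot \OPT + (12\gamma_{\FL}+2)\cdot \offc_{i-1}$, and the suffix by $O(1)\cdot \max\{\OFL_{m-1},\OFL_m\}$.

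For the prefix, I would prove the analog of Proposition~\ref{prop:RST_OPTAndDistAtLeastPartial}, namely $\offc_{i-1}\le \OPT + D$. The argument is identical in structure: from the choice of $i$ and Lemma~\ref{lem:FW_Partial}, any solution that serves at least $k$ predicted requests costs at least $\offc_{i-1}$; the optimum $\OPT_{\mpreds}$ for $\mpreds$ is such a solution. For facility location, $\OPT$ can be extended into a solution for $\mpreds$ by reusing its open facilities and connecting each $\pred\in\mpreds$ to the facility that serves $\err(\pred)$ in $\OPT$, paying an additional connection cost of at most $d(\pred,\err(\pred))$. Summing gives $c(\OPT_{\mpreds})\le \OPT + D$, so $\offc_{i-1}\le \OPT + D$, and the prefix is $O(\OPT)+O(D)$.

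For the suffix (the analog of Lemma~\ref{lem:RST_BoundingExpensiveON}), fix $j\in\{m-1,m\}$ and let $Q$ be the requests of the $j$'th phase, with $\preds'\subseteq\preds$ the predicted requests served by the $\Partial$ call at iteration $i$. As in the Steiner tree proof, partition $Q$ into $Q_{1,1}=\{\req\in Q\cap\mreqs\mid \err(\req)\in \preds'\}$, $Q_{1,2}=(Q\cap\mreqs)\setminus Q_{1,1}$, and $Q_2=Q\setminus\mreqs$. We work with the amortized cost $\amort$ of Definition~\ref{defn:FL_AmortizedCost}, since by Lemma~\ref{lem:FL_Fotakis_TotalCostBoundedByAmortizedCost} the true $\OFL$ cost is bounded by the total amortized cost, and the framework's accounting uses $\amort$. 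For $Q_{1,2}\cup Q_2$, invoke the amortized subset-competitiveness (Lemma~\ref{lem:FL_Fotakis_SubsetCompetitivenessProxy}), noting that $|Q_{1,2}|\le |\preds\setminus\preds'|\le 2\gamma_{\FL}(|\preds|-k)=2\gamma_{\FL}|\preds\setminus\mpreds|$ by the choice of $i$ and Lemma~\ref{lem:FW_Partial}, while $|Q_2|\le |\reqs\setminus\mreqs|$, so $|Q_{1,2}\cup Q_2|\le 2\gamma_{\FL}\Delta$; since also $|Q_{1,2}\cup Q_2|\le |\reqs|$, the contribution is $O(\log(\min\{|\reqs|,\Delta\}+2))\cdot \OPT$.

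The main obstacle is the bound on $Q_{1,1}$, since the Steiner tree argument does not translate verbatim: in the restarted $\OFL$ instance the facilities bought by $\Partial$ have zero opening cost, but connection distances are unchanged. For each $\req\in Q_{1,1}$, let $v^\ast$ be the facility opened by $\Partial$ that serves $\err(\req)$. Choosing $v=v^\ast$ in the second term of $\amort(\req)$ yields $\amort(\req)\le 2(f_{v^\ast}-p(v^\ast)+d(v^\ast,\req))\le 2d(v^\ast,\req)\le 2d(v^\ast,\err(\req))+2d(\err(\req),\req)$, where $f_{v^\ast}=0$ in the restarted instance. Summing over $Q_{1,1}$, the $d(\err(\req),\req)$ terms contribute at most $2D$, while the $d(v^\ast,\err(\req))$ terms telescope into (at most twice) the total connection cost of $\Partial$'s solution, which is bounded by $c(\Partial(\preds,u))\le 3\gamma_{\FL}\offc_i = O(\OPT+D)$ via the prefix bound. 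Hence $\sum_{\req\in Q_{1,1}}\amort(\req)=O(\OPT)+O(D)$. Combining the three partition bounds gives $\max\{\OFL_{m-1},\OFL_m\}\le O(\log(\min\{|\reqs|,\Delta\}+2))\cdot \OPT + O(1)\cdot D$, and putting the prefix and suffix bounds together yields the theorem.
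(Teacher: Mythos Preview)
Your proposal is correct and follows essentially the same approach as the paper's proof: the same choice of the major iteration $i$, the same application of Lemma~\ref{lem:FW_BoundingFrameworkCost}, the same bound $\offc_{i-1}\le \OPT+D$ via extending $\OPT$ to a solution for $\mpreds$, and the same $Q_{1,1}/Q_{1,2}/Q_2$ partition for the suffix. The only cosmetic difference is in the $Q_{1,1}$ step, where you invoke the second term of the amortized-cost minimum (with $f_{v^\ast}=0$), while the paper phrases the same bound as $\amort(\req)\le 2d(\req,F')$ for $F'$ the set of open or zero-cost facilities; both arrive at $\sum_{\req\in Q_{1,1}}\amort(\req)\le 2D + 2\cdot(\text{connection cost of the }\Partial\text{ solution})$ and then bound the latter by $O(\OPT+D)$ via the cost of $\Partial$ at iteration $i$.
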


When analyzing this algorithm, we regard the costs incurred by the online component $\OFL$ as the amortized costs, i.e. on request $\req$ the cost of $\OFL$ would be $\amort(\req)$. Since the sum of those costs upper-bounds the actual cost of the algorithm (Lemma \ref{lem:FL_Fotakis_TotalCostBoundedByAmortizedCost}), such an analysis is legal. Lemma \ref{lem:FL_Fotakis_SubsetCompetitivenessProxy} now implies that this amortized cost is subset-competitive, and thus the properties and lemmas of Section \ref{sec:framework} hold.

As for Steiner tree and Steiner forest, we denote by $\mreqs[\err]$ and $\mpreds[\err]$ the matched requests and the matched predictions of the error $\err$.

The proof follows the same lines as the guarantee for robust Steiner tree (Theorem~\ref{thm:RST_Competitiveness}). As in the robust Steiner tree case, define $k = |\mreqs|$. Now, fix $i$ to be the iteration in which the variable $u$ is first assigned a value which is at least $\pa{\preds} - k$ in Line~\ref{line:FW_DefiningHatK} of \cref{alg:FW_Algorithm}.

\begin{prop}[Analog of Proposition \ref{prop:RST_OPTAndDistAtLeastPartial}]
	\label{prop:RFL_OPTAtLeastPartial}	
	$\offc_{i-1} \le \OPT + D  $.
\end{prop}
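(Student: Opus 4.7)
The plan is to mirror the proof of \Cref{prop:RST_OPTAndDistAtLeastPartial} for Steiner tree, with one additional triangle-inequality step. Let $i'\le i-1$ be the last iteration at which $\offc$ was updated, so $\offc_{i'}=\offc_{i-1}$. If $i'=0$ then $\offc_{i-1}=0$ and the claim is trivial. Otherwise, by the minimality in the definition of $i$, the value $u_{i'}$ produced on \cref{line:FW_DefiningHatK} during iteration $i'$ satisfies $u_{i'}>|\preds|-k$. By the minimality built into that line, this yields
\[ c(\Partial(\preds,|\preds|-k))\;>\;3\gamma_{\FL}\cdot\offc_{i-1}. \]

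I would next apply \Cref{lem:FW_Partial}(2) with $u=|\preds|-k$, which forces the cheapest facility-location solution that satisfies at least $|\mpreds|=k$ requests of $\preds$ to cost at least $\offc_{i-1}$. Since $\OPT_{\mpreds}$, the optimal solution for the subinstance $\mpreds$, satisfies all $k$ of its own requests, we conclude $c(\OPT_{\mpreds})\ge\offc_{i-1}$.

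The remaining step is to show $c(\OPT_{\mpreds})\le c(\OPT)+D$. I would build a feasible solution for $\mpreds$ out of $\OPT$ by keeping exactly the facilities opened by $\OPT$ and, for each $\pred\in\mpreds$, routing $\pred$ to the facility $F_{\err(\pred)}$ that serves its matched actual client $\err(\pred)$ in $\OPT$. By the triangle inequality,
\[ d(\pred, F_{\err(\pred)})\;\le\;d(\pred,\err(\pred))+d(\err(\pred), F_{\err(\pred)}). \]
Summing over $\pred\in\mpreds$, the first term contributes at most $\sum_{\pred\in\mpreds} d(\pred,\err(\pred))=D$, while the second term contributes at most the total connection cost already paid by $\OPT$. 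Added to the (unchanged) facility-opening cost of $\OPT$, we get a solution for $\mpreds$ of cost at most $c(\OPT)+D$. Chaining with the previous paragraph gives $\offc_{i-1}\le c(\OPT_{\mpreds})\le c(\OPT)+D$, as required.

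The only place where this departs from the Steiner-tree argument is the augmentation step: there one simply attaches a path of length $d(\req,\err(\req))$ to an existing tree, whereas here the right move is to reuse the facility that $\OPT$ already opened for $\err(\pred)$ and split the new connection cost via the triangle inequality. This is the only delicate piece, but the cross terms telescope cleanly into the connection cost of $\OPT$, so I do not expect any genuine obstacle.
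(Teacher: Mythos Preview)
Your proposal is correct and follows essentially the same approach as the paper's proof: both first use \cref{lem:FW_Partial} to lower-bound $\offc_{i-1}$ by $c(\OPT_{\mpreds})$, then construct a solution for $\mpreds$ from $\OPT$ by reusing $\OPT$'s facilities and rerouting each $\pred\in\mpreds$ to the facility serving its match $\err(\pred)$, bounding the cost via the triangle inequality. The only cosmetic difference is that the paper phrases the augmentation as ``disconnect $\req$ and reconnect $\err(\req)$'' and bounds the \emph{increase} by $d(\req,\err(\req))$, whereas you bound the new connection cost directly; both yield $c(\OPT_{\mpreds})\le c(\OPT)+D$.
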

\begin{proof}
    Let $i' < i$ be the iteration in which $\offc_{i-1}$ was set, i.e. the first iteration such that $\offc_{i'} = \offc_{i-1}$. 
    If $i' = 0$, then $\offc_{i-1}$ is the initial value of $\offc$, which is $0$, and the proposition holds. 
    Henceforth assume that $i' > 0$.
    
    From the definition of $i$, we have that $u_{i'} > \pa{\preds}-k$, where $u_{i'}$ is the value of the variable $u$ after iteration $i'$. 
    Thus, $c(\Partial(\preds,\pa{\preds} - k)) > 3\gamma \offc_{i'} = 3\gamma \offc_{i-1}$. From Lemma \ref{lem:FW_Partial}, this implies that the least expensive solution which satisfies at least $k$ requests from $\preds$ costs at least $\offc_{i-1}$. 
    Since $k=\pa{\mpreds}=\pa{\mreqs}$, we have that $\OPT_{\mpreds}$ is such a solution, where $\OPT_{\mpreds}$ is the optimal solution for $\mpreds$. Therefore, $c(\OPT_{\mpreds}) \ge \offc_{i-1}$.
    
    Now, observe that $\OPT$ can be augmented to a solution for $\mpreds$ through disconnecting each request $\req \in \mreqs$ from the facility $v_{\req}$ to which it is connected, and instead connecting the request $\err(\req)\in \mpreds$ to $v_{\req}$. 
    This connects all requests in $\mpreds$, and increases the cost of \OPT by exactly
    \[
        \sum_{\req\in \mreqs} d(\err(\req),v_{\req}) -  d(\req,v_{\req}) \underbrace{\le}_{\text{triangle inequality}} \sum_{\req \in \mreqs} d(\req,\err(\req)) = D
    \]
    
    We therefore have a solution for $\mpreds$ which costs at most $\OPT+D$, completing the proof.

\end{proof}

\begin{lem}[Analog of Lemma \ref{lem:RST_BoundingExpensiveON}]
    \label{lem:RFL_BoundingExpensiveON}
	It holds that 
	\[
	    \max\pc{\OFL_{m-1},\OFL_{m}}\le O\pr{\log\pr{\min\pc{\pa{\reqs},\Delta} + 2 }}\cdot\OPT + O(1) \cdot D.
	\]
\end{lem}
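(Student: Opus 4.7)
The plan is to adapt the proof of \cref{lem:RST_BoundingExpensiveON}, working throughout with the amortized cost $\amort$ in place of the actual cost (which is legitimate by \cref{lem:FL_Fotakis_TotalCostBoundedByAmortizedCost}). Fix $j \in \pc{m-1, m}$, let $Q \subseteq \reqs$ denote the requests handled in phase $j$, and let $\preds' \subseteq \preds$ be the subset of predicted requests satisfied by the Partial solution bought in iteration $i$. Partition $Q$ into $Q_{1,1} = \pc{r \in Q \cap \mreqs : \err(r) \in \preds'}$, $Q_{1,2} = \pc{r \in Q \cap \mreqs : \err(r) \in \preds \setminus \preds'}$, and $Q_2 = Q \setminus \mreqs$. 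Exactly as in the Steiner tree case, $\pa{Q_{1,2} \cup Q_2} \le \min\pc{\pa{\reqs},\, 2\gamma_{\FL}\Delta}$, which follows from \cref{lem:FW_Partial} combined with the defining property of iteration $i$.

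For each $r \in Q_{1,1}$, let $v_r$ be the facility that serves $\err(r)$ in the Partial solution of iteration $i$; because $v_r \in S$, its opening cost is zero in the modified instance active in phase $j$. Plugging $v = v_r$ into the inner minimum in the definition of $\amort(r)$ gives
\[
    \amort(r) \le 2\pr{f_{v_r} - p(v_r) + d(v_r, r)} \le 2 d(v_r, r) \le 2 d(v_r, \err(r)) + 2 d(\err(r), r),
\]
using $f_{v_r} = 0$, $p(v_r) \ge 0$, and the triangle inequality. Since $\err$ is injective on $\mreqs$, the quantities $\pc{d(v_r, \err(r))}_{r \in Q_{1,1}}$ are distinct connection costs within the Partial solution of iteration $i$, so summing yields $\sum_{r \in Q_{1,1}} \amort(r) \le 2\, c(\Partial(\preds, u_i)) + 2D$. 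Meanwhile, for $Q_{1,2} \cup Q_2 \subseteq Q$, subset-competitiveness of $\amort$ (\cref{lem:FL_Fotakis_SubsetCompetitivenessProxy}) applied to the $\OFL$ instance of phase $j$ gives $\sum_{r \in Q_{1,2} \cup Q_2} \amort(r) \le O(\log(\min\pc{\pa{\reqs}, \Delta} + 2)) \cdot \OPT$.

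The main obstacle, absent in the Steiner tree analog, is to bound $c(\Partial(\preds, u_i)) \le O(\OPT + D)$: in Steiner tree the corresponding term vanished because predicted-to-root distances became zero in the modified metric, whereas here the Partial connection costs $d(v_r, \err(r))$ are generally positive. The plan is to exploit the framework's doubling structure. By construction, $c(\Partial(\preds, u_i)) \le 3\gamma_{\FL}\offc_i = 3\gamma_{\FL}\onc_i$ since $i$ is a major iteration. A short case analysis on whether iteration $i-1$ was major shows $\onc_{i-1} \le 2\offc_{i-1}$ in both cases (using the failed doubling check when $i-1$ is non-major, and $\onc_{i-1} = \offc_{i-1}$ when it is). Applying subset-competitiveness of $\amort$ to the singleton $\pc{r_i}$ inside the $\OFL$ instance active at iteration $i$ gives $\amort(r_i) \le O(\OPT)$, hence $\onc_i = \onc_{i-1} + \amort(r_i) \le 2\offc_{i-1} + O(\OPT)$. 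Combining with \cref{prop:RFL_OPTAtLeastPartial} ($\offc_{i-1} \le \OPT + D$) yields $c(\Partial(\preds, u_i)) \le O(\OPT + D)$. Summing the three contributions and taking the maximum over $j \in \pc{m-1, m}$ completes the proof.
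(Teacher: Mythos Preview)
Your proof is correct and follows essentially the same route as the paper's. The partition of $Q$, the use of subset competitiveness on $Q_{1,2}\cup Q_2$, the bound $\amort(r)\le 2d(v_r,r)$ for $r\in Q_{1,1}$ via a zero-cost facility, the triangle-inequality split into $D$ plus the Partial connection costs, and the final bound $c(\Partial(\preds,u_i))\le O(\OPT+D)$ via $\onc_i\le 2\offc_{i-1}+O(\OPT)$ together with \cref{prop:RFL_OPTAtLeastPartial} are exactly what the paper does (the paper outsources the last step by citing the proof of Item~1 of \cref{lem:FW_BoundingFrameworkCost}, whereas you reproduce that argument explicitly).
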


\begin{proof}

    Let $j\in \pc{m-1,m}$, and let $Q\subseteq \reqs$ be the subsequence of requests considered in $\OST_j$. Denote by $\preds'\subseteq \preds$ the subset of predicted requests that are satisfied by the $\Partial$ solution considered in iteration $i$ (the solution whose facilities were opened at the end of iteration $i$).
    
    The online algorithm $\OFL_j$ operates on a modified input, in which the cost of a set of facilities $F_0$ is set to $0$.
    
    We partition $Q$ into the following subsequences:
    \begin{enumerate}
        \item $Q_1 = Q \cap \mreqs$. We further partition $Q_1$ into the following sets:
        \begin{enumerate}
            \item $Q_{1,1} = \pc{\req\in Q_1 | \err(\req) \in \preds'}$
            \item $Q_{1,2} = \pc{\req\in Q_1 | \err(\req) \notin  \preds\backslash \preds'}$
        \end{enumerate}
        
        \item $Q_2 =  Q\backslash \mreqs$.
    \end{enumerate}
    
    Observe that $\OFL_j(Q) = \OFL_j(Q_{1,1}) + \OFL_j(Q_{1,2} \cup Q_2)$. We now bound each component separately.
    
    \paragraph{Bounding $\OFL_j(Q_{1,1})$.} Consider a request $\req \in Q_{1,1}$, and note from the definition of amortized cost that $\OFL_j(\req) \le 2d(\req, F')$, where $F'$ is the set of facilities which are either open or have cost $0$ immediately before considering $\req$.
    Thus, we have that 
    \[
        \OFL_j(Q_{1,1}) \le \sum_{\req \in Q_{1,1}} 2d(\req,F_0) \le \sum_{\req \in Q_{1,1}} 2d(\req,\err(\req)) + 2d(\err(\req), F_0) \le 2D + \sum_{\req \in \preds'} d(\req, F_0)
    \]
    Now, observe that $F_0$ contains all facilities bought by the $\Partial$ solution at iteration $i$, and thus $\sum_{\req \in \mpreds} d(\req, F_0)$ is at most the connection cost of that solution, which is at most the solution's cost. 
    From the proof of Item 1 of Lemma \ref{lem:FW_BoundingFrameworkCost}, We know that the cost of that solution is at most $3\gamma_{\FL} (2\offc_{i-1} + O(1)\cdot \OPT) = O(1)\cdot \OPT + O(1)\cdot D$. Overall, we have that $ \OFL_j(Q_{1,1}) \le O(1)\cdot \OPT + O(1)\cdot D$.
    
    \paragraph{Bounding $\OFL_j(Q_{1,2} \cup Q_2)$.} Using Property~\ref{asmp:FW_SubsetCompetitive} (subset competitiveness), we have that 
    
	\[ 
	    \OFL_j(Q_{1,2} \cup Q_2) \le O\pr{\log \pr{\pa{Q_{1,2}\cup Q_2}+2}} \cdot \OPT'  
	\]
	where $\OPT'$ is the optimal solution to $Q$ with the cost of $F_0$ set to $0$. Clearly, $\OPT' \le \OPT$. 
	
    Now, observe that $\pa{Q_{1,2}} \le \pa{\preds\backslash \preds'}$. From the definition of the major iteration $i$, and from Lemma \ref{lem:FW_Partial}, it holds that $\pa{\preds\backslash \preds'} \le 2\gamma_{\FL} \cdot (|\preds| - k) = 2\gamma_{\FL} \pa{\preds\backslash \mpreds}$. As for $Q_2$, it holds that $Q_2 \le \pa{\reqs \backslash \mreqs}$. These facts imply that 
    \[ 
        \pa{Q_{1,2}\cup Q_2} \le 2\gamma_{\FL} \Delta
    \]
    In addition, it clearly holds that $\pa{Q_{1,2}\cup Q_2} \le \pa{\reqs}$. Therefore, it holds that 
    \[  
        \OFL_j(Q_{1,2}\cup Q_2) \le O\pr{\log \pr{\min\pc{\pa{\reqs},\Delta}+2}}\cdot \OPT
    \]
    
    Combining this with the previous bound for $\OFL_j(Q_{1,1})$, we obtain
    \[ 
        \OFL_j(Q) \le O\pr{\log \pr{\min\pc{\pa{\reqs},\Delta}+2}}\cdot \OPT + O(1)\cdot D
    \] 
    completing the proof.

\end{proof}

Theorem \ref{thm:RFL_Competitiveness} now follows from Lemma \ref{lem:FW_BoundingFrameworkCost}, Proposition \ref{prop:RST_OPTAndDistAtLeastPartial} and Lemma \ref{lem:RFL_BoundingExpensiveON}, using an identical argument as for Theorem \ref{thm:RST_Competitiveness}.

\subsection{Subset Competitiveness of Algorithm \ref{alg:FL_Fotakis}.}
\label{subsec:FL_SubsetCompetitiveness}

The goal of this subsection is to prove Lemmas \ref{lem:FL_Fotakis_TotalCostBoundedByAmortizedCost} and \ref{lem:FL_Fotakis_SubsetCompetitivenessProxy}. 
Henceforth, fix the input sequence $\reqs = (\req_1,\dots,\req_{|\reqs|})$, and recall that for every variable $x$ (e.g., $F$, $L$, $p$) in the algorithm, we denote by $x_i$ the value of $x$ immediately after iteration $i$ (if $i=0$ this denotes the initial value of the variable).

\begin{lem}
	\label{lem:FL_Fotakis_StablePotentials}
	For every $i\in \{1, 2, \ldots, |\reqs|\}$, and for every $v\in V $, it holds that $p_i (v) \le f_v$. 
\end{lem}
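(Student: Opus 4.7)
The plan is to prove Lemma~\ref{lem:FL_Fotakis_StablePotentials} by induction on the iteration index $i$, extending the invariant to include the initial state $i=0$ as the base case. The base case is immediate: the initialization sets $p_0(v) = 0 \leq f_v$ for every $v$. For the inductive step, I assume $p_{i-1}(v) \leq f_v$ for all $v$, and let $\tilde{p}(v) := p_{i-1}(v) + (d(F_{i-1}, \req_i) - d(v, \req_i))_+$ denote the intermediate potential immediately after \UpdatePotentials is called in iteration $i$. Set $w := \arg\max_{v \in V}(\tilde{p}(v) - f_v)$.

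The proof then splits on whether a facility is opened. If $\tilde{p}(w) \leq f_w$, no facility is opened, so $p_i = \tilde{p}$, and the argmax property gives $\tilde{p}(v) - f_v \leq \tilde{p}(w) - f_w \leq 0$ for every $v$, as needed. Otherwise, the algorithm sets $F_i := F_{i-1} \cup \{w\}$ and \ComputeNewPotentials resets $p_i(v) = \sum_{r \in L_i}(d(F_i, r) - d(v, r))_+$. For $v = w$, each summand is nonpositive since $w \in F_i$ forces $d(F_i, r) \leq d(w, r)$, so $p_i(w) = 0 \leq f_w$. For any $v \neq w$, monotonicity $d(F_i, r) \leq d(F_{i-1}, r)$ (since $F_i \supseteq F_{i-1}$) yields $p_i(v) \leq \tilde{p}(v)$; this closes the inductive step whenever $\tilde{p}(v) \leq f_v$.

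The main obstacle is the remaining sub-case: $v \neq w$ with $\tilde{p}(v) > f_v$. Here I must establish that the reduction $\tilde{p}(v) - p_i(v)$ is at least $\tilde{p}(v) - f_v$. The plan is to partition the requests into $R_w := \{r \in L_i : d(w, r) < d(F_{i-1}, r)\}$, the set of requests for which opening $w$ strictly decreases the distance to the nearest facility, and its complement: only requests in $R_w$ contribute to the reduction, with per-request contribution $(d(F_{i-1}, r) - d(v, r))_+ - (d(w, r) - d(v, r))_+$. I would then lower-bound this reduction using the argmax inequality $\tilde{p}(w) - f_w \geq \tilde{p}(v) - f_v$ together with the triangle inequality: the per-request absorption on $w$'s side (worth $d(F_{i-1}, r) - d(w, r)$) corresponds, via triangle inequality applied to $v, w,$ and $r$, to a comparable absorption on $v$'s side. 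Summing these contributions across $R_w$ gives a reduction of at least $\tilde{p}(v) - f_v$, so $p_i(v) \leq f_v$ and the induction closes.
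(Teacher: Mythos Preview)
Your setup and the easy cases are fine: the base case, the ``no facility opened'' case via the argmax, the case $v=w$, and the monotonicity argument when $\tilde p(v)\le f_v$ are all correct. (For the monotonicity step you are implicitly using that $\tilde p(v)=\sum_{r\in L_i}(d(F_{i-1},r)-d(v,r))_+$, which follows by an easy auxiliary induction; this is worth stating.)

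The gap is in the remaining sub-case $v\neq w$ with $\tilde p(v)>f_v$. Your plan---sum the per-request reductions over all of $R_w$ and appeal to ``triangle inequality applied to $v,w,r$'' to match the absorption on $w$'s side---is not a proof. As written it never re-uses the induction hypothesis $p_{i-1}(v)\le f_v$ and $p_{i-1}(w)\le f_w$, and without those the statement is simply false: one can cook up $F_{i-1}$, $L_i$, and $w$ satisfying the argmax condition for which the recomputed potential at some $v$ still exceeds $f_v$. So the aggregate $R_w$ bound you are hoping for cannot follow from argmax plus triangle inequality alone.

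The argument the paper defers to (Fotakis's Lemma~1) is both simpler and makes the role of the induction hypothesis explicit: focus on the \emph{single new request} $\req_i$. Because $p_{i-1}(v)\le f_v<\tilde p(v)$, the increment $\delta_v:=(d(F_{i-1},\req_i)-d(v,\req_i))_+$ satisfies $\delta_v\ge \tilde p(v)-f_v>0$; likewise $\delta_w:=(d(F_{i-1},\req_i)-d(w,\req_i))_+\ge \tilde p(w)-f_w>0$, so in particular $\req_i\in R_w$ and $d(F_i,\req_i)=d(w,\req_i)$. A two-line case split on whether $d(v,\req_i)\le d(w,\req_i)$ then shows that the reduction in $p(v)$ coming from $\req_i$ alone equals $\min(\delta_v,\delta_w)$. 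Since $\delta_v\ge \tilde p(v)-f_v$ and, by the argmax, $\delta_w\ge \tilde p(w)-f_w\ge \tilde p(v)-f_v$, this single term already gives a reduction of at least $\tilde p(v)-f_v$, and the remaining terms are nonnegative. Hence $p_i(v)\le f_v$. No summation over all of $R_w$ and no triangle inequality are needed; the point you are missing is that the overshoot at both $v$ and $w$ was created entirely by $\req_i$, so opening $w$ undoes it.
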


\begin{proof}
	Identical to the proof of \cite[Lemma 1]{Fotakis07}.
\end{proof}

Denote by $F^\ast$ the set of facilities bought by the optimal solution. For each facility $v^\ast \in  F^\ast$, let $C_{v^\ast}$ be the set of requests connected to $v^\ast$ by the optimal solution. 

We denote the cost incurred by the algorithm in opening facilities by $\OFL_F$, and the cost of connecting requests by $\OFL_C$. We define $\OPT_F$ and $\OPT_C$ similarly. 
For every subsequence of requests $\reqs'$, we define $\OFL_F (\reqs')$ and $\OFL_C (\reqs')$ as the total opening cost and the total connection cost incurred for the requests of $\reqs'$, respectively. We also use the same terminology for $\OPT$.

\begin{cor}
	\label{cor:FL_Fotakis_OptimalClusterDistanceBound}
	$  \pa{L_i\cap C_{v^\ast}}  \cdot d(F_i,v^\ast) \le f_{v^\ast} + 2 \cdot \OPT_C (C_{v^\ast}) $ for every $v^\ast \in F^\ast$.
\end{cor}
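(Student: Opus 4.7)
The plan is to derive the corollary directly from Lemma \ref{lem:FL_Fotakis_StablePotentials} by expanding the definition of the potential $p_i(v^\ast)$ and applying the triangle inequality on each summand. First I would observe that after iteration $i$, regardless of whether $F$ was updated in that iteration (handled by \ComputeNewPotentials) or not (handled by \UpdatePotentials followed by the failure of the ``if'' condition), the potential has the closed form $p_i(v) = \sum_{\req \in L_i} \pr{d(F_i,\req) - d(v,\req)}_+$; this is a straightforward induction on $i$. Combined with Lemma \ref{lem:FL_Fotakis_StablePotentials}, this gives the starting inequality $\sum_{\req \in L_i} \pr{d(F_i,\req) - d(v^\ast,\req)}_+ \le f_{v^\ast}$.

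Next I would restrict the sum on the left-hand side to $\req \in L_i \cap C_{v^\ast}$; since every summand is non-negative, this only decreases the sum. For each such $\req$, the triangle inequality yields $d(F_i,\req) \ge d(F_i,v^\ast) - d(v^\ast,\req)$, and using the trivial bound $(x)_+ \ge x$ I get
\[
    \pr{d(F_i,\req) - d(v^\ast,\req)}_+ \ge d(F_i,v^\ast) - 2\, d(v^\ast,\req).
\]
Summing over $\req \in L_i \cap C_{v^\ast}$ produces the lower bound $\pa{L_i \cap C_{v^\ast}} \cdot d(F_i,v^\ast) - 2 \sum_{\req \in L_i \cap C_{v^\ast}} d(v^\ast, \req)$.

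Finally, since the optimal solution connects every $\req \in C_{v^\ast}$ to $v^\ast$ at cost $d(v^\ast,\req)$, we have $\sum_{\req \in L_i \cap C_{v^\ast}} d(v^\ast,\req) \le \sum_{\req \in C_{v^\ast}} d(v^\ast,\req) = \OPT_C(C_{v^\ast})$. Chaining these inequalities gives $f_{v^\ast} \ge \pa{L_i \cap C_{v^\ast}} \cdot d(F_i,v^\ast) - 2 \cdot \OPT_C(C_{v^\ast})$, which rearranges to the claimed bound. There is no real obstacle here; the only subtlety is verifying the closed-form expression for $p_i(v)$, which follows immediately from the two potential-updating subroutines, and everything else is a one-line triangle-inequality argument together with Lemma \ref{lem:FL_Fotakis_StablePotentials}.
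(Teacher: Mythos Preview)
Your argument is correct and is exactly the standard proof of this corollary: the paper itself does not spell it out but simply cites \cite[Corollary~1]{Fotakis07}, whose proof proceeds precisely as you describe---apply Lemma~\ref{lem:FL_Fotakis_StablePotentials} to the closed form $p_i(v^\ast)=\sum_{\req\in L_i}\pr{d(F_i,\req)-d(v^\ast,\req)}_+$, restrict to $L_i\cap C_{v^\ast}$, and use the triangle inequality termwise. There is no difference in approach.
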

\begin{proof}
	Identical to the proof of \cite[Corollary 1]{Fotakis07}.
\end{proof}

\begin{prop}
	\label{prop:FL_Fotakis_PotentialFunctionCases}
	For each request $\req_i \in \reqs$, if $\OFL$ opens a facility at any $w\in V$ in iteration $i$, then $\amort(\req_i)=2 \left( f_w-p_{i-1}(w) + d(w,\req_i) \right)$. Otherwise, $\amort(\req_i) = 2 d(F_{i-1},\req_i)$.
\end{prop}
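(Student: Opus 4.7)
The plan is to do a case analysis driven by the potential update rule and Lemma~\ref{lem:FL_Fotakis_StablePotentials}. Let $p'(v) := p_{i-1}(v) + \pr{d(F_{i-1},\req_i) - d(v,\req_i)}_+$ denote the potentials after the call to \UpdatePotentials in iteration $i$, i.e., the potentials that are tested against the $f_w$ threshold.

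First, I would observe the following two useful facts that hold for every $v \in V$, independent of the case. If $d(v,\req_i) \ge d(F_{i-1},\req_i)$, then $\pr{d(F_{i-1},\req_i) - d(v,\req_i)}_+ = 0$, so $p'(v) = p_{i-1}(v) \le f_v$ by Lemma~\ref{lem:FL_Fotakis_StablePotentials}, and in particular $f_v - p_{i-1}(v) + d(v,\req_i) \ge d(F_{i-1},\req_i)$. If instead $d(v,\req_i) < d(F_{i-1},\req_i)$, then $p'(v) = p_{i-1}(v) + d(F_{i-1},\req_i) - d(v,\req_i)$, and the inequality $p'(v) \le f_v$ is equivalent to $f_v - p_{i-1}(v) + d(v,\req_i) \ge d(F_{i-1},\req_i)$. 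So in both sub-cases, the condition ``$p'(v) \le f_v$'' is equivalent to ``$f_v - p_{i-1}(v) + d(v,\req_i) \ge d(F_{i-1},\req_i)$''.

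For the case where no facility is opened in iteration $i$, by definition $p'(v) \le f_v$ for every $v$, and hence by the equivalence above $d(F_{i-1},\req_i) \le f_v - p_{i-1}(v) + d(v,\req_i)$ for every $v$. Consequently, the first argument of the outer minimum in $\amort(\req_i)$ is no larger than the second, yielding $\amort(\req_i) = 2\, d(F_{i-1},\req_i)$.

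For the case where $\OFL$ opens a facility at $w$, we have $p'(w) > f_w$, which via the equivalence above forces $d(w,\req_i) < d(F_{i-1},\req_i)$ and $f_w - p_{i-1}(w) + d(w,\req_i) < d(F_{i-1},\req_i)$. The remaining step, which I expect to be the main obstacle, is to check that $w$ attains $\min_{v\in V}\pr{f_v - p_{i-1}(v) + d(v,\req_i)}$ so that the outer minimum in $\amort(\req_i)$ is realized by the second argument at $v=w$. Since $w$ is the argmax of $p'(v) - f_v$, for every other $v$ with $d(v,\req_i) < d(F_{i-1},\req_i)$ the inequality $p'(w) - f_w \ge p'(v) - f_v$ simplifies (after subtracting the common $d(F_{i-1},\req_i)$ term from $p'$) to $f_v - p_{i-1}(v) + d(v,\req_i) \ge f_w - p_{i-1}(w) + d(w,\req_i)$; and for every $v$ with $d(v,\req_i) \ge d(F_{i-1},\req_i)$, the first observation above gives $f_v - p_{i-1}(v) + d(v,\req_i) \ge d(F_{i-1},\req_i) > f_w - p_{i-1}(w) + d(w,\req_i)$. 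Hence $w$ minimizes the expression, it is strictly below $d(F_{i-1},\req_i)$, and therefore $\amort(\req_i) = 2\pr{f_w - p_{i-1}(w) + d(w,\req_i)}$, completing the proof.
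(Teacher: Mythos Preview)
Your argument is correct. The paper itself does not supply a proof of this proposition; it simply writes ``Identical to the proof of \cite[Lemma~3]{Fotakis07},'' so there is nothing in the present paper to compare against beyond the fact that both defer to the same underlying lemma. Your case analysis via the equivalence ``$p'(v)\le f_v \iff f_v - p_{i-1}(v) + d(v,\req_i)\ge d(F_{i-1},\req_i)$'' (using Lemma~\ref{lem:FL_Fotakis_StablePotentials} for the trivial sub-case) is exactly the natural way to unwind Fotakis' potential update, and your verification that the argmax $w$ of $p'(v)-f_v$ is also the minimizer of $f_v - p_{i-1}(v) + d(v,\req_i)$ is clean and complete. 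This is almost certainly the same argument as in the cited source.
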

\begin{proof}
	Identical to the proof of \cite[Lemma 3]{Fotakis07}.
\end{proof}

\begin{proof}[Proof of Lemma \ref{lem:FL_Fotakis_TotalCostBoundedByAmortizedCost}]
	Define the potential function $\phi(i) = \sum_{\req\in L_i} d(F_i,\req)$ for every iteration $i$. We show by induction on $i$ that 
	\begin{equation}
		\label{eq:FL_Fotakis_InductionClaim}
		\OFL((\req_1,\dots,\req_i)) + \phi(i)\le \sum_{\req\in L_i} \amort(\req),
	\end{equation}
	thereby proving the lemma.
	Suppose Equation \ref{eq:FL_Fotakis_InductionClaim} holds for iteration $i-1$. Iteration $i$ would increase the RHS by $\amort(\req_i)$, and increase the LHS by $\OFL({\req_i}) + \Delta \phi$, where we define $\Delta \phi = \phi(i) - \phi(i-1)$ to be the change in the potential function $\phi$.
	
	If the algorithm did not open a new facility in iteration $i$, then we have $\OFL({\req_i}) = d(r_i,F_{i-1})$. In addition, we have that $\Delta \phi = d(\req_i,F_{i-1})$. In the RHS, using Proposition \ref{prop:FL_Fotakis_PotentialFunctionCases}, we have that $\amort(\req_i) = 2d(\req_i,F_{i-1})$, completing the proof for this case.
	
	For the other case, in which the algorithm opens a new facility at $w\in V$ during iteration $i$, we have that the algorithm's cost for that iteration is $f_w$ for opening the facility, plus the cost of connecting $\req_i$. 
	
	Observe that if we open a facility for the current client, it becomes the closest open facility to the client. This is since at the beginning of the iteration, we had that $p(w) \leq f_w$, using Lemma \ref{lem:FL_Fotakis_StablePotentials}. Thus, if $w$ was opened in iteration $i$, its potential has increased upon the arrival of $\req_i$. But this only happens if $w$ is closer to $\req_i$ than any open facility in $F_{i-1}$. Therefore, the connection cost of $\req_i$ in iteration $i$ is exactly $d(\req_i,w)$.
	
	As for the change in the potential function $\phi$, we consider first the addition of $w$ to $F$ and then the addition of $\req_i$ to the request set. The addition of $w$ to $F$ has reduced $\phi$ by exactly $p_{i-1}(w)$. Then, the addition of $\req_i$ increased $\phi$ by exactly $d(r_i,w)$. Thus, The LHS of Equation \ref{eq:FL_Fotakis_InductionClaim} increased by a total of $f_w - p_{i-1}(w) + d(\req_i,w)$. But according to Proposition \ref{prop:FL_Fotakis_PotentialFunctionCases}, the RHS has increased by twice that amount, and thus the equation holds.
	
	Since the potential function $\phi$ is initially $0$, and is always non-negative, the proof is complete.
\end{proof}

\begin{proof}[Proof of Lemma \ref{lem:FL_Fotakis_SubsetCompetitivenessProxy}]
	Fix a facility $f^\ast \in F^\ast$. Denote the requests of $C_{f^\ast}$ by $q_1, \dots, q_{|C_{f^\ast}|}$ ordered by their arrival. With $k' = |\reqs' \cap C_f^\ast|$, denote by $i_1,\dots,i_{k'}$ the indices in the requests of $\reqs' \cap C_f^\ast$.
	
	For $q_{i_1}$, we have that $\amort (q_{i_1}) \le 2f_{v^\ast} + 2d(v^\ast, q_{i_1})$. 
	Now, consider any $j>1$, and let $i$ be the iteration in which $q_{i_j}$ is considered.
	For $j>1$, using Corollary 	\ref{cor:FL_Fotakis_OptimalClusterDistanceBound}, we have that 
	\begin{align*}
		\amort(q_{i_j}) 
		    & \le 2\cdot d(q_{i_j},F_{i-1})\\
			& \le 2\cdot d(v^\ast,F_{i-1}) + 2\cdot d(v^\ast , q_{i_j}) \quad \text{(by triangle inequality)} \\
			& = 2\cdot d(v^\ast,F_{i-1}) + 2\cdot d(v^\ast , q_{i_j})\\
			& \le \frac{2}{|L_{i-1} \cap C_{v^\ast}|} \cdot \left( f_{v^\ast} + 2\cdot \OPT_C (C_{v^\ast}) \right) + 2\cdot d(v^\ast , q_{i_j}) \quad \text{(by Corollary \ref{cor:FL_Fotakis_OptimalClusterDistanceBound})}\\
			& \le \frac{2}{|L_{i-1} \cap C_{v^\ast} \cap \reqs'|} \cdot \left( f_{v^\ast} + 2\cdot \OPT_C (C_{v^\ast}) \right) + 2\cdot d(v^\ast , q_{i_j}) \\
			& = \frac{2}{j-1} \cdot \left( f_{v^\ast} + 2 \cdot \OPT_C (C_{v^\ast}) \right) + 2\cdot d(v^\ast , q_{i_j}) \quad \text{(by the definition of $j$)}.
	\end{align*}
%
	Summing over $j$, we get
	\begin{align*}
		\sum_{j=1}^{k'} \amort(q_{i_j}) 
		&\le  2\cdot f_{v^\ast} + \sum_{j=2}^{k'} \frac{2}{j-1} \cdot \left( f_{v^\ast} + 2\cdot  \OPT_C (C_{v^\ast}) \right)  + 2\cdot \OPT_C(C_{v^\ast} \cap R^\prime)\\
		& \le 2(\log k' + 1) \cdot f_{v^\ast} + 4(\log k' +1) \cdot \OPT_C (C_{v^\ast} \cap \reqs')\\
		& \le 2(\log k + 1) \cdot f_{v^\ast} + 4(\log k +1) \cdot \OPT_C (C_{v^\ast} \cap \reqs').
	\end{align*}

	Summing over all $v^\ast \in F^\ast$, we get that 
	\[ 
	    \sum_{\req\in \reqs'} \amort(\req) \le O(\log \pr{k+2}) \cdot \OPT. 
	\]
%
\end{proof}

\section{Online Capacitated Facility Location with Predictions}
\label{sec:SCFL}
\newcommand{\cpt}[1]{\beta_{#1}}

The results of Section \ref{sec:FL} also extend to the soft-capacitated facility location problem. 
In this problem, each node $v\in V$ has, in addition to the facility opening cost $f_v$, a \emph{capacity} $\cpt{v}$ which is a natural number.
After opening a facility at $v$, a solution can connect at most $\cpt{v}$ clients to that facility. 
We consider the soft-capacitated case, in which a solution may open multiple facilities at a single node $v$, each at a cost of $f_v$, such that each facility can connect $\cpt{v}$ requests.

The matching cost between two requests in this case is identical to the uncapacitated case -- specifically, it is the distance between the two requests in the graph.

This problem can be related to the uncapacitated facility location problem using the following folklore reduction, which we nevertheless describe for completeness: given an instance for soft-capacitated facility location, which contains a graph $G = (V,E)$, facility costs $\pc{f_v}$ and capacities $\pc{\cpt{v}}$, construct a uncapacitated facility location instance over the graph $G' = (V\cup V', E\cup E')$, where:
\begin{itemize}
    \item $V'$ contains a copy $v'$ for every $v\in V$. The opening cost in $v'$ would be $f_v$, and the opening cost in $v$ would be $\infty$.
    \item $E'$ contains an edge $e= (v,v')$ for every $v\in V$ (and its copy $v' \in V'$), such that the cost of the edge is $\frac{f_v}{\cpt{v}}$.
    \item The cost of the edges of $E$ remains as in the original instance.
\end{itemize}
Whenever a request is released in the original capacitated instance on a node $v$, release a request on $v$ in the uncapacitated instance.

In essence, this reduction restricts opening facilities to copies of the original nodes, such that these copies are somewhat distant from the rest of the graph, as to discourage connecting requests frivolously (wasting the capacity).

\newcommand{\inst}{\mathcal{I}}

The algorithm for an instance $\inst$ of capacitated facility location would therefore be:
\begin{enumerate}
    \item Reduce $\inst$ to an uncapacitated instance $\inst'$, on which the algorithm of Section \ref{sec:FL} is run.
    \item Whenever a request $\req$ is released on a node $v$ in $\inst$, release a request on $v$ in $\inst'$.
    \item Whenever the algorithm opens a facility at $v'$ in $\inst'$, open a facility at $v$ in $\inst$. 
    \item Whenever the algorithm connects a request at $\req$ to a facility at $v'$ in $\inst'$, connect $\req$ to a facility at $v$ in $\inst$ (opening an additional facility at $v$ if required). 
\end{enumerate}

Denote by $\ALG$ the cost of the algorithm on $\inst$, $\ALG'$ the cost of the algorithm of Section \ref{sec:FL} on $\inst'$, $\OPT'$ the optimal solution for $\inst'$, and $\OPT$ the optimal solution for $\inst$.
Observe that:
\begin{enumerate}
    \item $\ALG \le \ALG'$. This is since for each connection to facility $v$, $\ALG'$ pays exactly $x_v = \frac{f_v}{\cpt{v}}$ more than $\ALG$. As for the opening costs, the opening of a first facility at $v$ in $\ALG$ is charged the opening of the facility at $v'$ in $\ALG'$, and the opening of any subsequent facility at $v$ in $\ALG$ is charged to $\cpt{v}$ connections which cost $x_v$ more, for a total of $\cpt{v}\cdot x_v = f_v$.
    
    \item $\OPT' \le 2\OPT$. We show that a $\OPT$ induces a solution for $\inst'$ of at most double cost. The solution $\OPT''$ for $\inst'$ consists of opening a facility at $v'$ for every $v$ in which $\OPT$ opened a nonzero number of facilities, and connecting to $v'$ all requests which $\OPT$ connected to facilities at $v$. Let $y_v$ be the number of requests connected to node $v$ in $\OPT$. Then the connection cost of $\OPT''$ is exactly $C_v + y_v \cdot \frac{f_v}{\cpt{v}}$, where $C_v$ is the cost of connecting requests to $v$ incurred by $\OPT$. Now, observe that $y_v \cdot \frac{f_v}{\cpt{v}}$ is a lower bound on the buying cost of $\OPT$ on facilities in $v$; thus, the connection cost of $\OPT''$ is at most $\OPT$. Since the buying cost of $\OPT''$ is also clearly at most $\OPT$, we arrive at the desired conclusion.
\end{enumerate}
Importantly, observe that the distance between requests in the metric spaces of $G$ and $G'$ is identical - thus, the matching cost $D$ is the same in both $\inst$ and $\inst'$. We can thus apply Theorem \ref{thm:RFL_Competitiveness} to $\inst'$, which combined with the above conclusions, yields the following theorem.

\begin{thm}
	\label{thm:RFL_SoftCapCompetitiveness}For error $\err = (\Delta,D)$ in soft-capacitated facility location, Algorithm
	\ref{alg:FW_Algorithm} (combined with the above reduction) has the guarantee
	\[
	    \ALG\le O\pr{\log \pr{\min\pc{\pa{\reqs},\Delta}+2}}\cdot \OPT + O(1) \cdot D.
	\]
\end{thm}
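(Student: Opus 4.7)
The plan is to obtain Theorem \ref{thm:RFL_SoftCapCompetitiveness} as a black-box consequence of Theorem \ref{thm:RFL_Competitiveness}, using the folklore reduction set up immediately above the statement. First I would make precise how the prediction translates under the reduction: the prediction $\preds$ on $\inst$, which is a subset of $V$, is interpreted verbatim as a prediction on $\inst'$ (still inside $V \subseteq V \cup V'$). Since the shortest-path distances between any two nodes of $V$ are preserved by the reduction (the only new edges go out to the dangling copies in $V'$, and any path detouring through $V'$ is longer because both endpoints of a $V$–$V'$ edge are in $V$), the matching cost between $\mreqs$ and $\mpreds$ is the same in both instances. Consequently, any $(\Delta, D)$-error for $\preds$ with respect to $\reqs$ in $\inst$ is also a valid $(\Delta, D)$-error in $\inst'$.

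Next I would simply invoke Theorem \ref{thm:RFL_Competitiveness} on the uncapacitated instance $\inst'$ to obtain
\[
    \ALG' \le O\pr{\log \pr{\min\pc{\pa{\reqs},\Delta}+2}}\cdot \OPT' + O(1) \cdot D.
\]
Chaining with the two inequalities $\ALG \le \ALG'$ and $\OPT' \le 2\OPT$ that were established in the enumerated discussion preceding the theorem gives
\[
    \ALG \le \ALG' \le O\pr{\log \pr{\min\pc{\pa{\reqs},\Delta}+2}}\cdot (2\OPT) + O(1)\cdot D,
\]
which is exactly the claim (the factor $2$ is absorbed by the $O(\cdot)$).

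The only genuinely non-routine step is verifying those two sandwich inequalities, but both are essentially accounting arguments already sketched. For $\ALG \le \ALG'$, I would charge each extra $x_v = f_v/\cpt{v}$ paid per connection in $\inst'$ to either the single ``entry'' facility opening at $v'$ (for the first $\cpt{v}$ connections) or to the additional opening at $v$ of cost $f_v$ that the capacitated algorithm performs once capacity is exhausted, giving a clean one-to-one charging. For $\OPT' \le 2\OPT$, I would lift $\OPT$ to $\inst'$ by opening a single facility at $v'$ for each $v$ used by $\OPT$ and routing all of $\OPT$'s $v$-clients through the $v$–$v'$ edge; the routing cost $y_v \cdot f_v/\cpt{v}$ is bounded by the opening cost $\lceil y_v/\cpt{v}\rceil f_v \le \OPT$ that $\OPT$ must pay for $v$'s capacitated facilities, while the single opening at $v'$ is charged to the same facility opening in $\OPT$, for total at most $2\OPT$. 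No step here should be a real obstacle; it is purely bookkeeping, and the substantive work has already been done in Theorem \ref{thm:RFL_Competitiveness}.
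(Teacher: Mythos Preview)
Your proposal is correct and follows essentially the same approach as the paper: translate the prediction verbatim to $\inst'$, observe that distances between vertices of $V$ (and hence the matching cost $D$) are unchanged by the reduction, apply Theorem~\ref{thm:RFL_Competitiveness} to $\inst'$, and sandwich with $\ALG \le \ALG'$ and $\OPT' \le 2\OPT$. One small slip: you wrote that ``both endpoints of a $V$--$V'$ edge are in $V$,'' which is false as stated; the point you want is that each $v' \in V'$ is a leaf attached only to $v$, so any $V$-to-$V$ path detouring through $V'$ must enter and leave via the same vertex $v$ and is therefore no shorter.
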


\section{Conclusions}
\label{sec:conclusions}

In this paper, we presented algorithms for classical graph problems in the online problems with predictions framework. Our main contributions were: (a) a novel definition of prediction error for graph and metric problems that incorporates both the numerical value of errors and their magnitude in the metric space, (b) a new ``black box'' framework for converting online algorithms that satisfy a subset competitiveness condition and offline prize-collecting algorithms to new online algorithms with prediction, and (c) an application of this framework to a range of graph problems to obtain tight interpolations between offline and online approximations. In particular, this improves the dependence of the competitive ratio from being on the size of the instance to the number of prediction errors. We hope that the concepts introduced in this paper--the notion of metric error and the general framework for online algorithms with prediction--will be useful for other problems in the future.




\bibliographystyle{plain}
\bibliography{bibfile,dp-refs}

\appendix

\section{Additional Proofs from Section \ref{sec:framework}}
\label{sec:framework_AdditionalProofs}

\begin{proof}[Proof of Lemma \ref{lem:FW_Partial}]
    If $\gamma u \ge |\preds|$, then $\Partial$ returns $\emptyset$, which satisfies both claims of the lemma. Assume henceforth that $\gamma u < |\preds|$. 
    
    To prove the first claim, consider the two cases in the subroutine. If the returned solution is $S_2$, then the number of unsatisfied requests is $u_2$, which is at most $\gamma u \le 2\gamma u$, as required. Otherwise, $S_1$ is returned, and thus $\gamma u \ge \frac{u_1+u_2}{2}$. This implies that $2\gamma u \ge u_1+u_2 \ge u_1$, which completes the proof of the first claim.
    	
    We now prove the second claim. Fix $i$ to be the index chosen in partial, such that $S_2 = \PC(\preds,\pi_{2^i})$ does not satisfy at most $\gamma u$ requests from $\preds$. 
    Fix $u^*$ to be the number of requests in $\preds$ which are not satisfied by $S^*$ (such that $u^* \le u$). Observe that $S_1$ defines a solution of cost $c(S_1) + 2^{i-1} \cdot u_1$ to the prize-collecting problem with penalty function $\pi_{2^{i-1}}$. Since $S_1$ was chosen by $\PC$ for that penalty, Property~\ref{asmp:FW_PrizeCollecting} yields 
	
	\begin{equation}
		\label{eq:FW_PrizeCollectingAugmentation}
		c(S_1) + 2^{i-1}\cdot u_1 \le \gamma \cdot (c(S^\ast) +2^{i-1}u^*),  
	\end{equation}
	where the inequality used the fact that $S^\ast$ also defines a solution for the prize-collecting problem.
	
	If $S=S_1$, then since $u_1 \ge \gamma u$, we have 
	\[ 
	    c(S_1)+ \gamma \cdot 2^{i-1} u^* \le c(S_1)+\gamma\cdot 2^{i-1}u \le c(S_1)+ 2^{i-1}u_1. 
	\]
	Combining the previous equation with Eq.~\eqref{eq:FW_PrizeCollectingAugmentation} yields $c(S_1)\le \gamma \cdot c(S^\ast)$, as required.
	
	The second case is that $S=S_2$. In this case, we repeat the argument of Eq.~\ref{eq:FW_PrizeCollectingAugmentation} for $S_2$ with penalty $2^{i}$: 
	\begin{align*}
	 \gamma \cdot \left(c(S^\ast)+2^{i}\cdot u^*\right) 
	 & \ge c(S_2)+2^{i}\cdot u_2 
	 = c(S_2)+2^{i}\gamma u +2^{i}\cdot (u_2 - \gamma u)\\
	& \ge c(S_2)+2^{i}\gamma u + 2^{i}\cdot \frac{u_2-u_1}{2} \quad \left(\text{since } \gamma u < \frac{u_1+u_2}{2} \right).
	\end{align*}
	This yields $c(S_2)\le \gamma \cdot c(S^\ast) + \gamma 2^{i}(u^*-u) + 2^{i} \cdot \frac{u_1-u_2}{2} \le  \gamma \cdot c(S^\ast) + 2^{i} \cdot \frac{u_1-u_2}{2}$. Now, due to Eq.~\ref{eq:FW_PrizeCollectingAugmentation}, we have 
\[ 	\gamma \cdot c(S^\ast) \ge \underbrace{c(S_1)}_{\ge 0} + 2^{i-1}(u_1 - \gamma u^*) \ge 2^{i-1} (u_1 - \gamma u) \ge 2^{i-1}\cdot \frac{u_1 - u_2}{2}. \]
    We thus get that $c(S_2) \le 3\gamma \cdot c(S^\ast)$, completing the proof of the second claim.

\end{proof}

\begin{proof}[Proof of Lemma \ref{lem:FW_BoundingFrameworkCost}]
    {\bf Proof of Item 1. }First, we claim that the total cost of the first $i-1$ iterations
	is at most $(6\gamma + 2)\cdot \offc_{i-1}$.
	
	To prove this claim, consider that at any point during the algorithm, the value of $\onc$ is exactly the total cost incurred by all instances of $\ON$ in Line \ref{line:FW_PathConnect} until that point. Thus, $\onc_{i-1}$ is the cost incurred in the first $i-1$ iterations due to Line \ref{line:FW_PathConnect}.	Since at the beginning (and end) of each iteration it holds that $\onc<2\offc$, we have $\onc_{i-1} \le 2\offc_{i-1}$.
	
	As for the cost of adding items from calls to $\Partial$ in Line \ref{line:FW_AddPartial},
	let $S_{1},...,S_{\ell}$ be the solutions added in Line \ref{line:FW_AddPartial} in the first $i-1$ iterations, in order of their addition. Observe that $c(S_{j})\le 3\gamma \offc_{i-1}\cdot2^{-(\ell-j)}$.
	Thus, the sum of costs of those solutions is at most $6\gamma \offc_{i-1}$.
	
	Now, we claim that the cost of iteration $i$ is at most $O(1) \cdot \OPT+6\gamma \cdot \offc_{i-1}$. To prove this, consider the cost of Line \ref{line:FW_PathConnect} in iteration $i$, in which the online algorithm $\ON$ serves some request $\req_i$. The request $\req_i$ is a part of a phase of requests which are served by this specific instance of $\ON$ (between two subsequent resets of $\ON$ in Line \ref{line:FW_RestartOnline}); call the requests of this phase $\reqs'$. The requests of $\reqs'$ are all served by $\ON$ in the modified input, in which the cost of some set of  elements $S_0$ is set to $0$ (specifically, $S_0$ is the value of the variable $S$ immediately after its augmentation with $S_{\ell}$).
	Using Property~\ref{asmp:FW_SubsetCompetitive}, we have that $\ON({\req_i}) \le O(1)\cdot \OPT'$, where $\ON({\req_i})$ is the cost incurred by $\ON$ when serving $\req_i$, and $\OPT'$ is the optimal solution for $\reqs'$ 
	(with the cost of elements in $S_0$ set to $0$). Clearly, $\OPT' \le \OPT$, which yields that $\ON({\req_i}) \le O(1) \cdot \OPT$.
	
	As for the solution added in Line \ref{line:FW_AddPartial}, its cost is
	at most $3\gamma \offc_{i}=3\gamma \onc_i$. Consider that $\onc_i=\onc_{i-1}+\ON({\req_i})\le2\offc_{i-1}+ O(1)\cdot \OPT$.
	This completes the proof of the first item. 
	
	{\bf Proof of Item 2. }First, we claim that the total cost of $\Partial$ solutions from iteration $i+1$ onward is
	at most $6\gamma \cdot \sum_{j=0}^{m}\ON_{j}$.
	
	Observe that for every $j$ such that $1\le j\le m$, we have that $\offc_{i_j^\star} \ge 2\offc_{i_{j-1}^\star}$. Thus, 
	\[
	    \offc_{i_j^\star} - \offc_{i_{j-1}^\star} \ge \offc_{i_{j-1}^\star} = \offc_{i_j^\star} - \left(\offc_{i_j^\star} - \offc_{i_{j-1}^\star}\right), 
	\] 
	and therefore $ \offc_{i_j^\star} \le 2\left(\offc_{i_j^\star} - \offc_{i_{j-1}^\star}\right)=2\left(\onc_{i_j^\star} - \onc_{i_{j-1}^\star}\right) = 2\cdot \ON_{j-1}$. 
	
	Consider that for every $j$ such that $0\le j \le m$, the cost of the $\Partial$ solution bought in iteration $i_j^\star$ is at most $3\gamma \cdot \offc_{i_j^\star}$. Thus, the total cost of $\Partial$ solutions bought from iteration $i+1$ onward is at most
	\[6\gamma \cdot \sum_{j=0}^{m-1} \ON_{j} \le 6\gamma \cdot \sum_{j=0}^{m} \ON_{j}. \] 
    Which completes the proof of the claim. 
    
    Combining this claim with the definition of $\ON_j$ for every $j$, we have that the total cost of the algorithm from iteration $i+1$ onwards is at most $(6\gamma + 1)\sum_{j=0}^m \ON_j$. Now, we claim that $\sum_{j=0}^{m-2} \ON_j \le \ON_{m-1}$; this would imply that $\sum_{j=0}^{m} \ON_j \le 3\max\{ \ON_{m-1},\ON_m \}$, which completes the proof.
    
    To show that $\sum_{j=0}^{m-2} \ON_j \le \ON_{m-1}$, observe that for every $j\in[m-1]$, it holds that $\ON_{j}=\offc_{i_{j+1}^{\star}}-\offc_{i_{j}^{\star}}$.
	Summing over $j$ from $0$ to $m-2$ yields a telescopic sum:
	\[
	\sum_{j=0}^{m-2}\ON_{j}=\offc_{i_{m-1}^{\star}}-\offc_{i_{0}^{\star}}\le\offc_{i_{m-1}^{\star}}.
	\]
	
	Now, using the same property for $j=m-1$, we get $\ON_{m-1}=\offc_{i_{m}^{\star}}-\offc_{i_{m-1}^{\star}}\ge2\offc_{i_{m-1}^{\star}}-\offc_{i_{m-1}^{\star}}=\offc_{i_{m-1}^{\star}}$. 
	Thus, we get
	\[
	\ON_{m-1}\ge\sum_{j=0}^{m-2}\ON_{j}. 
	\]
	
\end{proof}

\section{Online Priority Steiner Forest with Predictions}
\label{sec:PSF}

A generalization of the Steiner forest problem is priority Steiner forest. 
In this problem, each edge $e \in E$ has an associated integer priority $\pry{e} \in [\npry]$. 
In addition, every request $\req\in \reqs$ contains, in addition to the pair of terminals to connect, a priority demand which we also denote by $\pry{r} \in [\npry]$. 
A feasible solution for this problem is such that for every request $\req$ of terminals $s,t$ there exists a path connecting $s,t$ comprising only edges of priority at least $\pry{r}$. Throughout the discussion of the priority Steiner forest problem, we partition the input $\reqs = \pr{\req_1, \cdots, \req_{\pa{\reqs}}}$ into $\pc{\reqs_j}_{j=1}^{\npry}$, where $\reqs_j$ is the subset of requests of priority $j$.

To define errors in priority Steiner forest, we must define the matching cost between two requests $\req_1$ and $\req_2$. If $\pry{\req_1} \neq \pry{\req_2}$, we say that the matching cost is infinite. In other words, we do not allow matching requests of different priorities. If both $\req_1$ and $\req_2$ have the same priority $j$, then we define the matching cost between $\req_1$ and $\req_2$ to be the distance between the two pairs, as previously defined for Steiner forest. However, this distance is not computed over the original graph $G$, but rather its subgraph $G_j$, which only contains the edges in $G$ which have priority at least $j$. Note the intuition for this definition -- there could be some very low priority class in which all points are very close, but this does not ameliorate the error for higher-priority requests. 

Divide $\reqs$ into $\reqs_1, \cdots, \reqs_{\npry}$ according to priority class, and divide $\preds$ into $\preds_1,\cdots, \preds_{\npry}$ in a similar way. Consider the following algorithm for priority Steiner forest with predictions:
\begin{enumerate}
    \item Run ${\npry}$ parallel instances of the algorithm for Steiner forest with predictions, such that the $j$'th instance runs on the graph $G_j$ (here again $G_j$ is the subgraph of edges of priority at least $j$), and is given the prediction $\preds_j$.
    
    \item Upon the release of a request of priority $j$, send it to the $j$'th instance.
    
    \item At any point during the sequence, maintain the union of the solutions held by all ${\npry}$ instances.
\end{enumerate}

\begin{thm}
    \label{thm:SF_PSF_Competitiveness}
    For every error $\err = (\Delta, D)$, the algorithm above for priority Steiner forest with predictions yields the following:
    \[ 
        \ALG \le O\pr{\npry\log \pr{\frac{\min\pc{\pa{\reqs},\Delta}}{\npry}+2}}\cdot \OPT + O(1)\cdot D
    \]
\end{thm}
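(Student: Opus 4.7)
The plan is to analyze the $\npry$ parallel instances separately by invoking \cref{thm:RSF_Competitiveness} on each, and then combine the per-class guarantees via concavity of the logarithm.

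First I would set up the per-class decomposition. Since the matching cost between requests of different priorities is infinite, any $(\Delta,D)$-error for $\reqs,\preds$ decomposes into per-class errors $(\Delta_j, D_j)$ for the pair $(\reqs_j, \preds_j)$ inside $G_j$, with $\sum_j \Delta_j = \Delta$ and $\sum_j D_j = D$. Moreover, let $\OPT_j$ denote the optimal Steiner forest in $G_j$ for the requests $\reqs_j$; since the feasible solution $\OPT$ for priority Steiner forest restricted to edges of priority $\ge j$ is a valid Steiner forest in $G_j$ for $\reqs_j$, we have $\OPT_j \le \OPT$ for every $j$.

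Next I would invoke \cref{thm:RSF_Competitiveness} on each of the $\npry$ instances. For the $j$'th instance (running on $G_j$ with requests $\reqs_j$, prediction $\preds_j$, and error $(\Delta_j, D_j)$), the algorithm incurs cost at most
\[
    \ALG_j \le O\bigl(\log(\min\{|\reqs_j|, \Delta_j\}+2)\bigr) \cdot \OPT_j + O(1) \cdot D_j.
\]
The total cost of our algorithm is at most $\sum_j \ALG_j$, and using $\OPT_j \le \OPT$ and $\sum_j D_j = D$, this is bounded by
\[
    \ALG \le \OPT \cdot \sum_{j=1}^{\npry} O\bigl(\log(\min\{|\reqs_j|,\Delta_j\}+2)\bigr) + O(1)\cdot D.
\]

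The final step is a concavity argument on the sum of logarithms. Since $x \mapsto \log(x+2)$ is concave, Jensen's inequality gives
\[
    \sum_{j=1}^{\npry} \log\bigl(\min\{|\reqs_j|,\Delta_j\}+2\bigr) \le \npry \cdot \log\!\left(\frac{\sum_{j} \min\{|\reqs_j|,\Delta_j\}}{\npry}+2\right).
\]
Using $\sum_j \min\{|\reqs_j|,\Delta_j\} \le \min\{\sum_j |\reqs_j|, \sum_j \Delta_j\} = \min\{|\reqs|, \Delta\}$, the right-hand side is at most $\npry \log\!\left(\frac{\min\{|\reqs|,\Delta\}}{\npry}+2\right)$, yielding the claimed bound.

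The main conceptual hurdle is the per-class error decomposition: one must argue carefully that forbidding cross-priority matchings in the definition of matching cost means that any global $(\Delta,D)$-error restricts to per-class errors whose parameters sum to $(\Delta, D)$. The rest is essentially a clean application of \cref{thm:RSF_Competitiveness} followed by Jensen; no new algorithmic ideas are needed beyond running the Steiner forest algorithm independently per priority level.
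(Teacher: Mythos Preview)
Your proposal is correct and follows essentially the same approach as the paper: decompose the error per priority class (using that cross-priority matchings have infinite cost), apply \cref{thm:RSF_Competitiveness} to each of the $\npry$ parallel Steiner forest instances, bound each $\OPT_j$ by $\OPT$, sum, and finish with Jensen's inequality on the logarithms. The paper's proof is line-for-line the same, though it states the $\OPT_j \le \OPT$ fact without the justification you spell out.
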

\begin{proof}
    Observe that for every priority class $j$, restricting the error $\err$ to $\reqs_j$ and $\preds_j$ induces a Steiner forest error $\err_j$ for the $j$'th instance, such that $\err_j = (\Delta_j, D_j)$ where $\sum_{j=1}^{\npry} \Delta_j = \Delta$ and $\sum_{j=1}^{\npry} D_j = D$. Now, one can apply Theorem \ref{thm:RSF_Competitiveness} for every $j$, and obtain $\ALG_j \le O\pr{\log \pr{\min\pc{\pa{\reqs_j},\Delta_j}+2}}\cdot \OPT_j + O(1)\cdot D_j$, where $\ALG_j $ is the cost of the Steiner forest with predictions algorithm on the requests of $\reqs_j$ (given prediction $\preds_j$), and $\OPT_j$ is the cost of the optimal solution for that same Steiner forest instance.
    
    Summing over $j$, we obtain 
    \begin{align*}
        \ALG 
        &\le \sum_{j=1}^{\npry} \ALG_j \\
        &\le \sum_{j=1}^{\npry} O\pr{\log \pr{\min\pc{\pa{\reqs_j},\Delta_j}+2}}\cdot \OPT_j + \sum_{j=1}^{\npry} O(1)\cdot D_j \\
        &\le \sum_{j=1}^{\npry} O\pr{\log \pr{\min\pc{\pa{\reqs_j},\Delta_j}+2}}\cdot \OPT +  O(1)\cdot D \\
        &\le O\pr{\npry \log \pr{\frac{\sum_{j=1}^{\npry} \pr{\min\pc{\pa{\reqs_j},\Delta_j}+2}}{\npry} } } \cdot \OPT + O(1)\cdot D \\
        &\le O\pr{\npry \log \pr{\frac{ \min\pc{\sum_{j=1}^{\npry} \pa{\reqs_j}, \sum_{j=1}^{\npry} \Delta_j}}{\npry} +2 } } \cdot \OPT + O(1)\cdot D \\
    \end{align*}
    \begin{align*}
        &= O\pr{\npry \log \pr{\frac{ \min\pc{\pa{\reqs}, \Delta}}{\npry} +2 } } \cdot \OPT + O(1)\cdot D \\
    \end{align*}
    
    where the third inequality is due to the fact that $\OPT_j \le \OPT$ for every $j$ and that $\sum_{j=1}^b D_j = D$, and the fourth inequality is due to Jensen's inequality and the concavity of $\log$. This completes the proof of the theorem.
\end{proof}

\section{Online Facility Location Algorithm~\texorpdfstring{\cite{Fotakis07}}{} is {\sc not} Subset Competitive}
\label{sec:SubsetCompetitivenessExample}

In this section, we show that the algorithm of Fotakis for online facility location \cite{Fotakis07} is not subset-competitive. This algorithm is denoted by $\OFL$, and described in Section \ref{sec:FL}. This deterministic algorithm is $O(\log |R|)$-competitive. To show that it is not subset competitive, we show that for every constant $c$ there exists an input $R$ and a subsequence $R'\subseteq R$ such that $\ON(R')>c\cdot \log |R'|$.

We consider the lower bound of $\Omega(\frac{\log |R|}{\log \log |R|})$ of \cite{Fotakis08}. The lower bound consists of an input on which every algorithm is $\Omega(\frac{\log k}{\log \log k})$-competitive, where $k$ is such that $|R| \le k$. 

The input is composed of a full binary tree of height $m=\frac{\log k}{\log \log k}$, where the cost to open a facility at each node is $f$. The weights on the edges of the tree are equal for each level, and decrease exponentially by a factor of $m$ in each level, where the edges going out of the root have a weight of $\frac{f}{m}$. The request sequence is composed of $m+1$ phases, where the $i$'th phase releases $m^{i-1}$ requests on a node $v_i$ at depth $i-1$. The first phase releases a single request on the root $v_0$; For each subsequent phase $i$, the node $v_i$ is the child node of $v_{i-1}$ such that the algorithm did not open a facility in the subtree of that child node (if the algorithm opened a facility in the subtrees of both child nodes, $v_{i}$ is an arbitrary child node of $v_{i-1}$).

Now, consider the operation of the online algorithm of \cite{Fotakis07} on this input. Upon the single request of the  first phase, the algorithm opens a facility at the root $v_1$, and connects that request. The potential of all points at the end of this phase is zero. 

At each subsequent phase $i$, the algorithm connects all requests except for the last request to the facility at $v_{i-1}$. At the last request of the phase, the potential of $v_i$ at this point is $m^{i-1} \cdot \frac{f}{m^{i-1}} = f$, and thus a facility is opened at $v_i$. Note that the total connection cost of this phase is $(m^{i-1}-1) \cdot \frac{f}{m^{i-1}}$, so at most $f$.

Overall, the algorithm opened facilities at $v_1,\cdots, v_{m+1}$, at an opening cost which is more than its connection cost, and is thus at least half its total cost. From the guarantee of the lower bound, the opening cost of the algorithm is thus $\Omega({\frac{\log k}{\log \log k}})\cdot \OPT$, where $k$ is at least the number of requests, and thus $k\ge m^m$.

Now, consider the subsequence $R'$ which consists of the last request in each phase. There are $m+1$ such requests, but the entire facility opening cost of the algorithm is incurred on those requests. Thus, $\OFL(R') = \Omega(\frac{m\log m}{\log m})=\Omega(m)=\Omega(|R'|)$. In particular, $\OFL(R')=\omega(\log |R'|)$, proving that the algorithm is not subset-competitive.

\section{Lower Bounds for Online Algorithms with Predictions}
\label{sec:lowerbound}

In this section, we show additional lower bounds for the Steiner tree, Steiner forest and facility location in the online algorithms with predictions setting.

\subsection{Steiner Tree}

We start with a lower bound for Steiner tree, which also applies to Steiner forest.

Suppose we are given the parameters $n,k,\Delta_1,\Delta_2$. We define a $(n,k,\Delta_1,\Delta_2)$-adversary to be an adversary for Steiner tree with predictions that gives a prediction $\preds$ and a request sequence $\reqs$ such that $|\preds|=n$, $|\reqs|=k$, $|\preds \backslash \reqs|= \Delta_1$, $|\reqs\backslash \preds| = \Delta_2$. The following theorem yields the desired lower bound.

\begin{thm}
	\label{thm:LB_SteinerTree}
	For every $n,k,\Delta_1,\Delta_2$, there exists a $(n,k,\Delta_1,\Delta_2)$-adversary for Steiner tree with predictions such that every randomized algorithm on this instance is $\Omega(\min (k,\Delta))$-competitive, where $\Delta = \Delta_1 + \Delta_2$.
\end{thm}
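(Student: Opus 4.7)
My plan is to embed the classical online Steiner tree lower bound of Imase and Waxman into the unpredicted portion of the input. On requests not appearing in $\preds$, no algorithm can extract useful information from the prediction, so the standard online hardness should apply verbatim and transfer to a lower bound for the prediction setting.

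Concretely, I would fix a metric graph carrying the classical hard instance on $m := \min(k,\Delta)$ points (a hierarchical tree with geometrically scaled edge weights so that any online algorithm provably suffers competitive ratio $\Omega(\min(k,\Delta))$ against the offline optimum on those $m$ points, in the form claimed). The $m$ points of this sub-instance are placed entirely outside $\preds$, so they contribute to the $\Delta_2$ unpredicted actual requests, and the adversary serves them adaptively exactly as in the classical construction. To match the $(n,k,\Delta_1,\Delta_2)$-profile, I would co-locate the remaining $k-m$ actual requests at a single cheap vertex that is also listed in $\preds$, plant the $\Delta_1$ false predictions at metrically distant cheap locations, and pad $\preds$ with further correctly-predicted requests as needed to reach size $n$. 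Both padding regions are designed so that they contribute only lower-order additive terms to $\ALG$ and $\OPT$, leaving the embedded hard sub-instance to dominate the ratio.

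For randomized algorithms, I would invoke Yao's minimax principle against a distribution over the Imase--Waxman adversary's revelation order, which can be chosen independently of $\preds$; every deterministic algorithm then suffers the claimed lower bound in expectation under this distribution, and the bound extends to all randomized algorithms.

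The main obstacle will be verifying that the prediction provides no genuine advantage on the embedded hard sub-instance: one must rule out hybrid strategies that pre-invest based on $\preds$ and then exploit that infrastructure to short-cut the hard region. I would address this by making the three regions (hard sub-instance, correctly-predicted padding, falsely-predicted padding) pairwise well-separated in the metric, so that any Steiner connection into the hard sub-instance must pay the full classical cost regardless of pre-investment. Once this metric isolation is enforced, the classical $\Omega(\min(k,\Delta))$ online lower bound transfers directly to the constructed adversary, yielding the theorem.
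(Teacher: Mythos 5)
Your approach covers only one of the two cases the paper must handle, and the omission is not a detail but a fundamental gap. You propose to place all $m = \min(k,\Delta)$ points of the hard Imase--Waxman sub-instance \emph{outside} $\preds$, so that they ``contribute to the $\Delta_2$ unpredicted actual requests.'' But $\Delta_2$ is a given parameter, and you need $m \le \Delta_2$ for this to be consistent with the required $(n,k,\Delta_1,\Delta_2)$-profile. In general $\min(k,\Delta_1+\Delta_2)$ can be much larger than $\Delta_2$ --- e.g., $\Delta_2 = 0$ and $\Delta_1$ large --- and then no actual requests may lie outside $\preds$ at all, so your construction produces nothing.

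The paper handles precisely this regime with a second adversary (Lemma~\ref{lem:LB_IntersectionAndDelta1Adversary}): the hard sub-instance is placed entirely \emph{inside} the predicted set, and the hardness comes from the $\Delta_1$ \emph{false} predictions diluting the information --- the algorithm sees all of $G_m$'s vertices as predicted but does not know which of them will actually arrive. The two adversaries are then combined by a case split on whether $\Delta_2 \ge \min(\ell,\Delta_1)$, where $\ell = \tfrac{n+k-\Delta_1-\Delta_2}{2}$ is the intersection size; in either branch one of the two bounds is at least $\tfrac{1}{2}\min(k,\Delta)$, giving the theorem. Your idea of metric separation and Yao's principle is fine and essentially what the paper does within each case, but without the second adversary the theorem does not follow. (As a minor remark, the competitive ratio in the target statement should be logarithmic, $\Omega(\log\min(k,\Delta))$, matching the Imase--Waxman bound $\Omega(\log m)$; the stated $\Omega(\min(k,\Delta))$ appears to be a typo, and your write-up inherits it.)
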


For this theorem, we use the lower bound for the online Steiner tree problem, due to Imase and Waxman \cite{ImaseW91}, which we restate for the rooted version of Steiner tree.

\begin{thm}
	\label{thm:LB_ST_ImaseWaxmanLB}
	For every $m$, there exists a graph $G_m = (V,E)$ of $m^2 + m$ nodes, a root $\rho \in V$ and an oblivious adversary $\ADV$ that releases a sequence of requests $\reqs$ such that $|\reqs|=m$ on which every algorithm is $\Omega(\log m)$ competitive.
\end{thm}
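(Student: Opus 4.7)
The plan is to construct a specific graph $G_m$ on $m^2+m$ nodes together with a distribution over request sequences (an oblivious adversary), and then invoke Yao's minimax principle: it suffices to show that every \emph{deterministic} online algorithm on $G_m$ has expected cost $\Omega(\log m)\cdot \E[\OPT]$ when the request sequence is drawn from this distribution. The classical Imase--Waxman construction achieves exactly this via a recursive ``bisection'' strategy, and I would follow that template.

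For the graph, I would take a bipartite-like structure consisting of the root $\rho$, a set $L=\{u_1,\ldots,u_m\}$ of $m$ ``cheap-access'' vertices, and a set $T$ of $m^2$ ``terminal candidates'' $\{v_{i,j}:1\le i,j\le m\}$. The weights are set so that $d(\rho,u_i)$ is small (say, $1$), each $u_i$ is close to a specific family of $m$ terminals (the $i$-th ``cluster''), and the pairwise distances between terminals in distinct clusters are $\Theta(1)$ in the induced metric. The effect is that if the $m$ requested terminals happen to all lie in a single cluster, an offline Steiner tree of cost $O(1)$ suffices; in general, a cheap offline tree of cost $O(m/\log m)$ can be shown to exist for the distribution I will choose, essentially because the adversary only ``spreads out'' requests adaptively on paper but the realized sequence remains concentrated.

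The adversarial distribution is defined recursively. After the online algorithm has committed to its current partial tree $S_t$, the adversary partitions the still-available candidate terminals into two sets of (nearly) equal size based on which side of $S_t$ they lie on, and chooses the next request uniformly from one of the two halves. Against any deterministic algorithm, this forces (in expectation) an additive cost of $\Omega(1/t)$ at the $t$-th step from $S_t$, since a constant fraction of the candidate set is always ``far'' from the committed tree. Summing over $m$ steps yields expected online cost $\Omega(\log m)$ times the expected offline optimum, which I will show is $O(1)$ by exhibiting a fixed low-cost Steiner tree that covers \emph{every} realization of the distribution in the recursion.

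The main obstacle is calibrating the edge weights of $G_m$ so that both the offline upper bound and the per-step online lower bound hold simultaneously: the metric must be ``flat enough'' that the adversary can always split the remaining candidates into two halves each requiring $\Omega(1/t)$ extra distance from $S_t$, yet ``structured enough'' that some small subgraph of cost $O(1)$ spans all candidates that could possibly be requested along any branch of the recursion. Verifying this tradeoff is the heart of the Imase--Waxman argument and is what forces the graph to have size $\Theta(m^2)$ rather than $\Theta(m)$. Once the construction is in place, Yao's principle and a straightforward expectation calculation finish the proof.
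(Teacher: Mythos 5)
There is a genuine gap in your proposal, stemming from the graph construction and from a confusion between adaptive and oblivious adversaries.

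First, the construction. The paper does not prove the Imase--Waxman lower bound from scratch; it cites it as a black box. The cited construction $I_i$ is a \emph{recursive diamond graph}: start from a single edge $(\rho, v)$, and at each recursion level replace every edge by a 4-cycle (a ``diamond''), so that at level $i$ there are $4^i$ edges, at most $4^i$ nodes, and the adversary releases $2^i$ requests. The crucial feature is that the edge weights shrink geometrically with the level (distance $\Theta(2^{-j})$ at depth $j$), giving $\Theta(\log m)$ distinct distance scales. It is exactly this geometric scale separation that makes the per-step cost $\Omega(1/t)$ against an offline optimum of $O(1)$: each level contributes $\Omega(1)$ in aggregate, and there are $\Theta(\log m)$ levels. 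Your bipartite/cluster graph with $d(\rho, u_i) \approx 1$ and inter-cluster distances $\Theta(1)$ has only \emph{constantly many} distance scales; with such a metric, each online step costs $\Theta(1)$, the online total is $\Theta(m)$, and you would need the offline optimum to be $O(m/\log m)$ --- but with $\Theta(1)$ inter-cluster distances and requests spread across clusters, the offline Steiner tree would itself cost $\Theta(m)$, yielding only an $O(1)$ gap. Your text even contradicts itself on this point (claiming offline cost ``$O(1)$'' in one sentence and ``$O(m/\log m)$'' in another). You correctly identify that calibrating the weights ``is the heart of the Imase--Waxman argument,'' but the calibration you sketch does not achieve the required scale hierarchy.

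Second, the adversary. You invoke Yao's minimax principle, which requires a fixed input distribution independent of the algorithm, but then describe an adversary that ``after the online algorithm has committed to its current partial tree $S_t$, partitions the still-available candidate terminals.'' That is an adaptive adversary. In the diamond construction the resolution is clean: the recursive binary structure is fixed in advance, and the oblivious distribution just flips an independent fair coin at each level to choose a side; a deterministic algorithm then pays $\Omega(1)$ per level in expectation because it is on the wrong side with probability $1/2$. Your proposal never makes this transition, so as written it does not produce an oblivious adversary.

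Finally, even granting the lower bound as a black box, your proposal omits what the paper's proof actually does: count nodes ($\le 4^i$) and requests ($2^i$), set $i = \lfloor \log m \rfloor$, and \emph{pad} to exactly $m^2 + m$ nodes and $m$ requests by adding copies of the root $\rho$ (which cost $0$ to serve). This padding is the only content of the paper's proof beyond the citation, and it is needed precisely so that the later adversaries in \cref{sec:lowerbound} can hit the required set sizes $(n,k,\Delta_1,\Delta_2)$ exactly. I would suggest scrapping the bespoke cluster graph, citing the recursive diamond construction directly, verifying the $4^i$/$2^i$ counts, and then doing the padding.
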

\begin{proof}
    The construction of \cite{ImaseW91} consists of an adversary $I_i$ for every $i$ such that every algorithm is $\Omega(i)$ competitive for $I_i$. The instances are built recursively -- $I_0$ consists of a single edge between the root and another node. Each $I_i$ is constructed from $I_{i-1}$ by breaking each edge into a diamond -- that is, adding two nodes and replacing the edge with $4$ edges. 
    
    Therefore, the number of edges in $I_i$ is $4^i$. The number of nodes in $I_i$ is the number of nodes in $I_{i-1}$ plus twice the number of edges in $I_{i-1}$ -- thus, at most $4^i$ for every $i \ge 1$. The number of requests released in $I_i$ is exactly $2^i$. Thus, the number of nodes is at most the square of the number of requests.
    
    Now, to obtain the desired instance, set $i = \floor{\log m}  $. The number of nodes will be $4^i \le m^2$. Pad the number of nodes to exactly $m^2 +m$ by adding copies of the root $\rho$. Pad the number of requests to exactly $m$ by requesting some of those copies of $\rho$.
\end{proof}

Fix any values of $n,k,\Delta_1,\Delta_2$. Define $\ell = \frac{n+k-\Delta_1 - \Delta_2}{2}$, the size of the intersection of $\reqs$ and $\preds$ in the adversary. Lemmas  \ref{lem:LB_Delta2Adversary} and \ref{lem:LB_IntersectionAndDelta1Adversary} imply Theorem \ref{thm:LB_SteinerTree}.

\begin{lem}
    \label{lem:LB_Delta2Adversary}
    There exists an oblivious $(n,k,\Delta_1,\Delta_2)$-adversary on which every algorithm is $\Omega(\log \Delta_2)$-competitive.
\end{lem}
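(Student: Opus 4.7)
The plan is to embed the Imase--Waxman randomized lower bound for online Steiner tree inside the \emph{unpredicted} portion of the request sequence, while making the predicted portion essentially free. Concretely, set $m = \Delta_2$ and let $G_m$ with root $\rho$ be the Imase--Waxman graph from Theorem~\ref{thm:LB_ST_ImaseWaxmanLB} on which every (possibly randomized) online algorithm incurs cost $\Omega(\log m)$ times the optimum when fed the IW oblivious adversary's sequence of $m$ requests.

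Extend $G_m$ to a graph $G$ by adjoining $\ell + \Delta_1$ new vertices $p_1,\ldots,p_\ell,q_1,\ldots,q_{\Delta_1}$, each joined to $\rho$ by a zero-cost edge (here $\ell = (n+k-\Delta_1-\Delta_2)/2$ as defined above the lemma). Define the prediction $\preds = \{p_1,\ldots,p_\ell\} \cup \{q_1,\ldots,q_{\Delta_1}\}$ (so $|\preds| = \ell + \Delta_1 = n$), and let the request sequence $\reqs$ consist of $p_1,\ldots,p_\ell$ in arbitrary order followed by the $\Delta_2$ requests produced by the IW adversary on $G_m$ (so $|\reqs| = \ell + \Delta_2 = k$). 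Then $|\preds \setminus \reqs| = \Delta_1$ and $|\reqs \setminus \preds| = \Delta_2$ as required, and the whole adversary is oblivious because the IW sequence is independent of the algorithm's random choices and the rest is deterministic.

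For the cost comparison, $\OPT$ connects each $p_i$ to $\rho$ at no cost and then buys the optimal Steiner tree for the IW requests inside $G_m$; hence $\OPT$ equals the IW subinstance optimum, call it $T^\ast$. For $\ALG$, the zero-cost attachments of the $p_i$ and $q_j$ provide only duplicate copies of $\rho$, so any edge of $G \setminus E(G_m)$ used by the algorithm can be replaced by a free traversal through $\rho$. The restriction of $\ALG$'s solution to $E(G_m)$ together with the free identifications is therefore a valid online solution for the IW subinstance on $G_m$, and Theorem~\ref{thm:LB_ST_ImaseWaxmanLB} forces its cost to be $\Omega(\log \Delta_2)\cdot T^\ast = \Omega(\log \Delta_2)\cdot \OPT$.

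The main obstacle is the reduction step in the last paragraph: one must verify that the availability of free extra leaves attached to $\rho$ cannot be exploited by the online algorithm to beat the Imase--Waxman bound on the hidden IW subinstance. This reduces to the simple observation that such leaves supply only duplicate copies of the root and cannot shortcut any path inside $G_m$. The remaining bookkeeping---checking the counts $\ell+\Delta_1=n$, $\ell+\Delta_2=k$, and that the randomized lower bound survives because the IW adversary is oblivious---is routine.
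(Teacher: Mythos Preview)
Your proposal is correct and follows essentially the same construction as the paper: both take $m=\Delta_2$, adjoin zero-cost copies of the root to $G_m$, place all $n$ predictions on those copies, issue $\ell$ of them as the first requests, and then hand the remaining $\Delta_2$ requests to the Imase--Waxman adversary. Your write-up is more explicit than the paper's about why the extra root-copies cannot help the algorithm on the IW subinstance, but the underlying idea is identical.
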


\begin{lem}
    \label{lem:LB_IntersectionAndDelta1Adversary}
    There exists an oblivious $(n,k,\Delta_1,\Delta_2)$-adversary on which every algorithm is $\Omega(\log (\min (\ell, \Delta_1)))$-competitive.
\end{lem}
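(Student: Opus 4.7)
The plan is to reduce to the online Steiner tree lower bound of \cite{ImaseW91} (Theorem~\ref{thm:LB_ST_ImaseWaxmanLB}) by embedding several copies of its $m$-request hard instance into the graph and hiding the live copy among the predicted ones. Let $m = \min(\ell,\Delta_1)$ (assume $m\ge 2$; otherwise the claim is trivial) and set $T = 1 + \lfloor \Delta_1/m\rfloor \ge 2$. The construction takes $T$ vertex-disjoint copies $G^1,\dots,G^T$ of the IW graph $G_m$, all glued at a common root $\rho$. The prediction $\preds$ consists of the $m$ IW-request vertices in every copy, and the (randomized) adversary picks a single copy $t\in[T]$ uniformly at random and then releases the IW hard sequence inside $G^t$. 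This yields $|\preds\cap\reqs| = m = \ell$ and $|\preds\setminus\reqs| = (T-1)m = \Delta_1$; any mismatch with the prescribed $(n,k,\Delta_1,\Delta_2)$ is absorbed by padding with zero-cost requests placed at copies of $\rho$, which contribute nothing to either $\ALG$ or $\OPT$.

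By Yao's principle, it suffices to lower bound the expected cost of every deterministic algorithm on this random input. Decompose such an algorithm as a pre-committed subgraph $S_0$ (fixed from $\preds$ alone) plus online augmentation inside the live copy $G^t$. By the $T$-way symmetry, the expected cost of $S_0$ useful in $G^t$ is only $c(S_0)/T$, while the remaining $(1-1/T)\,c(S_0)$ is paid but wasted. Inside $G^t$, the online augmentation cost is at least $\Omega(\log m)\cdot\OPT$ unless the pre-built subgraph $S_0\cap E(G^t)$ already substantially covers the IW sequence there, in which case $c(S_0)$ itself must be $\Omega(T\cdot\OPT)$. Balancing these two contributions yields expected algorithm cost $\Omega(\log m)\cdot\OPT$, as required.

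The main obstacle is establishing a robust version of the IW lower bound that remains valid under arbitrary pre-building: namely, pre-building $S_0\cap E(G^t)$ should not shortcut the $\Omega(\log m)$ overhead by more than $O(c(S_0\cap E(G^t)))$. This is the key technical step and follows by adapting the diamond-refinement/potential argument of \cite{ImaseW91} to charge any such shortcut against the pre-built edges that enabled it. Once this robust bound is in place, the Yao reduction above assembles into the claimed $\Omega(\log\min(\ell,\Delta_1))$ lower bound; combined with Lemma~\ref{lem:LB_Delta2Adversary} for the $\Delta_2$ regime, this gives the full lower bound stated in Theorem~\ref{thm:LB_SteinerTree}.
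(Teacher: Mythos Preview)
Your construction has a genuine gap: by placing \emph{exactly the $m$ IW-request vertices} into $\preds$ (replicated across the $T$ copies), the prediction leaks the entire terminal set to the algorithm. The Imase--Waxman lower bound relies on the algorithm not knowing which of the $\Theta(m^2)$ nodes of $G_m$ will be requested; once the algorithm knows the terminal set, the bound collapses. Concretely, the following algorithm pays only $O(\OPT)$ against your adversary: do nothing until the first request arrives; that request identifies the live copy $t$; now look up the $m$ predicted terminals in copy $t$ (these are, by your own accounting, exactly the future requests since you require $\pa{\preds\cap\reqs}=m$), compute an optimal Steiner tree on them in $G^t$, and buy it in full. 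All remaining requests are served for free. Your decomposition into ``pre-committed $S_0$ + online augmentation'' hides this, because the online augmentation phase still has access to $\preds$ and hence to the terminal set; it is not a black-box online Steiner tree algorithm on $G^t$.

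Relatedly, the ``robust IW'' statement you invoke---that pre-building $S_0\cap E(G^t)$ cannot shortcut the $\Omega(\log m)\cdot\OPT$ overhead by more than $O(c(S_0\cap E(G^t)))$---is false when the pre-building (or the online phase) is informed by the terminal set: buying the optimal tree costs $\OPT$ yet saves $\Omega(\log m)\cdot\OPT$. So the ``key technical step'' you defer cannot be carried out in the form you state.

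The paper sidesteps all of this with a simpler construction: use a \emph{single} copy of $G_m$ and let the prediction contain \emph{all} $m^2+m$ nodes of $G_m$ (padding the rest with copies of the root). Predicting every vertex of $G_m$ conveys no information beyond the graph itself, so the Imase--Waxman adversary applies verbatim and yields $\Omega(\log m)$. Taking $m=\min(\ell,\sqrt{n-\ell})=\min(\ell,\sqrt{\Delta_1})$ makes the sizes $\pa{\preds}=n$, $\pa{\reqs}=k$, $\pa{\preds\setminus\reqs}=\Delta_1$, $\pa{\reqs\setminus\preds}=\Delta_2$ all come out correctly, and $\log m \ge \tfrac{1}{2}\log(\min(\ell,\Delta_1))$.
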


\begin{proof}[Proof of Theorem \ref{thm:LB_SteinerTree}]
    If $\Delta_2 \ge \min(\ell,\Delta_1)$, then $2\Delta_2 \ge \min(\ell +\Delta_2,\Delta_1+\Delta_2) = \min (k,\Delta)$. Thus, the adversary of Lemma \ref{lem:LB_Delta2Adversary} yields the desired result.
    
    Otherwise, $\Delta_2 \le \min(\ell, \Delta_1)$. Using $\Delta_2 \le \ell$, we have that $\ell \ge \frac{\ell + \Delta_2}{2}=\frac{k}{2}$. Using $\Delta_2 \le \Delta_1$, we have that $\Delta_1 \ge \frac{\Delta_1+\Delta_2}{2}=\frac{\Delta}{2}$. Thus, $\min(\ell,\Delta_1) \ge \frac{\min(k,\Delta)}{2} $, which completes the proof of the theorem.
\end{proof}

It remains to prove Lemmas \ref{lem:LB_Delta2Adversary} and \ref{lem:LB_IntersectionAndDelta1Adversary}.

\begin{proof}[Proof of Lemma \ref{lem:LB_Delta2Adversary}]
    We construct the adversary in the following way. The graph on which the requests are released is the graph $G_m$ of Theorem \ref{thm:LB_ST_ImaseWaxmanLB} for $m=\Delta_2$, with multiple copies of the root $\rho$. The $n$ predictions of $\preds$ all arrive on copies of $\rho$. The input sequence $\reqs$ starts with $\ell$ requests on predicted copies of $\rho$ (which have a connection cost of $0$ for the optimal solution). We now use the Imase-Waxman adversary on $G_m$ to release the remaining $k-\ell = \Delta_2$ requests, on which any algorithm is $\Omega(\log \Delta_2)$ competitive.
\end{proof}

\begin{proof}[Proof of Lemma \ref{lem:LB_IntersectionAndDelta1Adversary}]
    We construct the adversary in the following way. The graph on which the requests are released is the graph $G_m$ for $m = \min (\ell, \sqrt{n-\ell})$, with multiple copies of the root node $\rho$. The prediction $\preds$ consists of all $m^2+m$ nodes of $G_m$ (note that $m^2+m \le n$), as well as $n-(m^2+m)$ copies of $\rho$.
    
    The adversary starts the request sequence by requesting $\Delta_1$ unpredicted copies of $\rho$. Serving these requests costs $0$ to the optimal solution. It then requests $\ell - m$ \emph{predicted} copies of $\rho$ (note that there exist $\ell -m$ predicted copies of $\rho$, since $n-(m^2+m)\le \ell -m$.
    
    It remains to release $m$ predicted requests. The adversary now calls the Imase-Waxman adversary of Theorem \ref{thm:LB_SteinerTree} to release these $m$ requests on $G_m$. 
    
    Every algorithm is $\Omega(\log(m))$ competitive on this adversary. But $m = \min(\ell,\sqrt{n-\ell}) \ge \sqrt{\min(\ell,n-\ell)}=\sqrt{\min(\ell,\Delta_1)}$, and thus $\log m \ge \frac{\log (\min(\ell,\Delta_1))}{2}$. This completes the proof.
\end{proof}

\subsection{Facility Location}
We now give a similar result to the previous lower bound for facility location. 

\begin{thm}
    \label{thm:LB_FacilityLocation}
    For every $n,k,\Delta_1,\Delta_2$, there exists a $(n,k,\Delta_1,\Delta_2)$-adversary for facility location with predictions such that every randomized algorithm on this instance is $\Omega\left(\frac{\log \min (k,\Delta)}{\log\log \min (k,\Delta)}\right)$-competitive, where $\Delta = \Delta_1 + \Delta_2$.
\end{thm}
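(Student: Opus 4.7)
The plan is to mirror the proof of \cref{thm:LB_SteinerTree} almost verbatim, replacing the Imase--Waxman construction for online Steiner tree by the Fotakis lower bound \cite{Fotakis08} for online facility location. Specifically, the first step is to establish an analog of \cref{thm:LB_ST_ImaseWaxmanLB}: for every $m$, there exists a facility location instance on $\mathrm{poly}(m)$ vertices together with an oblivious adversary releasing exactly $m$ client requests such that every randomized algorithm is $\Omega(\log m / \log\log m)$-competitive. This follows directly from repackaging the Fotakis construction (the binary tree with $m+1$ levels, geometrically decreasing edge weights, and facility cost $f$ at every node, on which the $\npry$-phase adversary forces $\Omega(\log m / \log\log m)$ competitiveness). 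I pad the vertex set with copies of a designated ``cheap'' vertex $\rho$ whose facility opening cost is $0$; requests at $\rho$ are free for the optimal solution and play the same role as the copies of the root $\rho$ in the Steiner tree construction.

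Next, I state and prove the two analogs of \cref{lem:LB_Delta2Adversary} and \cref{lem:LB_IntersectionAndDelta1Adversary}. Set $\ell = (n+k-\Delta_1-\Delta_2)/2$. The analog of \cref{lem:LB_Delta2Adversary} uses a base graph that is the Fotakis instance for $m = \Delta_2$, augmented with $n$ copies of $\rho$; the prediction $\preds$ is these $n$ copies of $\rho$, the request sequence begins with $\ell$ matched requests on copies of $\rho$ (for which \opt pays nothing), and then releases the $\Delta_2$ Fotakis requests. Any algorithm is forced to pay $\Omega(\log \Delta_2/\log\log \Delta_2)\cdot \opt$ on the latter phase. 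The analog of \cref{lem:LB_IntersectionAndDelta1Adversary} uses the Fotakis instance for $m = \min(\ell, \sqrt{n-\ell})$, with $\preds$ consisting of all $\mathrm{poly}(m) \le n$ Fotakis vertices plus enough copies of $\rho$ to total $n$; the adversary first releases $\Delta_1$ requests on unpredicted copies of $\rho$, then $\ell - m$ requests on predicted copies of $\rho$ (both free for \opt), and finally the $m$ Fotakis requests placed on predicted vertices. The lower bound is $\Omega(\log m /\log\log m) \ge \Omega\bigl(\log(\min(\ell,\Delta_1))/\log\log(\min(\ell,\Delta_1))\bigr)$, since $m \ge \sqrt{\min(\ell,\Delta_1)}$.

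The theorem then follows by exactly the same case analysis as in the proof of \cref{thm:LB_SteinerTree}: if $\Delta_2 \ge \min(\ell,\Delta_1)$, then $2\Delta_2 \ge \min(k,\Delta)$ and the first adversary dominates; otherwise $\Delta_2 \le \min(\ell,\Delta_1)$ yields $\min(\ell,\Delta_1) \ge \min(k,\Delta)/2$ and the second adversary dominates. In either case, since $\log x/\log\log x$ is monotone, we obtain the claimed $\Omega(\log\min(k,\Delta)/\log\log\min(k,\Delta))$ bound.

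The only potentially delicate point, and where I expect to spend the most care, is verifying that the Fotakis construction can be embedded with the declared parameters: in particular, that the vertex count $\mathrm{poly}(m)$ of the Fotakis instance fits within the $n-\ell$ ``budget'' available for predicted non-$\rho$ vertices in the second lemma, which is exactly why we take $m = \min(\ell,\sqrt{n-\ell})$ (or, more carefully, $m = \min(\ell, (n-\ell)^{1/c})$ where $c$ is the polynomial exponent in the Fotakis vertex count). The rest is bookkeeping identical to the Steiner tree proof.
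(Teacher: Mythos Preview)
Your proposal is correct and essentially identical to the paper's proof: the paper also repackages Fotakis' lower bound into a graph $G_m$ on $m^2+m$ vertices with $m$ requests (so your polynomial exponent worry resolves with $c=2$), states the two analogs of \cref{lem:LB_Delta2Adversary} and \cref{lem:LB_IntersectionAndDelta1Adversary}, and concludes by the same case analysis. The one detail to make explicit is that the padding vertices must be placed at \emph{infinite distance} from the Fotakis instance (not merely given facility cost $0$), since otherwise a free facility at $\rho$ could interfere with the Fotakis lower bound; the paper does exactly this, whereas ``copies of the root'' in the Steiner-tree sense would not work here.
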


The following lower bound for online facility location is due to Fotakis~\cite{Fotakis08}. 

\begin{thm}
    \label{thm:LB_FL_FotakisLB}
	For every $m$, there exists a graph $G_m = (V,E)$ of $m^2 + m$ nodes and an oblivious adversary $\ADV$ that releases a sequence of requests $\reqs$ such that $|\reqs|=m$ on which every algorithm is $\Omega\left(\frac{\log m}{\log\log m} \right)$ competitive.
\end{thm}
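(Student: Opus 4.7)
The plan is to adapt the classical Fotakis-style hierarchical construction and apply Yao's minimax principle. First, I would fix $h = \lfloor \log m / \log\log m \rfloor$ and build a complete $d$-ary tree $T$ of depth $h$ with $d = h$. A vertex at depth $j$ is joined to its parent by an edge of weight $f/d^j$ (for a common facility opening cost $f$ at every vertex), and the underlying metric is the shortest-path metric on $T$. Since $\log(d^h) = h \log d = \Theta(\log m)$, the tree has $O(m)$ vertices; padding with duplicates of the root gives a graph on exactly $m^2+m$ nodes without changing distances. The adversary samples a uniformly random root-to-leaf path $v_0, v_1, \dots, v_h$ and proceeds in $h$ phases: phase $i$ releases $d^{i-1}$ requests all located at $v_{i-1}$, with light padding at $v_h$ to ensure $|\reqs|=m$. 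Since the entire sequence is determined by the random sample before play begins, this is a genuinely oblivious adversary.

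Bounding $\OPT$ is straightforward: open a single facility at $v_h$ (cost $f$) and route every request along the path toward the leaf. A phase-$i$ request at $v_{i-1}$ travels at most $\sum_{j=i}^{h} f/d^j \le 2f/d^i$, so phase $i$ contributes connection cost at most $d^{i-1}\cdot 2f/d^i = 2f/d$, giving $\OPT \le f + 2fh/d = O(f)$.

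For the lower bound on $\E\pb{\ALG}$ for an arbitrary deterministic algorithm, the key geometric observation is that any facility lying outside the subtree $T(v_{i-1})$ rooted at $v_{i-1}$ is at distance at least $f/d^{i-1}$ from $v_{i-1}$, because escaping that subtree requires crossing the depth-$(i-1)$ edge incident to $v_{i-1}$. Hence, if the algorithm has no facility inside $T(v_{i-1})$ at the start of phase $i$, the $d^{i-1}$ requests incur a connection cost of at least $d^{i-1}\cdot f/d^{i-1} = f$; otherwise, opening such a facility earlier itself cost $\ge f$. I would then apply a probabilistic averaging over the random path: conditional on $v_0,\dots,v_{i-2}$, the node $v_{i-1}$ is uniform among the $d$ children of $v_{i-2}$; if the algorithm has spent less than $fd/2$ in total so far, it holds facilities in fewer than $d/2$ of these $d$ sibling subtrees, so with probability at least $1/2$ phase $i$ contributes $\Omega(f)$ of fresh cost. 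Summing over $h$ phases --- either hitting the $\Omega(fh)$ accumulated-cost threshold at some point, or collecting $\Omega(f)$ every phase --- yields $\E\pb{\ALG} = \Omega(fh)$.

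The main obstacle will be turning the ``unpreparedness'' intuition into a clean potential/charging argument that correctly handles facilities the algorithm opens \emph{reactively} during phase $i$ itself, as well as facilities sitting at many different depths simultaneously (a facility deep inside $T(v_{i-1})$ helps all deeper phases too). I would formalize this by tracking an accumulated-cost potential that splits the algorithm's cost per phase into ``pre-investment in the correct subtree'' versus ``connection/opening paid during the phase,'' and case-splitting on whether the accumulated cost has crossed $fd/2$. Once $\E\pb{\ALG} = \Omega(hf) = \Omega(\log m / \log\log m)\cdot \OPT$ is established over the random instance, Yao's minimax principle converts this distributional bound into the claimed $\Omega(\log m/\log\log m)$ competitive lower bound for all randomized algorithms against the oblivious adversary.
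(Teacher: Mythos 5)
Your construction is a re-derivation of the Fotakis lower bound, whereas the paper's proof is deliberately terse: it simply describes the construction from \cite{Fotakis08} (a complete binary tree in which nodes at depth $b$ have $a^b$ colocated copies, $a=\log m/\log\log m$), counts nodes and requests, defers the $\Omega(\log m/\log\log m)$ analysis entirely to \cite{Fotakis08}, and then pads to the stated sizes. Your $d$-ary tree with $d=h$ and geometrically decaying edge weights is an equivalent way to package the same hierarchical idea, and your $\OPT=O(f)$ calculation is correct.

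The obstacle you flag --- handling reactively opened facilities and facilities that serve several nested subtrees at once --- is real but closes cleanly with the standard charging. Fix phase $i+1$ (requests at $v_i$) and let $E_i$ be the event that $T(v_i)$ contains no facility when phase $i+1$ begins. On $E_i$, the algorithm's cost \emph{incurred during phase $i+1$} is at least $f$: every request served from outside $T(v_i)$ costs $\ge f/d^i$ (and there are $d^i$ of them), and any facility opened inside $T(v_i)$ during the phase costs $\ge f$. These per-phase charges are disjoint, so $\ALG\ge f\cdot\sum_i \mathbb 1[E_i]$ pointwise; a facility opened at depth $>i$ only helps you by \emph{failing} to trigger $E_j$ for $j>i$, which is fine, and it was already charged once. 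For the probability, note that the algorithm's state at the start of phase $i+1$ is a deterministic function of $v_1,\dots,v_{i-1}$, while $v_i$ is drawn uniformly from the $d$ children of $v_{i-1}$ independently of that state. Hence the event ``total cost spent so far $<fd/2$'' is $(v_1,\dots,v_{i-1})$-measurable, and on that event fewer than $d/2$ of the $d$ sibling subtrees hold facilities, giving $\Pr[E_i\mid v_1,\dots,v_{i-1}]>1/2$. Therefore either $\Pr[\text{cost ever reaches }fd/2]>1/2$, in which case $\E[\ALG]\ge fd/4=\Omega(fh)$, or all of the probabilities $\Pr[\text{cost}<fd/2\text{ before phase }i+1]\ge 1/2$, in which case $\E[\sum_i\mathbb 1[E_i]]\ge h/4$ and again $\E[\ALG]=\Omega(fh)$. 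Together with $\OPT=O(f)$ and Yao's principle this completes the bound, matching what the paper imports as a black box. Two small cosmetic points: the paper pads with zero-cost facilities at infinite distance rather than root duplicates (cleaner, since extra cheap facilities only help the lower bound rather than risk helping the algorithm), and your phase count leaves roughly a $1-1/d$ fraction of the $m$ requests as padding at $v_h$, which is not ``light'' but is harmless since $\OPT$ already opens a facility there.
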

\begin{proof}
    The adversary given in \cite{Fotakis08} gives a graph which is a complete binary tree of some depth $h$. For $a=\frac{\log m}{\log \log m}$, each node of depth $b$ in the tree has $a^b$ copies. The adversary first requests the root, then requests the $a$ copies of a node of depth 1, then the $a^2$ copies of a node of depth $2$, and so on. The total number of requests released is at most $m$. The number of nodes at each depth $b$ is $2^b \cdot a^b \le (a^b)^2$, and thus the number of nodes in the graph is at most $m^2$. The analysis of \cite{Fotakis08} guarantees that every algorithm is $\Omega\left(\frac{\log m}{\log\log m} \right)$ competitive on this instance.
    
    To reach the exact desired number of nodes and requests, we pad the number of nodes to exactly  $m^2 + m$ by adding nodes of facility opening cost $0$ at infinite distance from the rest of the construction. We pad the number of requests to $m$ by requesting these nodes added in the padding.
\end{proof}

Fix any values of $n,k,\Delta_1,\Delta_2$. Define $\ell = \frac{n+k-\Delta_1 - \Delta_2}{2}$, the size of the intersection of $\reqs$ and $\preds$ in the adversary. The following two lemmas are analogous to the lemmas from the Steiner tree lower bound, and imply Theorem \ref{thm:LB_FacilityLocation} in the same way that their analogues implied Theorem \ref{thm:LB_SteinerTree}.

\begin{lem}[Analogue of Lemma \ref{lem:LB_Delta2Adversary}]
    \label{lem:LB_FL_Delta2Adversary}
    There exists an oblivious $(n,k,\Delta_1,\Delta_2)$-adversary on which every algorithm is $\Omega\left(\frac{\log \Delta_2}{\log\log \Delta_2} \right)$-competitive.
\end{lem}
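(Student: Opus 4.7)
The plan is to mirror the construction of Lemma \ref{lem:LB_Delta2Adversary}, substituting the online facility location lower bound of Theorem \ref{thm:LB_FL_FotakisLB} (Fotakis) in place of the Imase--Waxman Steiner tree lower bound. Conceptually, I want most of the prediction to be ``correct and free,'' so that the competitive ratio is driven entirely by the hard part of the input, which has exactly $\Delta_2$ requests and cannot be helped at all by the prediction.

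Concretely, I would set $m := \Delta_2$ and take the hard instance $G_m$ guaranteed by Theorem \ref{thm:LB_FL_FotakisLB}. I would augment $G_m$ with a set of ``dummy'' nodes placed at infinite distance from $G_m$ (and from each other), each having facility opening cost $0$. The prediction $\preds$ will consist of $n$ distinct dummy nodes. The actual request sequence $\reqs$ is built in two stages: first, $\ell = n - \Delta_1 = k - \Delta_2$ requests at $\ell$ of the predicted dummy nodes; second, the $\Delta_2$ requests produced by the Fotakis adversary on $G_m$, placed at non-predicted nodes of $G_m$. A quick counting check gives $\pa{\preds} = n$, $\pa{\reqs} = \ell + \Delta_2 = k$, $\pa{\reqs \setminus \preds} = \Delta_2$, and $\pa{\preds \setminus \reqs} = n - \ell = \Delta_1$, so this is a valid $(n, k, \Delta_1, \Delta_2)$-adversary.

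For the analysis, I would first upper-bound the offline optimum by $\OPT_{G_m}$: since $\OPT$ can open a free facility at each dummy node where one of the first $\ell$ requests arrives, the predicted prefix contributes zero, leaving only the Fotakis sub-instance whose optimum equals $\OPT_{G_m}$. On the other side, any randomized online algorithm must incur cost $\Omega\pr{\frac{\log \Delta_2}{\log\log \Delta_2}} \cdot \OPT_{G_m}$ on the Fotakis subsequence, which will yield the claimed lower bound.

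The main obstacle I anticipate is rigorously arguing that the cheap predicted prefix gives the online algorithm no useful head start on the Fotakis subsequence. I would handle this via the infinite-distance/zero-opening-cost padding: any facility opened at a dummy node is infinitely far from every Fotakis request and is therefore useless for serving one, and conversely the Fotakis requests cannot usefully connect to dummy facilities. Thus, restricted to $G_m$, the algorithm faces exactly the oblivious hard distribution of Theorem \ref{thm:LB_FL_FotakisLB}, and the Fotakis lower bound transfers verbatim. This is the same padding device already used inside the proof of Theorem \ref{thm:LB_FL_FotakisLB} to meet its exact node and request counts, so there is no compatibility issue and the construction goes through as stated.
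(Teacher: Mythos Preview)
Your proposal is correct and follows essentially the same construction as the paper: take the Fotakis hard instance $G_m$ with $m=\Delta_2$, pad with zero-cost facilities at infinite distance, put all $n$ predictions on these padding nodes, request $\ell$ of the predicted padding nodes first, and then hand the remaining $\Delta_2$ requests to the Fotakis adversary on $G_m$. The paper's proof is terser but identical in content; your added justification for why the padded prefix cannot help the algorithm on $G_m$ is exactly the intended reasoning.
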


\begin{proof}
    The proof is similar to the proof of Lemma \ref{lem:LB_Delta2Adversary}. The instance comprises the graph $G_m$ as defined in Theorem \ref{thm:LB_FL_FotakisLB}, for $m=\Delta_2$, and from multiple nodes of facility cost $0$ and distance $\infty$ from every other node. The $n$ predictions are all of nodes at distance $\infty$, and so are the first $\ell$ requests. The following $\Delta_2$ requests are given to the adversary of $G_m$.
\end{proof}

\begin{lem}[Analogue of Lemma \ref{lem:LB_IntersectionAndDelta1Adversary}]
    \label{lem:LB_FL_IntersectionAndDelta1Adversary}
    There exists an oblivious $(n,k,\Delta_1,\Delta_2)$-adversary on which every algorithm is $\Omega\left(\frac{\log \min (\ell, \Delta_1)}{\log \log \min (\ell, \Delta_1)}\right)$-competitive.
\end{lem}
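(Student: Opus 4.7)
The plan is to mirror the Steiner-tree construction of Lemma \ref{lem:LB_IntersectionAndDelta1Adversary}, replacing the Imase--Waxman lower bound with the Fotakis facility-location lower bound of Theorem \ref{thm:LB_FL_FotakisLB}. Since both lower bounds use graphs on $m^2 + m$ nodes with $m$ adversarial requests, the combinatorial bookkeeping transfers verbatim, and only the adversarial ``core'' of the sequence needs to be swapped out.

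Concretely, set $m := \min(\ell, \sqrt{n-\ell}) = \min(\ell, \sqrt{\Delta_1})$, using $n - \ell = \Delta_1$. I build the instance on the Fotakis graph $G_m$ of Theorem \ref{thm:LB_FL_FotakisLB}, augmented by auxiliary ``dummy'' nodes of facility-opening cost $0$ placed at distance $\infty$ from each other and from $G_m$ (any sufficiently large finite distance suffices). The prediction $\preds$ consists of all $m^2 + m$ nodes of $G_m$ together with $n - (m^2+m)$ distinct dummies; since $m^2 \leq n - \ell$, we have $n - (m^2+m) \geq \ell - m$, so enough predicted dummies are available for the next step.

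The oblivious adversary then releases requests in three phases: (1) $\Delta_2$ requests on fresh, unpredicted dummies; (2) $\ell - m$ requests on predicted dummies; (3) $m$ requests on $G_m$ chosen by the Fotakis adversary of Theorem \ref{thm:LB_FL_FotakisLB}. A direct count gives $|\reqs| = \Delta_2 + \ell = k$, $|\preds| = n$, $|\reqs \setminus \preds| = \Delta_2$, and $|\preds \setminus \reqs| = n - \ell = \Delta_1$, so the construction is a valid $(n,k,\Delta_1,\Delta_2)$-adversary. Because dummies cost $0$ to open and are pairwise at infinite distance (and at infinite distance from $G_m$), the optimum opens a dedicated free facility at every dummy used in phases (1)--(2) and pays $\OPT_{G_m}$ on phase (3), so $\OPT = \OPT_{G_m}$. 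The same infinite-distance choice ensures that no facility opened during phases (1)--(2) can reduce the algorithm's cost on $G_m$, so any randomized algorithm pays at least $\Omega\!\left(\frac{\log m}{\log \log m}\right) \cdot \OPT_{G_m}$ on phase (3) alone, by Theorem \ref{thm:LB_FL_FotakisLB}.

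Finally, $m = \min(\ell, \sqrt{\Delta_1}) \geq \sqrt{\min(\ell, \Delta_1)}$, so $\log m \geq \tfrac{1}{2}\log\min(\ell, \Delta_1)$ and $\log\log m \leq \log\log\min(\ell, \Delta_1)$, giving the claimed $\Omega(\log\min(\ell,\Delta_1)/\log\log\min(\ell,\Delta_1))$ ratio. The only delicate step, and the main obstacle, is the use of zero-cost, infinitely-far dummies: they are what guarantee that the padding phases contribute nothing to $\OPT$ and cannot be exploited by the online algorithm to amortize away its cost on the Fotakis core, so the full strength of the $\Omega(\log m/\log\log m)$ lower bound survives the embedding into an instance with prescribed error parameters.
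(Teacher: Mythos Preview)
Your proof is correct and follows exactly the approach the paper intends: mirror the Steiner-tree construction of Lemma~\ref{lem:LB_IntersectionAndDelta1Adversary}, swapping the Imase--Waxman core for the Fotakis lower bound of Theorem~\ref{thm:LB_FL_FotakisLB} and replacing root-copies by zero-cost dummies at infinite distance. Your write-up is in fact more careful than the paper's one-line sketch (and it silently fixes a typo there, where $\Delta_1$ should read $\Delta_2$ for the unpredicted padding requests).
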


\begin{proof}[Proof of Lemma \ref{lem:LB_FL_IntersectionAndDelta1Adversary}]
    The proof is similar to the proof of Lemma \ref{lem:LB_IntersectionAndDelta1Adversary}, with the only difference being as in the proof of Lemma \ref{lem:LB_Delta2Adversary} -- we use nodes at distance infinity with facility cost $0$ to pad predictions and requests, in lieu of the copies of the root used for the Steiner tree problem.
\end{proof}

\section{Lower Bound for Online Metric Matching with Predictions}
\label{sec:MMLB}
In this section, we consider the online metric matching problem, and show that in contrast to the network design problems in this paper, it has no tolerance for errors. 
Specifically, we show that while the offline problem can be solved optimally, the online problem with predictions performs as badly as the online problem without predictions, even in the presence of \emph{only 2 prediction errors}.

In the online metric matching problem, one is given a metric space $M = (X,d)$, as well as a set of $k$ ``blue'' points in $M$. 
One after the other, ``red'' points are released, and the algorithm must match each arriving red point to an unmatched blue point, at a cost which is the distance in $M$ between the two points. 
In this online version, after the algorithm matches two points, that matching cannot be undone.

The best known lower bounds for this problem, both in the randomized and deterministic settings, are both on a metric induced by an unweighted star graph with $k+1$ spokes (each edge has a cost of $1$), where the blue points are on $k$ spokes of the graph.
The first red point is released on the remaining spoke, which is matched to a blue point by the algorithm.

For a deterministic algorithm, the lower bound proceeds as follows: the adversary releases a red point on the blue point to which the algorithm has matched the previous red point, and repeats this for $k-1$ iterations. 
In the end of the instance, there remains only a single blue point $v$ on which no red point has been released.
The cost of the algorithm for this sequence is $2k$, as the cost of each matching is $2$.
The optimal solution, on the other hand, has a cost of $2$ - it matches the first red request to $v$ at a cost of $2$, and matches each remaining red point to its colocated blue point.
This example thus shows a lower bound of $\Omega(k)$ on the competitive ratio of a deterministic algorithm in this setting.

Now, consider the online algorithm augmented with a prediction for the red points, where the prediction predicts a single red point on every blue point. Clearly, this prediction does not provide any information to the algorithm, and does not improve its performance on the input. However, it has $(2,0)$ error - one unpredicted red point arrives, and one predicted red point does not arrive (as the remaining points arrive exactly, the matching cost is $0$). This example is shown in Figure \ref{fig:MMLB_Figure}.

\begin{figure}
    \begin{center}
    \includegraphics{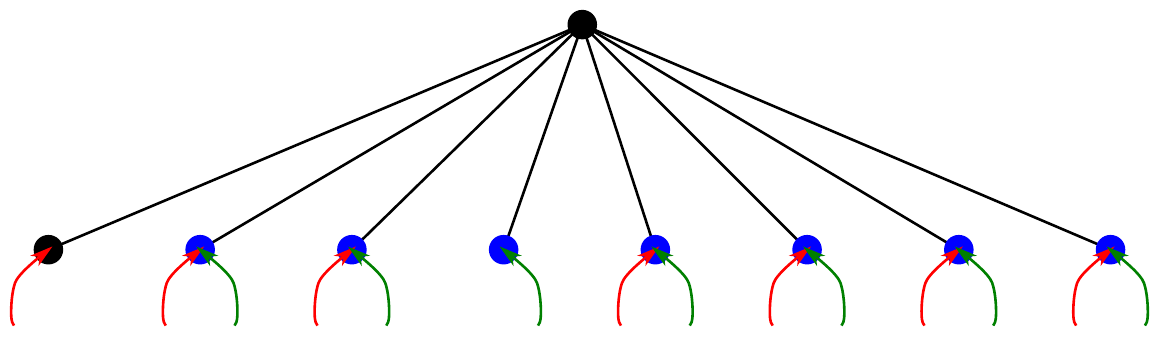}
    \end{center}
    \caption{\footnotesize This figure visualizes the lower bound for online metric matching, which applies also for metric matching with predictions. The figure shows an unweighted star graph, where all but one of its spokes have blue points. The green arrows are the predicted red points, while the red arrows are the actual red points in the input.}
    \label{fig:MMLB_Figure}
\end{figure}

In the randomized setting, the outcome is similar. In the randomized setting, the adversary is oblivious, releasing a red point on a random blue point, chosen uniformly from the blue points on which no red point has been released. A standard survivor-game analysis yields $\Omega(\log k)$ competitiveness on this instance; as before, predicting all the blue points as input does not help competitiveness, but has an error of $(2,0)$.

\end{document}